\newtheorem{theorem}{Theorem}
\newtheorem{lemma}[theorem]{Lemma}
\newtheorem{definition}{Definition}
\newtheorem{corollary}[theorem]{Corollary}
\newenvironment{proof}{\noindent{\bf Proof:}}
\newcommand{\remove}[1]{}
\newcommand{\CH}{onCH}
\newcommand{\con}{CH}
\begin{document}

\title{A Distributed Algorithm for Gathering Many Fat Mobile Robots\\ in the Plane}
\date{}
\author{
Chrysovalandis Agathangelou~~~~~~ Chryssis Georgiou~~~~~~ Marios Mavronicolas\\
{\normalsize Department of Computer Science}\\
{\normalsize University of Cyprus}\\
{\normalsize CY-1678 Nicosia, Cyprus}
}
\maketitle

\begin{abstract}
In this work we consider the problem of gathering autonomous robots
in the plane. In particular, we consider non-transparent unit-disc robots
(i.e., fat) in an asynchronous setting. Vision is the only mean of coordination.
Using a state-machine representation we formulate the gathering problem and
develop a distributed algorithm that solves the problem for any number of robots.

The main idea behind our algorithm is for the robots to reach a configuration in which all
the following hold: (a) The robots' centers form a convex hull in which all robots are on the convex,
(b) Each robot can see all other robots, and (c) The configuration is connected, that is, every robot touches another robot and all robots together form a connected formation.
We show that starting from any initial configuration, the robots, making only local decisions and coordinate
by vision, eventually reach such a configuration and terminate, yielding a solution to the gathering problem. 
\end{abstract}

%%%%%%%%%%%%%%%%%%%%%%%%%%%%%%%% INTRO %%%%%%%%%%%%%%%%%%%%%%%%%%%%%%%%%%%%%%%%%%%%%

\section{Introduction}
\paragraph{Motivation and Prior Work:}
There is an increasing number of applications that could benefit from having a team of autonomous robots
cooperate and complete various tasks in a self-organizing manner. These tasks could, for example, require the robots 
to work in dangerous and harsh environments 
(e.g., for space, underwater or military purposes) or require high accuracy or speed (e.g., in nanotechnology, scientific computing) or be
of research interest (e.g., artificial intelligence). It is usually desirable for the robots to be as simple and cheap as possible 
and have limited computing power, in order to be able to produce them fast in large numbers.
%These robots can be used in space missions, or in military. Furthermore many scientific fields are interested in such robots,
% like: robotics, artificial intelligence, nanotechnology and more recently, distributed algorithms.

A fundamental problem that has drawn much attention in the recent years is {\em Gathering}~\cite{Japan99,Peleg04,Flocchini05,Survey1,Survey2}, where a team of autonomous mobile robots must gather to a certain point or region or form a certain formation (e.g., geometric shapes) in the plane. The problem has been studied under various modeling assumptions. For example, asynchronous, semi-synchronous and synchronous settings have been considered. Robots may have a common coordination system or have common sense of direction and 
use compasses to navigate in the plane, may have stable memory or be history oblivious. A modeling feature that is shared by all prior works considering the asynchronous setting is that robots are equipped with a vision device (e.g., a camera) and operate under the so called Look-Compute-Move cycle. Within a cycle, a robot takes a snapshot of the plane (Look), based on the snapshot it might perform some local computations (Compute), and it might decide to move to some point in the plane (Move). The range of visibility of robots may be limited or unlimited.
We refer the reader to Surveys~\cite{Survey1,Survey2} and the recent monograph~\cite{robotsbook} for a more comprehensive exposition of works on the gathering problem.  

Up until the work of Czyzowicz et al.~\cite{Pelc09}, the gathering problem was considered only under the assumption that  
robots are a point on the plane and are transparent, that is, a robot can see through another
robot. These assumptions do not reflect reality, as real robots are not points, but instead they have
a physical extent. Furthermore, robots are not transparent, that is, robots may block the view of other robots or robots may collide.
Having this in mind, Czyzowicz et al.~\cite{Pelc09} initiated the study of the gathering problem with {\em fat} robots, that
is, non-transparent unit-disks (disks of radius of 1 unit). As fat robots cannot occupy the same space on the plane, the gathering
problem can no longer require robots to gather at the same point. Instead, per~\cite{Pelc09}, {\em gathering fat robots means 
forming a configuration for which the union of all discs representing them is connected}. The model considered in~\cite{Pelc09}
is the following: Robots operate in Look-Compute-Move cycles, they are identical, anonymous, history oblivious, non-transparent, and fat.
They do not share a common coordination system and the only means of coordination is by vision; robots have unlimited visibly, unless
their view is obstructed by another robot. An asynchronous setting is considered, modeled by an adaptive adversary that can stop 
any robot for finite time, control the ``speed'' of any robot or cause robots moving into intersecting trajectories to collide. Under this
model, the authors present solutions for the gathering problem for {\em three} and {\em four} robots. The proposed solutions consider exhaustively all 
possible classes of configurations in which robots may be found; a different gathering strategy corresponds to each possible case.
As this approach cannot be generalized for larger number of robots (the cases grow exponentially as the number of robots increases), 
the authors left open the question of whether it is possible to solve gathering
%under their model 
for any collection of $n\ge 5$ fat robots.%\vspace{-1em} 

\paragraph{Our Contribution:} In this work we provide a positive answer to the above question. In particular, we consider the model 
of~\cite{Pelc09} with the additional assumption of {\em chirality}~\cite{robotsbook} (i.e., robots agree on the orientation of the axes of their local 
coordinator system) and present a distributed algorithm for the gathering problem for {\em any} number $n$ of fat robots.     

The key feature of our solution is to bring the robots in a configuration of {\em full visibility} where all robots
can see all other robots. However, the power of the adversary and the fact that robots are non-transparent makes this task challenging. 
We have overcome this challenge by requiring robots to aim in forming a convex hull in which all robots 
will be on the convex. During the computation, robots that are on the convex do not move and robots that are inside the convex hull try to move on the convex hull. However, if robots that are on the convex hull realize that they obstruct other robots that are also on the convex hull from seeing each other, then they move with direction outside of the convex hull in such a way, that they no longer cause any obstruction of the view of the other robots. Furthermore, if robots on the convex hull realize that there is no ``enough space'' for robots that are inside the convex to be placed on the convex, they move to a direction outside of the convex to make space. All these  are further complicated 
%{\bf Chrysovalandis: limitations are further complicating the problem} 
due to asynchrony, as robots may have very different local views of the system. We show that eventually the convex hull will ``expand'' in such a way that all robots will be on the convex hull and no three robots will be on the same line. This leads to a configuration that all robots have full visibility. This is the first conceptual phase of the algorithm.

In the second conceptual phase of the algorithm, once all robots have full visibility and they are aware of this, robots start to converge in such a manner that full visibility is not lost (again, asynchrony complicates issues). To do so, robots exploit their knowledge of $n$ and of the common unit of distance (since all robots are unit-disks, this gives them ``for free'' a common measure of distance~\cite{Pelc09}). We show that eventually all robots form a connected configuration and terminate, yielding a solution to the gathering problem.  

The key in successfully proving the correctness of the algorithm is the formulation of the model and the problem
using a state-machine representation. This enabled us to employ typical techniques for proving safety and liveness properties and argue on the state transitions of the robots, which against asynchrony it can be a very challenging task.%\vspace{-1em}

\paragraph{Other Works Considering Fat Robots.} After the work in~\cite{Pelc09} some attempts were made
in solving the gathering problem with $n\geq 5$ fat robots. However these works consider different models
than the one considered in~\cite{Pelc09}. In~\cite{Chaudhuri10} it is assumed that the fat robots are transparent. This assumption makes the problem significantly easier, as robots have full visibility at all times. As discussed above, having the robots reach a configuration with full visibility was the main challenge in our work. In~\cite{Friedhelm2011} fat robots are non-transparent and have limited visibility, but a synchronous setting is considered. Furthermore, the gathering point is predefined (given as an input to the robots) and the goal is for the robots to gather in an area as close as possible to this point. Two versions of the problem are studied: in continuous space and time, and in discrete space (essentially $\mathbb{Z}^2$) and time. In the continuous case a randomized solution is proposed. In the discrete case the proposed solutions require additional modeling assumptions such as unique robot ids, or direct communication between robots, or randomization. 
The work in~\cite{Dutta12} also considers fat robots with limited visibility, but in an asynchronous setting. In contrast with the model we consider, robots have a common coordination system, that is they agree both on a common origin and axes (called Consistent Compass in~\cite{robotsbook}. The objective of the robots is to gather to a circle with center C, which is given as an input along with the radius of the circle. The common coordination system and the predefined knowledge of the circle to be formed enables the use of geometric techniques that cannot be used in our model. In~\cite{Bolla12} they consider fat robots with limited visibility and without a common coordination system, but in a synchronous setting. Furthermore 
the correctness of the proposed algorithm is not proven analytically, but rather demonstrated via simulations.

%%%%%%%%%%%%%%%%%%%%%%%%%%%%%%%%% MODEL %%%%%%%%%%%%%%%%%%%%%%%%%%%%%%%%%%%%%%%%%%%%%%

\section{Model and Definitions}

Our model of computation is a formalization of the one presented in~\cite{Pelc09} (with the additional assumption of chilarity);
our formalism follows the one from~\cite{Attiya04}. 
%{\bf Chrysovalandis: Additionally we assume that robots have chirality (i.e. when robots agree on the orientation of the axes). In our model we assume that robots can distinguish left from right. However, this is different
%than having a common coordinates system or having a compass.}

\paragraph{Robots:}
We assume $n$ asynchronous fault-free robots that can move in straight lines on 
the (infinite) plane. The robots are {\em fat}~\cite{Pelc09}: they are closed 
unit discs. They are identical, anonymous and indistinguishable. They do not 
have access to any global coordination system, but we assume that the robots agree on the
orientation of the axes (i.e., per~\cite{robotsbook} they have chilarity\footnote{In other words, we assume that they have
a common understanding of what is left or right. Note this is a weaker assumption than having
a common coordinate system or having a compass~\cite{robotsbook}.}). Robots are equipped with a 
360-degree-angle  vision devise (e.g., camera) that enables the robots to 
take snapshots of the plane. The vision devise can capture any point of the plane 
(has unlimited range) provided there is no obstacle (e.g., another robot). We assume that robots know $n$.

\paragraph{Geometric configuration:} A {\em geometric configuration} is a vector
${\cal G} = (c_1, c_2,\ldots,c_n)$ where each $c_i$ represents the center of the position
of robot $r_i$ on the plane. Informally speaking, a configuration can be viewed 
as a snapshot of the robots on the plane. Note that the fact that robots are fat 
prohibits the formation of a configuration in which any two robots share more than 
a point (on the perimeter of their unit discs) in the plane.  

We say that a geometric configuration ${\cal G}$ is {\em connected}, if between any 
two points of any two robots there exists a polygonal line each of whose points belongs to some robot. Informally, a configuration is connected
if every robot touches another robot (i.e., their circles representing the robots are tangent)
and all robots together form a connected formation. 

\paragraph{Visibility and fully visible configuration:} We say that point $p$ in the 
plain is {\em visible} by a robot $r_i$ (or equivalently, $r_i$ can see $p$) if there 
exists a point $p_i$ in the circle bounding robot $r_i$ such that the straight 
segment $(p_i,p)$ does not contain any point 
of any other robot. From this it follows that a robot $r_i$ can see another robot $r_j$ 
if there exists at least one point on the bounding circle of $r_j$ that is visible by $r_i$. 

Given a geometric configuration ${\cal G}$, if a robot $r_i$ can see all other robots,
then we say that robot $r_i$ has {\em full visibility} in ${\cal G}$. If all robots have full visibility in ${\cal G}$, then we say that configuration ${\cal G}$ is {\em fully visible}. 
%(Note that if a robot $r_i$ can see another robot $r_j$ then $r_i$ can compute the 
%center of $r_j$ (since $r_i$ can see a non-zero arc of the bounding cycle of $r_j$).)
%If $V_i = {\cal G}$ (robot $r_i$ can see all other robots) then we say that robot 

\paragraph{Robots' states:}
Formally, each robot $r_i$ is modeled as a (possibly infinite) state machine 
with state set $S_i$; $i$ is the index of robot $r_i$ (used only for reference purposes).
Each set $S_i$ contains five states: {\bf Wait}, {\bf Look}, {\bf Compute}, {\bf Move}, 
and {\bf Terminate}. Initially each robot is in state {\bf Wait}. State {\bf Terminate}
is a terminating state: once a robot reaches this state it does not take any further
steps. We now describe each state:
\begin{itemize}
\item 
In state {\bf Wait}, robot $r_i$ is idling.
In addition, the robot has no memory of the steps occurred prior entering this state
(that is, every time a robot gets into state {\bf Wait}, it looses any recollection
of past steps -- robots are history oblivious). 
\item 
In state {\bf Look}, robot $r_i$ takes a snapshot of the plane and identifies the robots 
that are visible to it. We denote by $V_i$ the set of the centers of the robots that are visible to robot $r_i$ when it takes a snapshot in configuration ${\cal G}$. That is, $V_i \subseteq {\cal G}$ is the {\em local view} of robot $r_i$ in configuration ${\cal G}$. Note that this view does not change in subsequent configurations unless the robot takes a new snapshot. In a nutshell, in this state, the robot takes as an input a configuration ${\cal G}$ and outputs the local view $V_i\subseteq {\cal G}$.  
\item 
In state {\bf Compute}, robot $r_i$ runs a local algorithm, call it $A_i$, that takes
as an input the local view $V_i$ (that is, the output of the previous state {\bf Look})
and outputs a point $p$ in the plane. This point is specified from $V_i$, hence 
we will write $p = A_i(V_i)$. If $A_i$ outputs the special point $\bot$, then the robot's
state changes into state {\bf Terminate}. Otherwise it changes into state {\bf Move}
(intuitively, in this case $p$ is the point that the center of the robot will move to).
Note that it is possible for $p=c_i$, that is, the robot might decide not to move.
\item
In state {\bf Move}, robot $r_i$ starting from its current position, called {\em start} point, moves on a straight line towards point $A_i(V_i)$
(as calculated in state {\bf Compute}). We call $A_i(V_i)$ the {\em target} point of
$r_i$. If during the 
motion the robot touches some other robot (i.e., the circles representing these robots become
tangent) it stops and the robot's state changes into state {\bf Wait}. As we discuss next, the adversary may also stop a robot at any point before reaching its target point. Again, in this case, the robot's state changes into state {\bf Wait}.
If the robot finds no obstacles or it is not stopped by the adversary then it eventually reaches its target point (its center is placed on $A_i(V_i)$) and its state changes into state {\bf Wait}.
\end{itemize}

\paragraph{State configuration:} A {\em state configuration} is a 
vector ${\cal S} = (s_1,s_2,\ldots,s_n)$ where each $s_i$ represents the state of robot
$r_i$. An {\em initial} state configuration is a configuration ${\cal S}$ in which
each $s_i$ is an initial state of robot $r_i$ (that is, $\forall i\in [1,n],~s_i = {\bf Wait}$).
Similarly, a {\em terminal} state configuration is a configuration ${\cal S}$ in which
each $s_i$ is a terminating state of robot $r_i$ (that is, $\forall i\in [1,n],~s_i = {\bf Terminate}$).

\paragraph{Robot configuration:} 
A {\em robot configuration} is a vector ${\cal R} = (\langle s_1,c_1\rangle,\ldots,\langle s_n,c_n\rangle)$ where each pair $\langle s_i,c_i\rangle$ represents the state of robot $r_i$
and the position of its center on the plane. (Informally, a robot configuration 
is the combination of a geometric configuration with the corresponding state configuration.)  

\paragraph{Adversary and events:}
We model asynchrony as events caused by an online and omniscient adversary.
The adversary can control the speeds of the robots, it can stop moving robots, 
and it may cause moving robots to collide, provided that their trajectories 
have an intersection point. 
Specifically, we consider the following events (state transitions):
\begin{itemize}
\item [] {\em Look}$(r_i)$: This event causes robot $r_i$ that is in state 
{\bf Wait} to get into state {\bf Look}.
\item [] {\em Compute}$(r_i)$: This event causes robot $r_i$ that is in state
{\bf Look} to get into state {\bf Compute}.
\item [] {\em Done}$(r_i)$: This event causes robot $r_i$ that is in state
{\bf Compute} and its local algorithm $A_i$ has returned the special point $\bot$, to 
get into the terminating state {\bf Terminate}.
\item [] {\em Move}$(r_i)$: This event causes robot $r_i$ that is in state 
{\bf Compute} and its local algorithm $A_i$ has returned a point other than $\bot$, 
to get into state {\bf Move}.
\item [] {\em Stop}$(r_i)$: This event causes robot $r_i$ that is in state
{\bf Move} to get into state {\bf Wait}. Robot $r_i$ is stopped at 
some point in the straight segment between its start point and its target 
point $A_i(V_i)$ (under a constraint
discussed next).
\item [] {\em Collide}$(R)$: This event causes a subset of the robots $R$ that are in 
state {\bf Move} and their trajectories have an intersecting point to collide
(i.e., their circles representing the robots become tangent). Note that $2\leq|R|\leq n$
(two or more robots could collide between them but only one collusion occurs per a {\em Collide} event).
Also, other robots that are in state {\bf Move} could be stopped (without colluding
with other robots). All affected robots are now in state {\bf Wait}.
\item [] {\em Arrive}$(r_i)$: This event causes robot $r_i$ that is in 
state {\bf Move} to arrive at its target point and change its
state into {\bf Wait}.      
\end{itemize}
Note that events {\em Look}$(r_i)$, {\em Move}$(r_i)$, {\em Stop}$(r_i)$ and 
{\em Arrive}$(r_i)$ may also cause robots (other than $r_i$) that are in state {\bf Move} 
to remain in that state, but on a different position on the plane (along
their trajectories, and closer to their destination).

Figure~\ref{fig:cycle} depicts a cycle of the state transitions of a 
robot $r_i$; it is understood that for event collide($R$), $r_i\in R$.

\begin{figure}[t]
\setlength{\unitlength}{3947sp}%
\begingroup\makeatletter\ifx\SetFigFont\undefined%
\gdef\SetFigFont#1#2#3#4#5{%
  \reset@font\fontsize{#1}{#2pt}%
  \fontfamily{#3}\fontseries{#4}\fontshape{#5}%
  \selectfont}%
\fi\endgroup%
\centering
\begin{picture}(3924,1909)(215,-1292)
%  METADATA <id>32</id> 
\thinlines
{\color[rgb]{0,0,0}\put(501, 14){\vector( 0, 1){0}}
\put(2174, 14){\oval(3346,2594)[bl]}
\put(2174, 26){\oval(4694,2618)[br]}
}%
%  METADATA <id>4</id> 
{\color[rgb]{0,0,0}\thinlines
\put(515,295){\circle{584}}
}%
%  METADATA <id>8</id> 
{\color[rgb]{0,0,0}\put(1620,310){\circle{584}}
}%
%  METADATA <id>10</id> 
{\color[rgb]{0,0,0}\put(3172,311){\oval(1000,562)}
}%
%%{\color[rgb]{0,0,0}\put(2688,307)%%{\circle{584}}
%%}%
%  METADATA <id>16</id> 
{\color[rgb]{0,0,0}\put(4500,318){\circle{584}}
}%
%  METADATA <id>22</id> 
{\color[rgb]{0,0,0}\thinlines
\put(3165,-676){\oval(1000,562)}
}%
%%{\color[rgb]{0,0,0}\put(2689,-720)%%{\circle{584}}
%%}%
%  METADATA <id>24</id> 
{\color[rgb]{0,0,0}\put(3165,-680){\oval(1102,626)}
}%
%%{\color[rgb]{0,0,0}\put(2685,-727)%%{\circle{706}}
%%}%
%  METADATA <id>26</id> 
{\color[rgb]{0,0,0}\put(814,313){\vector( 1, 0){500}}
}%
%  METADATA <id>29</id> 
{\color[rgb]{0,0,0}\put(1939,313){\vector( 1, 0){700}}
}%
%  METADATA <id>30</id> 
{\color[rgb]{0,0,0}\put(3680,326){\vector( 1, 0){525}}
}%
%%{\color[rgb]{0,0,0}\put(3027,326){\vector( %%1, 0){500}}
%%}%
%  METADATA <id>31</id> 
{\color[rgb]{0,0,0}\put(3189, 26){\vector( 0,-1){382}}
}%
%  METADATA <id>5</id> 
\put(310,232){\makebox(0,0)[lb]{\smash{{\SetFigFont{11}{14.4}{\rmdefault}{\mddefault}{\updefault}{\color[rgb]{0,0,0}{\bf Wait}}%
}}}}
%  METADATA <id>9</id> 
\put(1425,244){\makebox(0,0)[lb]{\smash{{\SetFigFont{11}{14.4}{\rmdefault}{\mddefault}{\updefault}{\color[rgb]{0,0,0}{\bf Look}}%
}}}}
%  METADATA <id>11</id> 
\put(2740,269){\makebox(0,0)[lb]{\smash{{\SetFigFont{11}{14.4}{\rmdefault}{\mddefault}{\updefault}{\color[rgb]{0,0,0}{\bf Compute}}%
}}}}
%  METADATA <id>17</id> 
\put(4274,255){\makebox(0,0)[lb]{\smash{{\SetFigFont{11}{14.4}{\rmdefault}{\mddefault}{\updefault}{\color[rgb]{0,0,0}{\bf Move}}%
}}}}
%  METADATA <id>23</id> 
\put(2700,-783){\makebox(0,0)[lb]{\smash{{\SetFigFont{11}{14.4}{\rmdefault}{\mddefault}{\updefault}{\color[rgb]{0,0,0}{\bf Terminate}}%
}}}}
%  METADATA <id>34</id> 
\put(1936,389){\makebox(0,0)[lb]{\smash{{\SetFigFont{10}{14.4}{\rmdefault}{\mddefault}{\updefault}{\color[rgb]{0,0,0}{\em Compute}}%
}}}}
\put(876,389){\makebox(0,0)[lb]{\smash{{\SetFigFont{10}{14.4}{\rmdefault}{\mddefault}{\updefault}{\color[rgb]{0,0,0}{\em Look}}%
}}}}
\put(3736,389){\makebox(0,0)[lb]{\smash{{\SetFigFont{10}{14.4}{\rmdefault}{\mddefault}{\updefault}{\color[rgb]{0,0,0}{\em Move}}%
}}}}
\put(1770,-1230){\makebox(0,0)[lb]{\smash{{\SetFigFont{10}{14.4}{\rmdefault}{\mddefault}{\updefault}{\color[rgb]{0,0,0}{\em Arrive,~Collide,~Stop}}%
}}}}
%  METADATA <id>35</id> 
\put(3300,-211){\makebox(0,0)[lb]{\smash{{\SetFigFont{10}{14.4}{\rmdefault}{\mddefault}{\updefault}{\color[rgb]{0,0,0}{\em Done}}%
}}}}
\end{picture}%
\caption{A cycle of the state transitions of robot $r_i$.}
\label{fig:cycle}
\end{figure}

\paragraph{Execution:} An {\em execution fragment} is an alternating sequence of
robot configurations and events. Formally, an execution fragment $\alpha$
is a  (finite or infinite) sequence of 
${\cal R}_0,e_1,{\cal R}_1,e_2,\ldots$, where each ${\cal R}_k$ is a robot 
configuration and each $e_k$ is an event. If $\alpha$ 
is finite, then it ends in a configuration. An {\em execution} is 
an execution fragment where ${\cal R}_0$ is an initial configuration.

\paragraph{Liveness conditions:} 
%Since the system is asynchronous, to guarantee termination (and hence eventual solution to %the problem under consideration) 
We impose the following liveness conditions (they are basically  restrictions on the adversary):

\begin{enumerate}
\item In an infinite execution, each robot %is allowed to perform 
may take infinitely 
many steps. 
%that is, we consider {\em admissible} executions. (This implies that
%robots do not fail.)
\item %The adversary must allow each robot, 
During a {\bf Move} event, each robot traverses 
at least a distance $\delta>0$ unless its target point is closer than $\delta$. 
Formally, each robot $r_i$ traverses at least a distance 
$\min\{dist_i(start,target),\delta\}$, where $dist_i(start,target)$ denotes the
distance between the start and target points of robot $r_i$. Parameter $\delta$ is 
not known to the robots (or to their local algorithms). 
\end{enumerate}
 
%We will be referring to the adversary under these constraints as {\em $\delta$-adversary}.
%Note that 

\paragraph{Gathering:} We now state the problem we consider in this work: 

\begin{definition}[Gathering problem]
In any execution, there is a connected, fully visible, terminal robot configuration.
\end{definition}

%%%%%%%%%%%%%%%%%%%%%%%%%%%%%%%%%% Geometric Functions %%%%%%%%%%%%%%%%%%%%%%%%%%%%%%  

\section{Geometric Functions}
In this section we present a collection of functions that
perform geometric calculations. These functions are used
by the robots' local algorithm as shown in the next section.
In this section we present these functions in a general manner,
with reference to the centers of unit discs on the plane (that is,
not necessarily for robots).
After the presentation of each function, we give some
insight on how this function is used by the robots' local algorithm.    

\subsection{Function {\tt On-Convex-Hull}}
\label{subsec:FOnCH}

\sloppy{We denote by %$\langle c_1,c_2,\ldots,c_m \rangle_{ch}$ 
$CH(c_1,c_2,\ldots,c_m)$ the {\em convex hull}
formed by points $c_1,c_2,\ldots,c_m$, and by $\CH(c_1,c_2,\ldots,c_m)\subseteq \{c_1,c2,\ldots,c_m\}$
the set of points that are {\em on} the convex hull.} 
Then, function {\tt On-Convex-Hull} solves the following algorithmic
problem:\\

\hrule\vspace{3pt}
\noindent {\sc On Convex Hull}\vspace{3pt}
\hrule\vspace{5pt}
\noindent {\bf Input:} A set of $m$ points $c_1,c_2,\ldots,c_m$ and an additional 
point $c$.\vspace{2pt}\\
{\bf Output:} {\sc yes} if $c\in \CH(c_1,c_2,\ldots,c_m)$ %$\langle c_1,c_2,\ldots,c_m\rangle_{ch}$, 
otherwise {\sc no}.\vspace{3pt}
\hrule\vspace{1em}

\sloppy{Function {\tt On-Convex-Hull} involves the computation of the convex hull
formed by points $c_1,c_2,\ldots,c_m$ and a check whether $c$ is one of the 
points on the convex hull.} The function returns, besides {\sc yes} or {\sc no},
also set $\CH(c_1,c_2,\ldots,c_m)$. 
%\subseteq \{c_1,c2,\ldots,c_m\}$ that are {\em on} $CH(c_1,c_2,\ldots,c_m)$. 
%$\langle c_1,c_2,\ldots,c_m\rangle_{ch}$. 
This function can easily be implemented using, for example, Graham's Convex Hull Algorithm~\cite{Graham72}. 

\paragraph{Insight:} This function is called by a robot $r$ with center $c$. The
$m\leq n$ points are the centers of the robots that robot $r$ can see (its local view)
in the current configuration. The robot wishes to check whether its center belongs 
on the convex hull formed by the centers of the robots in its local view. 
In the case of full visibility ($m=n$) robot $r$ can check whether itself as well as 
all other robots are on the convex hull. If this is the case, then by the definition 
of full visibility and of the convex hull, all robots can potentially see all other robots.

\subsection{Function {\tt Move-to-Point}}

Function {\tt Move-to-Point} solves the following algorithmic problem:\\

\hrule\vspace{3pt}
\noindent {\sc Move to Point}\vspace{3pt}
\hrule\vspace{5pt}
\noindent {\bf Input:} Two points $c_1$ and $c_2$ and a positive integer $m$.\vspace{2pt}\\
{\bf Output:} Point $\mu$ defined as follows: Consider the straight
segment $\overline{c_1c_2}$ and let $\overline{pc_2}$ be the straight segment
which is vertical to $\overline{c_1c_2}$ and $p$ is on the perimeter of the
unit disc with center $c_2$ and with direction inside of the convex hull. Next consider the point $c$ on segment
$\overline{pc_2}$ which has distance $\frac{1}{2m} - \epsilon $ from $c_2$. Then point $\mu$ is the
intersection of the straight segment $\overline{c_1c}$ and the perimeter of the
unit disc with center $c_2$. See Figure~\ref{fig:movetopoint} 
for an example.\vspace{3pt}
\hrule\vspace{1em}

\begin{figure}[t]
\centering
\includegraphics[width=3in]{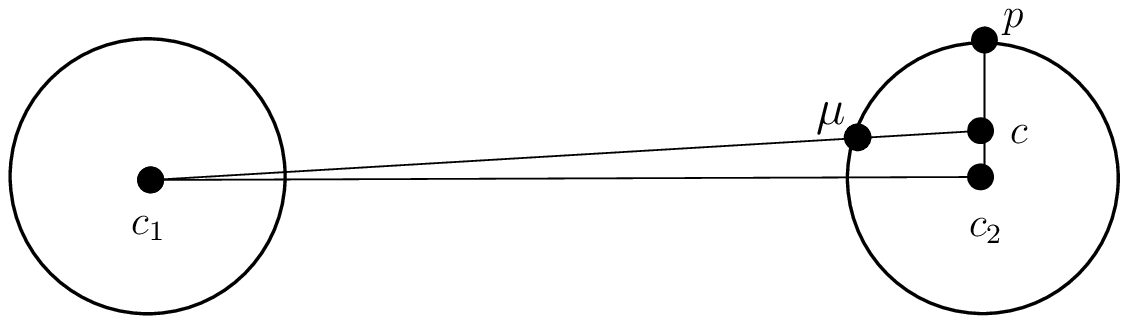}
\caption{Example of a point $\mu$.}
\label{fig:movetopoint}
\end{figure}

It is not difficult to see that Function {\tt Move-to-Point} involves simple
geometric calculations.    

\paragraph{Insight:} This function is called by robot $r$ with center $c_1$.
Point $c_2$ corresponds to the center of a robot that $r$ wants to touch
(i.e., the unit discs representing the robots become tangent). For this purpose
robot $r$ must move towards the other robot in such a way that their circles
become tangent at point $\mu$. As we will see later, this function is called with $m=n$, $n$ being 
the number of robots in the system. Intuitively, distance $\frac{1}{2n} - \epsilon$ is used to
aid robot $r$ to remain visible by other robots (that is, it will not
be hidden by the robot with center $c_2$.)

\subsection{Function {\tt Find-Points}}

Function {\tt Find-Points} solves the following algorithmic problem:\\

\hrule\vspace{3pt}
\noindent {\sc Find Points}\vspace{3pt}
\hrule\vspace{5pt}
\noindent {\bf Input:} Given a Convex Hull, let points $c_1,c_2,\ldots,c_m$ be the points that
are on the convex hull out of the total $n$ points.\vspace{2pt}\\
{\bf Output:} A set of $k<m$ points $p_1,\ldots,p_k$ that a unit disc with center $p_i,~1\leq i\leq k$,
can be placed on the convex hull without causing the convex hull to change. (It is possible
that $k=0$.)\vspace{3pt}
\hrule\vspace{.7em}
 
We now give the details of  function {\tt Find-Points}.\vspace{.5em}

\hrule\vspace{3pt}
\noindent Function: {\tt Find-Points}\vspace{3pt}
\hrule\vspace{5pt}
\noindent {\bf Set} $Points = \emptyset$;\\
Consider the points on the convex hull with a clockwise ordering;\\
{\bf For} each pair $c_l,~c_r$ of neighboring points on the convex hull {\bf do}:\vspace{-.7em}
\begin{itemize}
		\item [] {\bf If} the length of line $\overline{c_lc_r}$ is greater than or equal to 2, 
		{\bf 		then}\vspace{-.6em}
		\begin{itemize}
				\item [] Let $\mu$ be the center of $\overline{c_lc_r}$;\vspace{-.3em}
				\item [] Draw a vertical line on $\overline{c_lc_r}$, that starts from point 
				$\mu$ and moves outside the convex, until distance $\frac{1}{n}$ from $\mu$. Let $p$ be the ending
				point;\vspace{-.3em}
				\item [] Let $c_{l-1}$ be the left neighbor of $c_l$ and $c_{r+1}$ be the right neighbor 
				of $c_r$ {\small\em (mods are omitted for simplicity of notation)};\vspace{-.3em}
				\item [] Consider the straight segment that starts from $c_{l-1}$ and goes through $c_l$
				and the straight segment that starts from $c_{r+i}$ and goes through $c_r$. 
				Let $t$ be the point where the two segments intersect;\vspace{-.3em}  
				\item [] Consider the unit disc formed with center $p$;\vspace{-.3em}
				\begin{itemize}
						\item [] {\bf If} {\em no} point of this unit disc is above {\bf or} p has distance $\frac{1}{n}$ or more from
						 the line segments $\overline{c_{l-1}t}$ and $\overline{tc_{r+1}}$, {\bf then} 
						 $Points = Points \cup \{p\}$; (See Figure~\ref{fig:findpoints} 
						 for an example.)\vspace{-.7em}
				\end{itemize}
	\end{itemize}
\end{itemize}
{\bf Return} $Points$;
\vspace{3pt}
\hrule%\vspace{1em}						 

\paragraph{Insight:} This function is called by a robot that is not on the convex hull
and wishes to see whether there is at least one point that it could move and get on 
the convex hull without causing the convex hull to change. The number of the input points, $m$ is smaller than $n$ i.e. $m<n$, because if this function is called, it means that at least one robot is not on the convex hull.
Given a line segment 
$\overline{c_lc_r}$ of length at least 2, a simple solution would be for the robot to
move in the middle of this line; however, that would cause robots $r_l$ and $r_r$ not
to be visible to each other, which is another property we wish to have (all robots
on the convex hull must be able to see each other). Therefore, we check whether the robot
could be placed at some vertical distance away from the middle, so that $r_l$ and $r_r$
can still see other, but at the same time the tangents on the convex hull are not affected
(in which case it would cause the convex hull to change).  

\begin{lemma}
\label{FindPointsLemma}
If a unit disk is placed on $\CH(V_i)$ with center a point that was returned by Function {\tt FindPoints}, it will not cause $\CH(V_i)$ to change.
\end{lemma}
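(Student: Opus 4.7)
The plan is to show that any point $p$ returned by Find-Points lies inside a certain triangular wedge bounded by the extensions of the two convex hull edges adjacent to the segment $\overline{c_lc_r}$, and then to invoke a standard geometric fact: adding a point inside such a wedge preserves every existing hull vertex (only $p$ itself becomes a new hull vertex, inserted between $c_l$ and $c_r$).

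First I would fix notation. Let $c_l, c_r$ be the pair of neighboring convex hull points for which $p$ was produced, and let $c_{l-1}, c_{r+1}$ be their respective neighbors. Let $t$ be the intersection of the extensions of edges $\overline{c_{l-1}c_l}$ and $\overline{c_{r+1}c_r}$ beyond $c_l$ and $c_r$. Define the closed triangular wedge $T$ with vertices $c_l, c_r, t$; this is exactly the region outside the original hull bounded on its sides by the extended adjacent edges and on its base by $\overline{c_lc_r}$. The well-known geometric fact I will cite is: if a new point $q$ is added to the point set with $q\in T$, then every original hull vertex of $V_i$ remains a hull vertex of $V_i\cup\{q\}$, since for each such vertex the supporting line witnessing extremality is untouched by any point lying inside the original hull union $T$.

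Next I would verify that $p\in T$. By construction, $p$ lies on the perpendicular to $\overline{c_lc_r}$ through its midpoint $\mu$, at distance $1/n$ on the exterior side of the hull. So $p$ is on the exterior side of the base edge $\overline{c_lc_r}$. The guard that qualified $p$ for inclusion in the returned set is exactly the requirement that the unit disc centered at $p$ does not cross the extended tangent lines carrying $\overline{c_{l-1}t}$ and $\overline{tc_{r+1}}$, or, failing that, that $p$ itself is at distance at least $1/n$ from those line segments. In either case, a short calculation shows that $p$ lies on the interior side of both extensions, that is, inside $T$. (The second disjunct is what allows $p$ to be accepted in the special case where one or both of $c_{l-1}, c_{r+1}$ are close enough to $c_l, c_r$ that the disc at $p$ would otherwise protrude beyond the tangents, provided $p$'s center still lies safely inside the wedge.)

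Finally I would combine the two observations: since $p\in T$, applying the geometric fact above to $q=p$ yields that every point of $\CH(V_i)$ remains a vertex of the convex hull of $V_i\cup\{p\}$. Hence $\CH(V_i)$ does not change (apart from possibly gaining $p$ itself as a new vertex inserted between $c_l$ and $c_r$), which is the conclusion of the lemma.

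The main obstacle I anticipate is cleanly parsing the guard in Find-Points—particularly the logical structure of the ``\emph{no point of the unit disc is above} \textbf{or} \emph{$p$ has distance $\geq 1/n$ from} $\overline{c_{l-1}t},\overline{tc_{r+1}}$'' disjunction—and checking that under either disjunct the center $p$ itself (as opposed to the entire disc) provably lies in $T$. A secondary subtlety is that the degenerate case where the extensions of the two adjacent edges are parallel (so $t$ is at infinity) must be handled separately; here $T$ degenerates to an infinite strip, and the wedge argument goes through verbatim.
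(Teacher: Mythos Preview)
Your proposal is correct and considerably more substantive than the paper's own proof, which consists of the single sentence: ``The correctness follows by close investigation of the code of Function {\tt FindPoints}. If a unit disk moves to one of the points returned, it will not cause $\CH(V_i)$ to change provided that other unit disks do not move.'' In other words, the paper offers no argument at all beyond an appeal to the code; your wedge argument is precisely the geometric content that a genuine proof would have to supply.

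Your approach --- showing that $p$ lies in the triangular region $T$ bounded by $\overline{c_lc_r}$ and the extensions of the two adjacent hull edges, then invoking the standard fact that inserting a point in that region only splices a new vertex between $c_l$ and $c_r$ --- is the natural and correct way to formalize what the code is checking. The observation that the hull is computed on \emph{centers}, so that ``placing a unit disc at $p$'' amounts to adding the single point $p$ to the vertex set, is implicit in your write-up and is what makes the wedge argument sufficient.

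Your stated concern about parsing the disjunctive guard is entirely legitimate: as written in the paper, the condition ``no point of this unit disc is above \textbf{or} $p$ has distance $\frac{1}{n}$ or more from the line segments $\overline{c_{l-1}t}$ and $\overline{tc_{r+1}}$'' is ambiguous, and under a literal reading the second disjunct alone does not obviously force $p\in T$ (distance $\ge 1/n$ from a segment says nothing about which side of the supporting line $p$ is on). You are right to flag this; the paper does not resolve it either, since its proof never engages with the guard at all. For a rigorous write-up you would need to fix an intended reading (most plausibly: $p$ is on the hull side of both extended lines and additionally at distance $\ge 1/n$ from them) and proceed from there.
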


\begin{proof}
The correctness follows by close investigation of the code of  Function {\tt FindPoints}. If a unit disk moves to one of the points returned, it will not cause $\CH(V_i)$ to change provided that other unit disks do not move

\end{proof}

\begin{lemma}
\label{CHlengthLemma}
Given a convex hull, for any two adjacent unit disks with centers $c_l$ and $c_r$ on the convex hull there exists a minimum distance between $c_l$, $c_r$ for which Function {\tt Find-Points} would return a point between $c_l$, $c_r$. We refer to this distance as the {\em safe} distance.
\end{lemma}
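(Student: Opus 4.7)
The proof hinges on analyzing the conditions under which Function \texttt{Find-Points} adds a candidate point $p$ to its output set for the specific pair $(c_l, c_r)$. Inspecting the pseudocode, the function considers this pair only when $|\overline{c_l c_r}| \ge 2$, and then adds the constructed point $p$ to $Points$ precisely when one of two geometric conditions holds: either (a) the unit disc centered at $p$ does not protrude ``above'' the extension segments $\overline{c_{l-1}t}$ and $\overline{t c_{r+1}}$, or (b) $p$ lies at perpendicular distance at least $1/n$ from each of these two segments. I would begin by parameterizing the situation by $d = |\overline{c_l c_r}|$, holding the directions of the adjacent convex-hull edges $\overline{c_{l-1}c_l}$ and $\overline{c_r c_{r+1}}$ fixed relative to the line carrying $\overline{c_l c_r}$; these directions are determined by the rest of the hull and are precisely what distinguish one pair from another.

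Next, I would establish a monotonicity property: as $d$ increases, the intersection point $t$ of the two extension rays recedes from the midpoint $\mu$, and the outward ``V-shape'' bounded by these extension segments widens around the outward perpendicular through $\mu$. Since $p$ always sits at distance exactly $1/n$ from $\mu$ along this outward perpendicular and the candidate unit disc has fixed radius $1$, for all sufficiently large $d$ the disc lies entirely inside the V, so condition (a) is satisfied and $p$ is returned. This limiting argument shows that the set of admissible distances is nonempty.

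To conclude existence of the safe distance, I would show that the set of $d \ge 2$ satisfying (a) or (b) is upward closed: once the disc fits inside the V, any further increase in $d$ only enlarges the V and preserves the property; likewise the perpendicular-distance quantity in (b) grows (or at least does not decrease) with $d$ since the extension segments rotate outward. Hence the set of admissible $d$ admits an infimum, and that infimum is the desired safe distance. The main obstacle is the disjunction in condition (ii): depending on the angles that the edges $\overline{c_{l-1}c_l}$ and $\overline{c_r c_{r+1}}$ make with $\overline{c_l c_r}$, either (a) or (b) may be the binding constraint at the threshold, and one must verify that the transition between the two regimes is handled smoothly and that once either condition holds it continues to hold as $d$ grows. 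A careful but elementary trigonometric computation, expressing the location of $t$ and the distance from $p$ to the extension segments as explicit functions of $d$ and the two edge angles, yields the threshold value explicitly and completes the argument.
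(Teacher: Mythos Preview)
Your argument is correct but takes a different route from the paper. The paper's proof is a direct trigonometric computation: it fixes the angle that the extension ray through $c_r$ makes with the edge $\overline{\mu c_r}$ (denoted $\widehat{pr\mu}$), imposes the constraint that $p$ sit at perpendicular distance exactly $1/n$ from that extension line, and solves for $d(\mu,c_r)$, obtaining the explicit formula
\[
d(\mu,c_r)=\frac{1}{n\tan(\widehat{pr\mu})}+\frac{1}{n\sin(\widehat{pr\mu})}.
\]
It then does the same on the $c_l$ side, takes the larger of the two, and doubles. There is no monotonicity or upward-closedness argument at all; the threshold is simply computed.

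Your approach instead establishes existence qualitatively: you fix the edge angles, let $d$ vary, observe that the V-region bounded by the extension rays widens monotonically around the outward normal through $\mu$, and conclude that the set of admissible $d$ is nonempty and upward closed, so an infimum exists. You then mention the explicit trigonometric computation only as a coda. What you gain is a cleaner conceptual picture (and a genuine proof that the threshold is a \emph{minimum}, i.e., that admissibility persists above it---something the paper's computation leaves implicit). What the paper gains is brevity and an explicit formula from the outset. Both arguments treat the adjacent-edge directions as fixed parameters while $d$ varies, which is the right abstraction given how the lemma is used later in the correctness proof.
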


\begin{proof}
Consider that given a number of points, a convex hull always exists. Consider four neighbor points on a convex hull, as shown on Figure~\ref{fig:findpoints}, without loss of generality. In order for a unit disk with center $p$ to be on the convex hull and not cause the current convex hull to change, the distance between $\mu$ , the middle point of $\overline{c_lc_r}$ and $p$ must be at least $\frac{1}{n}$. Note that $p$ is outside of the current convex hull. Additionally, consider $q$ the point on the line segment $\overline{pc_{r+1}}$, where a vertical line to $c_r$ starts from line segment $\overline{pc_{r+1}}$ with direction to the inside of the convex hull.Then $d(q,c_{r})$ must be equal with at least $\frac{1}{n}$, where $r$ is the point that $\overline{pc_{r+1}}$ is tangent with $\overline{\mu c_{r}}$. Angle $\widehat{pr\mu}$ is equal with angle $\widehat{c_r r q}$

We need to calculate the distance between $c_l$ and $c_r$ which will give as the safe distance. The distance between $c_l$ to $\mu$ must be equal with the distance between $c_r$ to $\mu$. We need to calculate both $d(\mu, c_r)$ and $d(\mu ,c_l)$, find the biggest and double it, in order to find the safe distance.
First we must calculate the necessary distance between $\mu$ and $c_r$.

Observe that $d(\mu, c_r)= d(\mu, r) + d(r, c_r)$. Firstly we calculate $d(\mu, r)$
We have that $tan(\widehat{pr\mu} )= \frac{\frac{1}{n}}{d(\mu, r)}$, hence 
$d(\mu, r)= \frac{1}{n \cdot tan(\widehat{pr\mu} )}$
Secondly we calculate $ d(r,c_r)$, we have that $sin(\widehat{c_r r q})= \frac{\frac{1}{n}}{d(r,c_r)}$, hence
$d(r, c_r)= \frac{1}{n \cdot sin((\widehat{c_r r q}) )} = \frac{1}{n \cdot sin((\widehat{pr \mu}) )}$
Now we are ready to calculate $d(\mu, c_r)$, it follows that $d(\mu, c_r)= \frac{1}{n \cdot tan(\widehat{pr\mu} )} + \frac{1}{n \cdot sin((\widehat{pr \mu}) )}$

This is the minimum distance that $\overline{\mu c_r}$ must be. We do the same as above with $\overline{\mu c_l}$ and choose the biggest distance between the two, double it and set it as safe distance.

\end{proof}

\begin{figure}[t]
\centering
\includegraphics[width=3.5in]{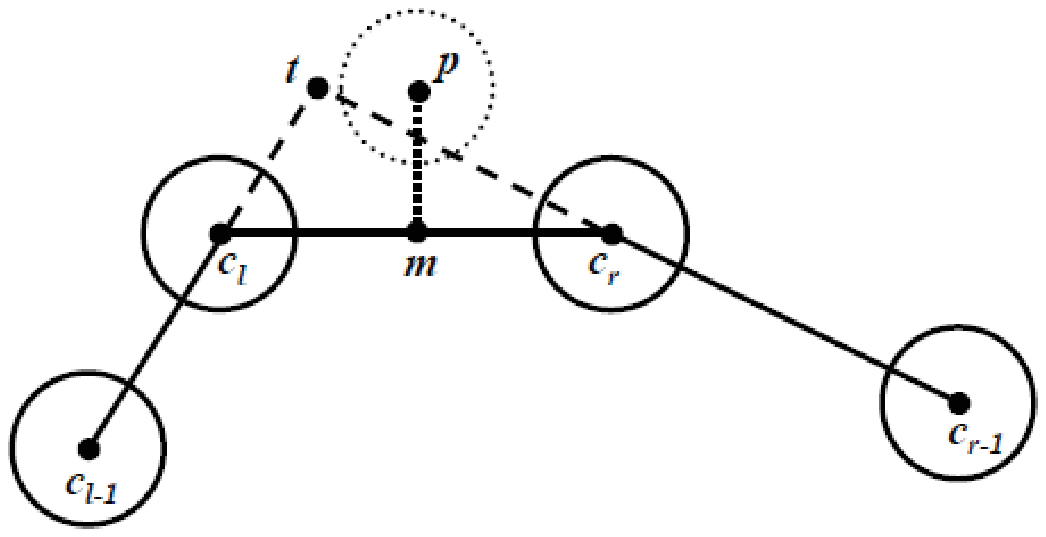}\vspace{-1em}
\caption{An example where point $p$ is not valid and hence
it will not be included in set $Points$.}
\label{fig:findpoints}
\end{figure}

\subsection{Function {\tt Connected-Components}}
\label{subsec:RC}

Consider a set of $m$ unit discs on the plane. 
A {\em connected component} of this set consists all 
unit discs that are connected (between any two
points of any two unit discs there exists a polygonal
line each of whose points belong to some unit disc). 
In a connected component there can be up to two empty spaces 
of distance less or equal to $1/2m$. If there are more
than two such spaces, then this component 
%{\bf Chrysovalandis: the configuration??? ennoo to kommati me ta syndedemena robot} 
is considered as another connected component. Note that a given set of unit discs
may contain many connected components and only one in the case 
that all unit discs are connected.\vspace{.5em}

\paragraph{High level idea:}
A component part consists of
unit disks that are tangent and there exists a polygonal line each of whose points belong to a robot.
A component can have one, two or three component parts with maximum space between adjacent parts $\frac{1}{2n}$.
This is a simple problem but with many possible cases that can be seen on the code of this function. We now proceed with the full code of the function.

Function {\tt Connected-Components} solves the following algorithmic problem:\\

\hrule\vspace{3pt}
\noindent {\sc Connected Components}\vspace{3pt}
\hrule\vspace{5pt}
\noindent {\bf Input:} A set of $m$ points $c_1,c_2,\ldots,c_m$ and an additional 
point $c$.\vspace{2pt}\\
{\bf Output:} A set of pairs of the form $\langle(c_l,c_r),~k\rangle$. 
Each pair $(c_l,c_r)$ represents a connected
component of unit discs, where $c_l$ is the center of the leftmost unit disc
and $c_r$ the center of the rightmost unit disc in the component. $k$ is the
number of unit discs contained in this component (including $c_l$ and $c_r$).\vspace{3pt}
\hrule\vspace{.7em}
 
We now give the details of  function {\tt Connected-Components}. The correctness of 
the function (that is, the proof that it correctly solves the above problem) follows
by close investigation of the code of the funciton. 
We use the notation $\langle (c_{rx},c_{ly}),k_{rx,ly}\rangle$, to denote
the component where $c_{rx}$ is the center of the first unit disc on the right of $spacex$, 
	$c_{ly}$ is the center of the last unit disc on the left of $spacey$, and $k_{rx,ly}$ is
	the number of unit discs between (and including) $c_{rx}$ and $c_{ly}$; $x,y,k_{rx,ly}$
	are positive integers.\vspace{.7em}

\hrule\vspace{3pt}
\noindent Function: {\tt Connected-Components}\vspace{3pt}
\hrule\vspace{5pt}
\noindent {\bf Set} $initial = c$ and $Components = \emptyset$;\\
Starting from $initial$, move to the {\tt right} along connected unit discs until 
a space is reached. Call this space, $space0$;\vspace{-.5em}
\begin{enumerate} %A-B
%A
\item {\bf If} the length of $space0$ is less than or equal to $\frac{1}{2m}$ {\bf then}
continue moving to the {\tt right} along connected unit discs until another
space is reached. Call this space, $space1$;\vspace{-.3em} 
	\begin{enumerate} %1-2
	%1
	\item {\bf If} the length of $space1$ is less than or equal to $\frac{1}{2m}$ {\bf then}
	continue moving to the {\tt right} along connected unit discs until another
	space is reached. Call this space, $space2$;\vspace{-.2em}
		\begin{enumerate} %I-II
		%I
		\item {\bf If} the length of $space2$ is less than or equal to $\frac{1}{2m}$ {\bf then}
		continue moving to the {\tt right} along connected unit discs until another
		space is reached. Call this space, $space3$;\\ %I
		{\bf Set} $Components = Components \cup \langle (c_{r2},c_{l3}),k_{r2,l3}\rangle \cup 
		\langle (c_{r1},c_{l2}),k_{r1,l2}\rangle$;\\
		From $initial$, move to the {\tt left} along connected unit discs until a space is
		reached. Call this space, $space4$;\\
  	{\bf Set} $Components = Components \cup \langle (c_{r4},c_{l0}),k_{r4,l0}\rangle
  	\cup \langle (c_{r0},c_{l1}),k_{r0,l1}\rangle$;\\
		{\bf Set} $initial = c_{r3}$ (center of first unit disc on the right of $space3$).\\
		{\bf If} the unit disc with center $c$ (the input's additional point) is included in one of the 		newly included components {\tt and} this is
		not the first iteration of the procedure, {\bf then} {\bf remove} multiplicities and 
		{\bf terminate};\\
		{\bf Else} {\bf repeat} procedure with (new) $initial$.\vspace{-.1em}
		%II
		\item  {\bf If} the length of $space2$ is greater than $\frac{1}{2m}$ {\bf then}
		from $initial$, move to the {\tt left} along connected unit discs until a
		space is reached. Call this space, $space3$;\\ %II
		\begin{enumerate} %a-b
		%a
			\item{\bf If} the length of $space3$ is greater than $\frac{1}{2m}$ {\bf then}
			{\bf Set} $Components = Components \cup \langle (c_{r3},c_{l2}),k_{r3,l2}\rangle$;\\
			{\bf Set} $initial = c_{r2}$.\\
			{\bf If} the unit disc with center $c$ (the input's additional point) is included in the newly 			included component {\tt and} this is
			not the first iteration of the procedure, {\bf then} {\bf remove} multiplicities and 
			{\bf terminate};\\
			{\bf Else} {\bf repeat} procedure with (new) $initial$.\vspace{-.2em}
		%b
			\item{\bf If} the length of $space3$ is less than or equal to $\frac{1}{2m}$ {\bf then}
			{\bf Set} $Components = Components \cup \langle (c_{r0},c_{l1}),k_{r0,l1}\rangle \cup \langle (c_{r1},c_{l2}),k_{r1,l2}\rangle$;\\
			continue moving to the left along connected unit disks until another space is reached. Call this space, $space 4$;
			{\bf Set} $Components = Components \cup \langle (c_{r4},c_{l3}),k_{r4,l3}\rangle \cup \langle (c_{r3},c_{l0}),k_{r3,l0}\rangle$;\\
			{\bf Set} $initial = c_{r2}$.\\
			{\bf If} the unit disc with center $c$ (the input's additional point) is included in the newly 			included component {\tt and} this is
			not the first iteration of the procedure, {\bf then} {\bf remove} multiplicities and 
			{\bf terminate};\\
			{\bf Else} {\bf repeat} procedure with (new) $initial$.\vspace{-.2em}
		\end{enumerate}      
		\end{enumerate}
	%2
	\item {\bf If} the length of $space1$ is greater than $\frac{1}{2m}$ {\bf then}
		from $initial$, move to the {\tt left} along connected unit discs until a
		space is reached. Call this space, $space2$;\vspace{-.2em} 
			\begin{enumerate} %I-II
			% I
			\item {\bf If} the length of $space2$ is greater than $\frac{1}{2m}$ {\bf then}
			{\bf Set} $Components = Components \cup \langle (c_{r2},c_{l1}),k_{r2,l1}\rangle$;\\
			{\bf Set} $initial = c_{r1}$.\\
			{\bf If} the unit disc with center $c$ (the input's additional point) is included 
			in the newly included component {\tt and} this is
			not the first iteration of the procedure, {\bf then} {\bf remove} multiplicities and 
			{\bf terminate};\\
			{\bf Else} {\bf repeat} procedure with (new) $initial$.\vspace{-.2em}
			% II
			\item {\bf If} the length of $space2$ is less than or equal to $\frac{1}{2m}$ {\bf then}
			continue moving to the {\tt left} along connected unit discs until another
			space is reached. Call this space, $space3$;\vspace{-.2em}
				\begin{enumerate} %a-b
				%a
				\item {\bf If} the length of $space3$ is greater than $\frac{1}{2m}$ {\bf then}
				{\bf Set} $Components = Components \cup \langle (c_{r3},c_{l1}),k_{r3,l1}\rangle$;\\
				{\bf Set} $initial = c_{r1}$.\\
				{\bf If} the unit disc with center $c$ (the input's additional point) is included 
				in the newly included component {\tt and} this is
				not the first iteration of the procedure, {\bf then} {\bf remove} multiplicities and 
				{\bf terminate};\\
				{\bf Else} {\bf repeat} procedure with (new) $initial$.\vspace{-.2em}
				%b
				\item {\bf If} the length of $space3$ is less than or equal to $\frac{1}{2m}$ {\bf then}
				continue moving to the {\tt left} along connected unit discs until another
				space is reached. Call this space, $space4$;\\
				{\bf Set} $Components = Components~\cup$ 
				$\langle (c_{r0},c_{l1}),k_{r0,l1}\rangle~\cup$ 
				$\langle (c_{r2},c_{l0}),k_{r2,l0}\rangle~\cup$
				$\langle (c_{r3},c_{l2}),k_{r3,l2}\rangle~\cup$
				$\langle (c_{r4},c_{l3}),k_{r4,l3}\rangle$;\\
				{\bf Set} $initial = c_{r1}$.\\
				{\bf If} the unit disc with center $c$ (the input's additional point) is included 
				in the newly included component {\tt and} this is
				not the first iteration of the procedure, {\bf then} {\bf remove} multiplicities and 
				{\bf terminate};\\
				{\bf Else} {\bf repeat} procedure with (new) $initial$.\vspace{-.2em} 
				\end{enumerate}       
			\end{enumerate}
	\end{enumerate}
%B
\item {\bf If} the length of $space0$ is greater than $\frac{1}{2m}$ {\bf then}
		from $initial$, move to the {\tt left} along connected unit discs until a
		space is reached. Call this space, $space1$;\vspace{-.2em} 
			\begin{enumerate} %1-2
			% 1
			\item {\bf If} the length of $space1$ is greater than $\frac{1}{2m}$ {\bf then}
			{\bf Set} $Components = Components \cup \langle (c_{r1},c_{l0}),k_{r1,l0}\rangle$;\\
			{\bf Set} $initial = c_{r0}$.\\
			{\bf If} the unit disc with center $c$ (the input's additional point) is included 
			in the newly included component {\tt and} this is
			not the first iteration of the procedure, {\bf then} {\bf remove} multiplicities and 
			{\bf terminate};\\
			{\bf Else} {\bf repeat} procedure with (new) $initial$.\vspace{-.2em}
			% 2
			\item {\bf If} the length of $space1$ is less than or equal to $\frac{1}{2m}$ {\bf then}
			continue moving to the {\tt left} along connected unit discs until another
			space is reached. Call this space, $space2$;\vspace{-.2em}
				\begin{enumerate} %I-II
				%I
				\item {\bf If} the length of $space2$ is greater than $\frac{1}{2m}$ {\bf then}
				{\bf Set} $Components = Components \cup \langle (c_{r2},c_{l0}),k_{r2,l0}\rangle$;\\
				{\bf Set} $initial = c_{r0}$.\\
				{\bf If} the unit disc with center $c$ (the input's additional point) is included 
				in the newly included component {\tt and} this is
				not the first iteration of the procedure, {\bf then} {\bf remove} multiplicities and 
				{\bf terminate};\\
				{\bf Else} {\bf repeat} procedure with (new) $initial$.\vspace{-.2em}
				%II
				\item {\bf If} the length of $space2$ is less than or equal to $\frac{1}{2m}$ {\bf then}
				continue moving to the {\tt left} along connected unit discs until another
				space is reached. Call this space, $space3$;\vspace{-.2em}
					\begin{enumerate} %a-b
					%a
					\item {\bf If} the length of $space3$ is greater than $\frac{1}{2m}$ {\bf then}
					{\bf Set} $Components = Components \cup \langle (c_{r3},c_{l0}),k_{r3,l0}\rangle$;\\
					{\bf Set} $initial = c_{r0}$.\\
					{\bf If} the unit disc with center $c$ (the input's additional point) is included 
					in the newly included component {\tt and} this is
					not the first iteration of the procedure, {\bf then} {\bf remove} multiplicities and 
					{\bf terminate};\\
					{\bf Else} {\bf repeat} procedure with (new) $initial$.\vspace{-.2em}
				  %b  				
					\item {\bf If} the length of $space3$ is less than or equal to $\frac{1}{2m}$ {\bf then}
					continue moving to the {\tt left} along connected unit discs until another
					space is reached. Call this space, $space4$;\\	
					{\bf Set} $Components = Components~\cup$ 
					$\langle (c_{r1},c_{l0}),k_{r1,l0}\rangle~\cup$ 
					$\langle (c_{r2},c_{l1}),k_{r2,l1}\rangle~\cup$
					$\langle (c_{r3},c_{l2}),k_{r3,l2}\rangle~\cup$
					$\langle (c_{r4},c_{l3}),k_{r4,l3}\rangle$;\\
					{\bf Set} $initial = c_{r0}$.\\
					{\bf If} the unit disc with center $c$ (the input's additional point) is included 
					in the newly included component {\tt and} this is
					not the first iteration of the procedure, {\bf then} {\bf remove} multiplicities and 
					{\bf terminate};\\
					{\bf Else} {\bf repeat} procedure with (new) $initial$.\vspace{-.3em}
					\end{enumerate} 
				\end{enumerate}       
			\end{enumerate}
\end{enumerate}		
\vspace{3pt}
\hrule%\vspace{1em}

\paragraph{Insight:} This function is called by a robot $r$ with center $c$. The
$m$ points are the centers of the robots that robot $r$ can see (its local view)
in the current configuration. As we will see later, this function is called
when the robot can see all other robots, i.e., $m=n$. The robot wishes to find
the connected components formed in the current configuration. Intuitively, 
we can include two spaces of length $1/2n$ in a configuration, since if all the
robots can see each other, 
%are in symmetry ({\bf Chrysovalandis: which if they are, all robots can understand it} since all can see all other), 
then the robots can move taking steps of length $1/2n$ until they meet.

\subsection{Function {\tt How-Much-Distance}}

Function {\tt How-Much-Distance} solves the following algorithmic problem:\\

\hrule\vspace{3pt}
\noindent {\sc How Much Distance}\vspace{3pt}
\hrule\vspace{5pt}
\noindent {\bf Input:} A set of $m$ points $c_1,c_2,\ldots,c_m$ and an additional 
point $c$.\vspace{2pt}\\
{\bf Output:} One of the numbers 1,2 or 3. Consider the connected components
formed by the unit discs with centers $c_1,c_2,\ldots,c_m$. If the unit disc with center
$c$ is the rightmost (the straight direction is considered to be the inside of the convex hull) element of the component
that has the smallest (space-wise) distance between the components, then the answer is
1. If all components have the same distance, then the answer is 2. 
Otherwise the answer is 3.\vspace{3pt}
\hrule\vspace{1em}

Function {\tt How-Much-Distance} calls function 
{\tt Connected-Components} (Section~\ref{subsec:RC})
to get the connected components formed by the unit discs with centers $c_1,c_2,\ldots,c_m$. 
Then it checks the distances between the components and returns 1,2 or 3 accordingly.\vspace{-.5em} 

\paragraph{Insight:} This function is called by robot $r$ with center $c$. The robot
wants to check whether it is the rightmost robot in the connected component
with the smallest distance among the components formed by the robots in its local
view.

\subsection{Function {\tt In-Largest-Component}}

Function {\tt In-Largest-Component} solves the following algorithmic problem:\\

\hrule\vspace{3pt}
\noindent {\sc In Largest Component}\vspace{3pt}
\hrule\vspace{5pt}
\noindent {\bf Input:} A set of $m$ points $c_1,c_2,\ldots,c_m$ and an additional 
point $c$.\vspace{2pt}\\
{\bf Output:} One of the numbers 1,2 or 3. Consider the connected components
formed by the unit discs with centers $c_1,c_2,\ldots,c_m$. If the unit disc with center
$c$ belongs in the largest component (wrt the number of discs), then the answer is 1;
if all the components are larger than the one it belongs, then the answer is 2. Otherwise
the answer is 3.\vspace{3pt}
\hrule\vspace{1em}

Function {\tt In-Largest-Component} calls function {\tt Connected-Components} (Section~\ref{subsec:RC})
to get the connected components formed by the unit discs with centers $c_1,c_2,\ldots,c_m$.
Then it checks the sizes of the components and where the unit disc with center $c$ belongs to,
and returns 1,2 or 3 accordingly. 

\paragraph{Insight:} This function is called by robot $r$ with center $c$. The robot
wants to check whether it belongs in the largest component among the components
formed by the robots in its local view. 

\subsection{Function {\tt In-Smallest-Component}}

Function {\tt In-Smallest-Component} solves the following algorithmic problem:\\

\hrule\vspace{3pt}
\noindent {\sc In Smallest Component}\vspace{3pt}
\hrule\vspace{5pt}
\noindent {\bf Input:} A set of $m$ points $c_1,c_2,\ldots,c_m$ and an additional 
point $c$.\vspace{2pt}\\
{\bf Output:} One of the numbers 1,2 or 3. Consider the connected components
formed by the unit discs with centers $c_1,c_2,\ldots,c_m$. If the unit disc with center
$c$ belongs in the smallest component (wrt the number of discs), then the answer is 1;
if all the components are smaller than the one it belongs, then the answer is 2. Otherwise
the answer is 3.\vspace{3pt}
\hrule\vspace{1em}

Function {\tt In-Smallest-Component} calls function {\tt Connected-Components} (Section~\ref{subsec:RC})
to get the connected components formed by the unit discs with centers $c_1,c_2,\ldots,c_m$.
Then it checks the sizes of the components and where the unit disc with center $c$ belongs to,
and returns 1,2 or 3 accordingly. 

\paragraph{Insight:} This function is called by robot $r$ with center $c$. The robot
wants to check whether it belongs in the smallest component among the components
formed by the robots in its local view.

\subsection{Function {\tt In-Straight-Line-2}}

Function {\tt  In-Straight-Line-2} solves the following decision problem:\\

\hrule\vspace{3pt}
\noindent {\sc  In Straight Line 2}\vspace{3pt}
\hrule\vspace{5pt}
\noindent {\bf Input:} A set of $3$ points $c_l,c_m$ and $c_r$.\vspace{2pt}\\
{\bf Output:} $YES$, if the three points are on the same line. Otherwise, $NO$.\vspace{3pt}
\hrule\vspace{1em} 

Function {\tt In-Straight-Line-2} involves simple geometric calculations to check if the three input points are on the same straight line.

\paragraph{Insight:} This function is called by robot $r_m$ with center $c_m$. The robot $r_m$
wants to check whether it is on the same line with its left nearest neighbor robot on the convex hull, $r_l$ with center $c_l$, and with its right nearest neighbor robot, $r_r$ with center $c_r$. 

%%%%%%%%%%%%%%%%%%%%%%%%%%%%% Local Algorithm %%%%%%%%%%%%%%%%%%%%%%%%%%%%%%%%%%%%

\section{Local Algorithm for Compute}
\label{sec:algorithm}

In this section we present the algorithm that each robot runs locally
while in state {\bf compute}. This algorithm takes as an input
the view of the robot (obtained in state {\bf look}) and calculates
the position in the plane the robot should move next (in state {\bf move}).
%in order to solve the gathering problem. 
%Each robot runs
%an instance of the algorithm locally. These instances constitute
%together a distributed algorithm.

In Section~\ref{subsec:algstates} we overview the states of the algorithm and
in Section~\ref{subsec:alg-detail} we give a detail description of the algorithm
along with key observations/properties. 
%Its correctness is shown in Section~\ref{subsec:correctness}.

\subsection{States of the Algorithm}
\label{subsec:algstates}

Once a robot $r_i$ is in state {\bf Compute} it starts executing the local
algorithm $A_i$. Recall that $V_i$ denotes robot's $r_i$ local view, that is, 
the set of robots that are visible to robot $r_i$ upon entering state {\bf Compute}.
The algorithm consists of 17 states. These states are algorithmic states within state {\bf Compute} and we refer to them using the notation
{\bf Compute.$\langle$algorithm-state-name$\rangle$}. We now overview these states with respect to a robot $r_i$.

\begin{enumerate}
\item {\bf Compute.Start}:
\begin{itemize}
\item This is the initial state of the algorithm run by robot $r_i$. 
\end{itemize}
\item {\bf Compute.OnConvexHull}:
\begin{itemize}
\item Robot $r_i$ is on the convex hull formed by robots in its local view.
\end{itemize}
\item \label{st:AllOnCH} {\bf Compute.AllOnConvexHull}:
\begin{itemize}
\item Robot $r_i$ is on the convex hull.
\item Robot $r_i$ can see all other $n-1$ robots (that is, it has full visibility).
\item All other $n-1$ robots are on the convex hull and have full visibility.
\end{itemize} 
\item {\bf Compute.Connected}:
\begin{itemize}
\item Same conditions as in state~\ref{st:AllOnCH}
\item Robot $r_i$ sees that all robots are connected.
\end{itemize}
\item {\bf Compute.NotConnected}:
\begin{itemize}
\item Same conditions as in state~\ref{st:AllOnCH}
\item Robot $r_i$ sees that not all robots are connected.
\end{itemize}
\item \label{st:NAllOnCH} {\bf Compute.NotAllOnConvexHull}:
\begin{itemize}
\item Robot $r_i$ is on the convex hull.
\item Robot $r_i$ cannot see all other $n-1$ robots or at least one robot is not on the convex hull or all robots are on the convex hull, but there is at least one that does not have full visibility.
\end{itemize}
\item \label{st:NSL} {\bf Compute.NotOnStraightLine}:
\begin{itemize}
\item Same conditions as in state~\ref{st:NAllOnCH}.
\item There are no two other robots on the same line with robot $r_i$ (all three are on the convex hull).
\end{itemize}
\item \label{st:SForMore} {\bf Compute.SpaceForMore}:
\begin{itemize}
\item Same conditions as in state~\ref{st:NSL}.
\item Robot $r_i$ sees that there is space on the convex hull for another robot. That is,
 there exist two neighboring robots on the convex hull that their distance is at least 2 (recall that robots are unit discs).  
\end{itemize}

\item {\bf Compute.NoSpaceForMore}:
\begin{itemize}
\item Same conditions as in state~\ref{st:NSL}.
\item Robot $r_i$ sees that there is no space on the convex hull for another robot; all neighboring robots on the convex hull have distance less than 2. 
\end{itemize}
\item \label{st:OnSL} {\bf Compute.OnStraightLine}:
\begin{itemize}
\item Same conditions as in state~\ref{st:NAllOnCH}.
\item There are at least two other robots on the same line with robot $r_i$ (all three are on the convex hull).
\end{itemize}
\item {\bf Compute.SeeOneRobot}:
\begin{itemize}
\item Same conditions as in state~\ref{st:OnSL}.
\item Robot $r_i$ can see only one robot on the line.
\end{itemize}
\item {\bf Compute.SeeTwoRobot}:
\begin{itemize}
\item Same conditions as in state~\ref{st:OnSL}.
\item Robot $r_i$ can see two robots on the line; this implies
that robot $r_i$ is between these two robots.
\end{itemize}
\item \label{st:NotOCH} {\bf Compute.NotOnConvexHull}:
\begin{itemize}
\item Robot $r_i$ is enclosed in the convex hull formed by robots in its local view.
\end{itemize}
\item {\bf Compute.IsTouching}:
\begin{itemize}
\item Same conditions as in state~\ref{st:NotOCH}.
\item Robot $r_i$ is touching another robot (the unit discs representing the robots 
are tangent).
\end{itemize}
\item \label{st:NotTouching} {\bf Compute.NotTouching}:
\begin{itemize}
\item Same conditions as in state~\ref{st:NotOCH}.
\item Robot $r_i$ does not touch any other robot.
\end{itemize}
\item {\bf Compute.ToChange}:
\begin{itemize}
\item Same conditions as in state~\ref{st:NotTouching}.
\item If robot $r_i$ moves as calculated by the algorithm, then it will cause the convex hull to change, and this cannot be avoided. 
\end{itemize}
\item {\bf Compute.NotChange}:
\begin{itemize}
\item Same conditions as in state~\ref{st:NotTouching}.
\item If robot $r_i$ moves as calculated by the algorithm, then there is a way to avoid changing the convex hull. 
\end{itemize}
\end{enumerate}

Figure~\ref{fig:AlgStates} depicts all possible states and transitions of the
algorithm run by robot $r_i$. (For better readability the prefix {\bf Compute} is voided.) 
The states that have no transition to another state are terminal, and
they output the position that the robot will move next (and the robot exits state {\bf Compute}
and enters state {\bf Move}). State {\bf Compute.Connected} outputs the special point~$\bot$
which causes robot $r_i$ to exit state {\bf Compute} and enter state {\bf Terminate} (the
robot takes no further steps).
%gathering problem has been solved).   

\begin{figure}[t]
\includegraphics[width=6.2in]{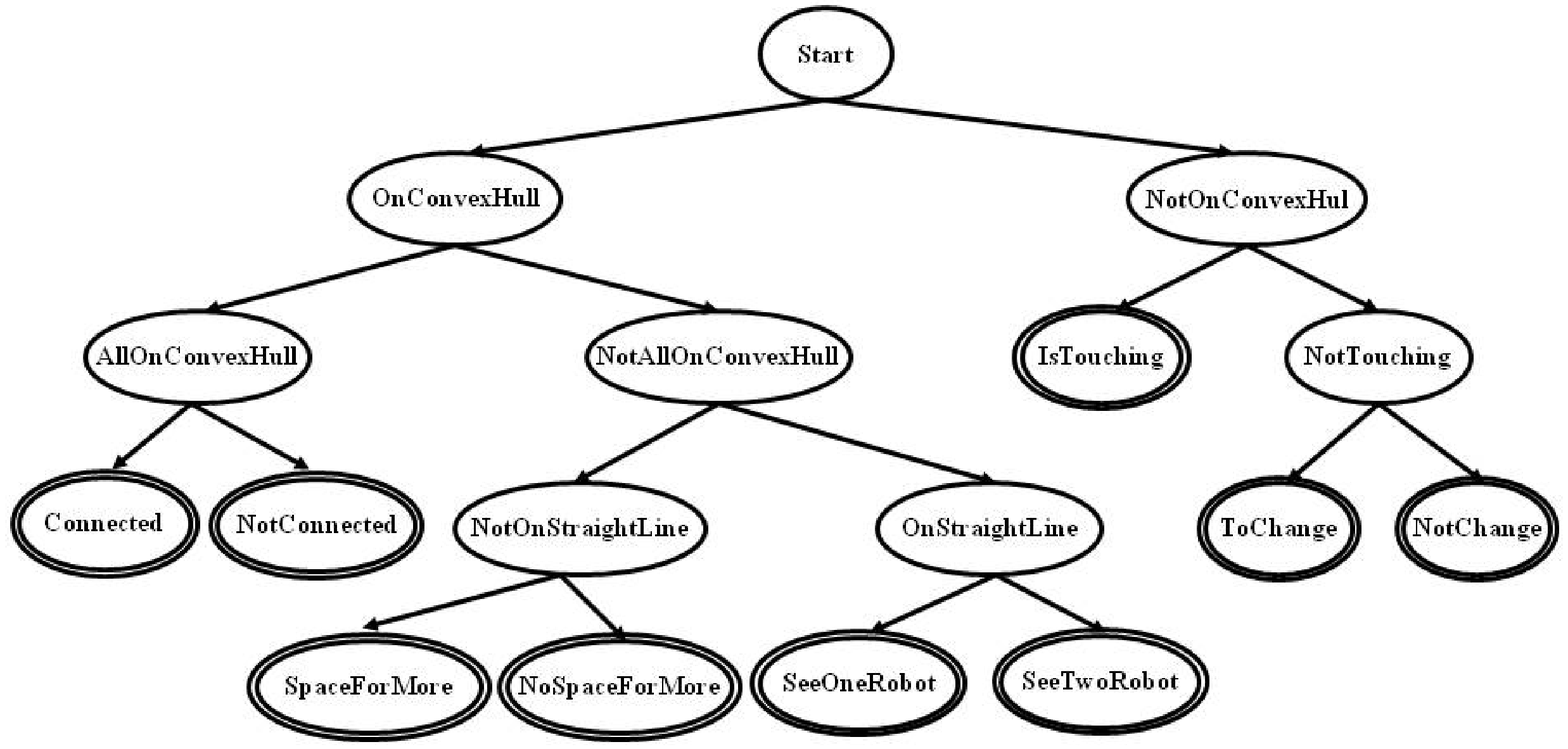}\vspace{-10em} %for dvi/ps
\caption{All possible states and transitions of the algorithm
run by robot $r_i$.}
\label{fig:AlgStates} 
\end{figure}

\subsection{Description of the Algorithm}
\label{subsec:alg-detail}

The algorithm consists of $17$ procedures, each treating one of
the possible algorithmic states. In particular, once the algorithm is in
a state {\bf Compute.$\langle$algorithm-state-name$\rangle$} it
runs the corresponding procedure {\tt algorithm-state-name} 
that either implements a state transition or outputs a point on
the plane the robot should move at (in the next state {\bf Move}); 
it implements a state transition if
it is in a non-terminal state and outputs a point otherwise.  In a nutshell,
the algorithm can be expressed as follows:\\

\hrule\vspace{3pt}
\noindent {\sc Local Algorithm}\vspace{3pt}
\hrule\vspace{5pt}
\noindent {\bf if} state= {\bf Compute.$\langle$algorithm-state-name$\rangle$}
{\bf then} run procedure {\tt algorithm-state-name}.\vspace{3pt}
\hrule\vspace{1em}

We proceed to describe the procedures. The procedures are given
with respect to a robot $r_i$.

\subsubsection{Procedure {\tt Start}}

\hrule\vspace{3pt}
\noindent Procedure: {\tt Start}\vspace{3pt}
\hrule\vspace{5pt}
\noindent {\em Precondition}: state= {\bf Compute.Start}\\
\noindent {\em Effect}:\vspace{-.8em}
\begin{itemize}
\item {\bf Call} function {\tt On-Convex-Hull} with inputs, the local view
$V_i$ and $c_i$, the center of robot $r_i$.\vspace{-.7em}
\item {\bf If} function {\tt On-Convex-Hull} returns {\sc yes} {\bf then} 
state:= {\bf Compute.OnConvexHull}\vspace{-.7em}
\item [] {\bf Else} state:= {\bf Compute.NotOnConvexHull} 
\end{itemize}\vspace{-.4em} 
%\vspace{3pt}
\hrule\vspace{1em}

Recall from Section~\ref{subsec:FOnCH} that function {\tt On-Convex-Hull} also
returns the set of points that are on the convex hull based on robot's $r_i$ view $V_i$. We will
be denoting this set as $\CH(V_i)$. From this point onwards, robot $r_i$
carries the knowledge of $\CH(V_i)$ in the various algorithmic states, while in state
{\bf Compute} (this knowledge is lost once it exists this state). 

\begin{lemma}
\label{StartLemma} 
{\tt Start}{\bf(Compute.$\langle$Start$\rangle$)} = {\bf Compute.$\langle$OnConvexHull$\rangle$} iff $c_i \in \CH(V_i)$. 
\end{lemma}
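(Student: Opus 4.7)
The claim follows almost immediately from the specification of Procedure \texttt{Start} together with the correctness of the subroutine \texttt{On-Convex-Hull}, so my plan is to make that reduction explicit rather than invent any new machinery. I will split the biconditional into its two directions and, in both cases, trace the single conditional inside the \emph{Effect} block of Procedure \texttt{Start}.

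For the forward direction, I assume the procedure is invoked in state \textbf{Compute.Start} and that its result is the transition to \textbf{Compute.OnConvexHull}. By inspection of the \emph{Effect} block, the only branch that assigns this target state is the \textbf{then}-branch of the \textbf{if} statement, which is executed precisely when \texttt{On-Convex-Hull}$(V_i,c_i)$ returns \textsc{yes}. By the specification of \textsc{On Convex Hull} given in Section~\ref{subsec:FOnCH}, the subroutine returns \textsc{yes} iff $c_i\in\CH(V_i)$. Hence $c_i\in\CH(V_i)$, as required. For the converse direction, I assume $c_i\in\CH(V_i)$; again by the specification of \texttt{On-Convex-Hull}, the call made inside Procedure \texttt{Start} returns \textsc{yes}, so the \textbf{then}-branch fires and the procedure sets the state to \textbf{Compute.OnConvexHull}.

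There is essentially no obstacle: the argument is a one-to-one reading of the pseudocode combined with the already-stated correctness of \texttt{On-Convex-Hull} (implementable, e.g., via Graham's scan \cite{Graham72}). The only point worth flagging is that the lemma is implicitly conditioned on the precondition \texttt{state}$=$\textbf{Compute.Start} being met, since otherwise the \emph{Effect} block is not executed and the notation on the left-hand side is undefined; I would state this precondition explicitly at the start of the proof to avoid any ambiguity. Beyond that, the proof is just a two-line reduction in each direction.
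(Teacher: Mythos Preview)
Your proposal is correct and follows essentially the same approach as the paper: both reduce the claim to the correctness of \texttt{On-Convex-Hull} (via Graham's scan~\cite{Graham72}) and the structure of Procedure \texttt{Start}'s conditional. Your version is simply more explicit in unpacking the two directions of the biconditional, whereas the paper's proof is a one-sentence appeal to the same facts.
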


\begin{proof}
Procedure {\tt Start} uses function {\tt On-Convex-Hull}, which uses Grahams algorithm\cite{Graham72} that was proven to be correct.
\hfill$\blacksquare$
\end{proof}

%\subsubsection{Procedure {\tt OnConvexHull}}
%
%\hrule\vspace{3pt}
%\noindent Procedure: {\tt OnConvexHull}\vspace{3pt}
%\hrule\vspace{5pt}
%\noindent {\em Precondition}: state= {\bf Compute.OnConvexHull}\\
%\noindent {\em Effect}:\vspace{-.8em}
%\begin{itemize}
%\item {\bf If} $|V_i| = n$ {\tt and} $|CH|=n$ {\bf then} 
%%for each $c_j\in V_i-\{c_i\}$ {\bf call} 
%%function {\tt On-Convex-Hull} with inputs, the local view
%%$V_i$ and $c_j$.\vspace{-.7em}
%%\begin{itemize}
%%\item {\bf If} function {\tt On-Convex-Hull} returns {\sc yes} {\em for all} $c_j$
%%{\bf then} 
%state:= {\bf Compute.AllOnConvexHull}\vspace{-.7em}
%\item[] {\bf Else} state:= {\bf Compute.NotAllOnConvexHull}%\vspace{-.7em}
%%\end{itemize} 
%%\item {\bf Else} state:= {\bf Compute.NotAllOnConvexHull} 
%\end{itemize}\vspace{-.4em} 
%%\vspace{3pt}
%\hrule%\vspace{1em}

\subsubsection{Procedure {\tt OnConvexHull}}

\hrule\vspace{3pt}
\noindent Procedure: {\tt OnConvexHull}\vspace{3pt}
\hrule\vspace{5pt}
\noindent {\em Precondition}: state= {\bf Compute.OnConvexHull}\\
\noindent {\em Effect}:\vspace{-.8em}
\begin{itemize}
\item {\bf If} $|V_i| = n$ {\tt and} $|\CH(V_i)|=n$ {\bf then} 
\begin{itemize}
\item for each $r_j$ $\in$ $V_i-\{r_i\}$
\begin{itemize}
\item {\bf Call} function {\tt On-Straight-Line-2} with inputs: the center of the left neighbor of $r_j$, the center of $r_j$ and the center of the right neighbor of $r_j$.
\item {\bf If} function {\tt On-Straight-Line-2} returned $YES$ {\bf then}
state:= {\bf Compute.NotAllOnConvexHull} and {\bf return}
\end{itemize}
\item state:= {\bf Compute.AllOnConvexHull}

\end{itemize}
\item[] {\bf Else} state:= {\bf Compute.NotAllOnConvexHull}%\vspace{-.7em}

\end{itemize}\vspace{-.4em} 
\hrule%\vspace{1em}

\begin{lemma}
\label{OnConvexHullLemma}
{\tt OnConvexHull}{\bf(Compute.$\langle$OnConvexHull$\rangle$)} = {\bf Compute.$\langle$AllOnConvexHull$\rangle$} iff $|V_i| = n$ and
 $|\CH(V_i)|=n$ and all robots have full visibility, according to $V_i$. 
\end{lemma}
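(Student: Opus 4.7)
My plan is to prove both directions of the biconditional by directly inspecting the control flow of procedure \texttt{OnConvexHull}, supplementing the forward direction with a short geometric argument that connects strict convexity of the hull to full visibility.

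First I would read the state update off from the pseudocode. The state becomes \textbf{Compute.AllOnConvexHull} exactly when (i)~the outer test $|V_i|=n$ and $|\CH(V_i)|=n$ passes, and (ii)~every invocation of \texttt{On-Straight-Line-2} on a triple consisting of some $r_j \in V_i\setminus\{r_i\}$ together with its two hull neighbors returns \textsc{no}. Any other outcome sends the procedure to \textbf{Compute.NotAllOnConvexHull}. This reduces the lemma to showing that (i) together with (ii) is equivalent to the three right-hand conditions stated in the lemma.

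For the converse direction (full visibility and all on hull implies (ii)), I would argue by contradiction: if some $r_j$ were collinear with its two hull neighbors $r_l, r_r$, then $c_j$ would lie on the segment $\overline{c_l c_r}$, so the entire closed disc of $r_j$ would intersect every straight segment joining the bounding circles of $r_l$ and $r_r$. Hence $r_l$ could not see $r_r$, contradicting the hypothesis that all robots have full visibility according to $V_i$. Together with the trivial fact that (i) is immediate from the right-hand side, this gives one direction. For the forward direction, I would start from (i) and (ii) and use (ii) to conclude that the convex hull is strictly convex (no hull vertex lies on the open segment between its two neighbors). I would then invoke the geometric claim that on a strictly convex arrangement of unit-disc robots whose centers are the hull vertices, every pair of robots admits a clear sight line between their bounding circles, which is the definition of pairwise full visibility relative to $V_i$.

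The hard part will be this last geometric claim. Visibility is defined through \emph{some} chosen point on each disc's boundary rather than through the centers, so blockage cannot be dismissed merely by noting that the chord between two centers avoids all other centers: a third disc whose center sits close to that chord can still obstruct it. To handle this I would exploit strict convexity together with the model's own constraint that two robots share at most a single boundary point, and argue that each pair of non-adjacent hull robots can be seen from the outward-facing portions of their bounding circles, where the open half-plane exterior to the supporting line of any other hull disc is unobstructed. The remainder of the argument is routine bookkeeping against the pseudocode.
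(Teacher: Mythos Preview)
Your plan follows the same decomposition as the paper: read the control flow off the pseudocode, then argue the equivalence between ``no three consecutive hull centers collinear'' and ``full visibility in $V_i$''. The paper dispatches this equivalence in one sentence without argument; you rightly flag the forward implication as the substantive step and attempt a proof.

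The sketch of that step has a genuine gap. Your argument via ``outward-facing portions of the bounding circles'' tacitly assumes that for any two hull robots $r_a,r_b$ the remaining centers lie in a single open half-plane determined by the chord $\overline{c_ac_b}$. That holds when $r_a,r_b$ are \emph{adjacent} on the hull, but for a non-adjacent pair the chord generally has hull vertices on both sides, and there is no obstacle-free half-plane to retreat into. Worse, the implication you are trying to establish is false in general. Place four unit-disc robots with centers at $(0,0)$, $(10,1)$, $(20,0)$, $(10,-1)$: these are in strictly convex position, pairwise non-overlapping, and no three are collinear. The robot at $(10,1)$ sees all three others, so for it $|V_i|=n$, $|\CH(V_i)|=n$, and every collinearity test in the loop returns \textsc{no}; the procedure outputs \textbf{AllOnConvexHull}. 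Yet any straight segment from the boundary circle at $(0,0)$ to the boundary circle at $(20,0)$ crosses the line $x=10$ at some height in $[-1,1]$, while the two middle discs jointly cover $\{10\}\times[-2,2]$; hence $(0,0)$ cannot see $(20,0)$ and $V_i$ is not fully visible. The paper's own proof simply asserts the implication (``from the definition of the convex hull it is clear\ldots''), so the defect is inherited rather than introduced by you, but your outlined argument would not close it even were the statement true.
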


\begin{proof}
Based on Lemma ~\ref{StartLemma} $c_i$ $\in$ $\CH(V_i)$ .
Then, there are four possible cases:
\begin{enumerate}
\item  $|V_i| < n$.

In this case the procedure {\tt OnConvexHull} will change state to {\bf Compute.NotAllOnConvexHull} because if $r_i$ can't see all robots, it means that $r_i$ does not have full visibility.
\item  $|V_i| = n$ and $|\CH(V_i)|<n$.

In this case procedure {\tt OnConvexHull} will change state to {\bf Compute.NotAllOnConvexHull} because if at least one robot is not on the convex hull, then the correct state to shift is {\bf Compute.NotAllOnConvexHull}. The procedure {\tt OnConvexHull} can easily determine if all robots are on the convex hull by comparing the number of robots, with the number of robots that belong to $\CH(V_i)$.

\item  $|V_i| = n$ and $|\CH(V_i)| = n$ and at least one robot does not have full visibility.

In this case procedure {\tt OnConvexHull} will  change state to {\bf Compute.NotAllOnConvexHull} because if at least one robot does not have full visibility, the correct state to shift is {\bf Compute.NotAllOnConvexHull}. The procedure {\tt OnConvexHull} can easily determine if all robots have full visibility (according to $V_i$) by checking if all robots are on the convex hull and no three robots are on the same line. From the definition of the convex hull it is clear that no incisions are allowed and hence if no three robots are on the same line, all robots will have full visibility.

\item  $|V_i| = n$ and $|\CH(V_i)| = n$ and all robots have full visibility.

In this case procedure {\tt OnConvexHull} will change state to {\bf Compute.AllOnConvexHull} because if all robots are on the convex hull and have full visibility, the correct state to shift is {\bf Compute.AllOnConvexHull}. The procedure {\tt OnConvexHull} can easily determine if all robots have full visibility (accroding to $V_i$) by checking if all robots are on the convex hull and no three robots are on the same line. From the definition of the convex hull it is clear that no incisions are allowed and hence if no three robots are on the same line, all robots will have full visibility.
\end{enumerate}\vspace{-1.8em}
\hfill$\blacksquare$
\end{proof}

\subsubsection{Procedure {\tt AllOnConvexHull}}

\hrule\vspace{3pt}
\noindent Procedure: {\tt AllOnConvexHull}\vspace{3pt}
\hrule\vspace{5pt}
\noindent {\em Precondition}: state= {\bf Compute.AllOnConvexHull}\\
\noindent {\em Effect}:\vspace{-.8em}
\begin{itemize}
\item {\bf Choose} a $c_j$ from $V_i$. {\bf Set} $Component = \{c_j\}$.\vspace{-.7em}
\item {\bf While} set $Component$ changes {\bf do}\vspace{-.7em}
\begin{itemize}
\item For each $c_j\in Component$ {\bf do}\vspace{-.3em}
\begin{itemize}
\item {\bf Check} for each $c_x\in V_i - Component - \{c_j\}$ whether
its unit disc is tangent with the unit disc of $c_j$. {\bf If} true, {\bf then}  
$Component = Component \cup \{c_x\}.$\vspace{-.7em}
\end{itemize} 
\end{itemize}  
\item {\bf If} $|Component| = n$ {\bf then} state:= {\bf Compute.Connected}\vspace{-.6em}
\item[] {\bf Else} state:= {\bf Compute.NotConnected}\vspace{-.4em}  
\end{itemize}
\hrule

\begin{lemma}
\label{AllOnConvexHullLemma}
{\tt AllOnConvexHull}{\bf (Compute.$\langle$AllOnConvexHull$\rangle$)} = {\bf Compute.$\langle$Connected$\rangle$} iff $V_i$ is a connected configuration. 
\end{lemma}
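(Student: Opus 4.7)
The plan is to show that the procedure {\tt AllOnConvexHull} is essentially a flood-fill over the \emph{tangency graph} $T(V_i)$, whose vertices are the robot centers in $V_i$ and whose edges join pairs of centers whose unit discs are tangent. I will argue (i) the loop terminates and at termination $Component$ equals the connected component of $T(V_i)$ containing the initial center $c_j$, and (ii) $T(V_i)$ is connected as a graph iff the geometric configuration $V_i$ is connected in the sense of the paper. Combining (i) and (ii) gives the lemma, because the procedure outputs {\bf Compute.Connected} exactly when $|Component| = n$, which by (i) happens iff $T(V_i)$ is connected, and by (ii) iff $V_i$ is connected.

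For (i), first observe that $Component$ only grows and is bounded by $V_i$, so the {\bf while} loop executes at most $|V_i|$ iterations and hence terminates. Next, a straightforward induction on the loop iteration shows that every $c_x$ added to $Component$ lies in the same connected component of $T(V_i)$ as $c_j$, since each addition is witnessed by tangency with some $c_j' \in Component$. Conversely, suppose the procedure terminates and there exists some $c_y$ in the same $T(V_i)$-component as $c_j$ but $c_y \notin Component$. Pick a shortest $T(V_i)$-path from $c_j$ to $c_y$ and let $(c_a, c_b)$ be the first edge leaving $Component$, so $c_a \in Component$ and $c_b \notin Component$. But the tangency check on $c_a$ would then have added $c_b$, so $Component$ would change in a further iteration, contradicting termination. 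Thus at termination $Component$ equals the full $T(V_i)$-component containing $c_j$.

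For (ii), I will use the fact that unit discs are closed and, because robots are fat, any two distinct robots share at most one point (and only when they are tangent). For the forward direction, if $T(V_i)$ is connected then for any two robots one can chain tangent pairs, and concatenating the straight segments between successive centers (each of which passes through the tangency point and is covered by the union of the two discs) gives the required polygonal line, so the configuration is connected. For the reverse direction, assume the configuration is connected and take any two centers $c_j, c_y$; there is a polygonal line $L$ from a point of robot $j$ to a point of robot $y$ entirely covered by the discs. Walking along $L$ and tracking which disc currently covers the walker, every time the cover changes from disc $A$ to disc $B$ the transition point lies in $A \cap B$, which for non-overlapping closed unit discs is a single tangency point; this yields a sequence of tangent robots connecting $c_j$ to $c_y$ in $T(V_i)$.

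The main obstacle is the reverse direction of (ii): making the ``tracking which disc covers the walker'' argument rigorous. Technically one has to argue that along the polygonal line the covering sets change only finitely often and that at each change the two discs involved actually touch; this uses the fact that each disc is closed, the polygonal line is compact, and fat robots cannot properly overlap (so two distinct discs meet only at a single tangent point or not at all). Once this is handled, the iff chain $|Component|=n \iff T(V_i) \text{ connected} \iff V_i \text{ connected}$ immediately yields the lemma.\hfill$\blacksquare$
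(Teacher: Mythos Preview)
Your proof is correct and follows essentially the same approach as the paper: recognize the procedure as a flood-fill over the tangency graph and check whether the resulting component has size $n$. The paper's own proof is a two-sentence appeal to ``simple geometric calculations,'' whereas you actually supply the missing argument---in particular your step (ii), establishing that tangency-graph connectivity coincides with the paper's polygonal-line definition of a connected configuration, is something the paper takes for granted but never proves.
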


\begin{proof}
Based on Lemma ~\ref{OnConvexHullLemma}, $|\CH(V_i)|=n$ and all robots have full visibility.
The procedure {\tt AllOnConvexHull} uses simple geometric calculations to calculate the number of the robots that are connected to a random robot $r_j$. If the number of the connected robots is $n$, all robots are connected and therefore, the correct state to shift is {\bf Compute.Connected}.
\hfill$\blacksquare$
\end{proof}

\subsubsection{Procedure {\tt Connected}}
\hrule\vspace{3pt}
\noindent Procedure: {\tt Connected}\vspace{3pt}
\hrule\vspace{5pt}
\noindent {\em Precondition}: state= {\bf Compute.Connected}\\
\noindent {\em Effect}:\vspace{-.8em}
\begin{itemize}
\item {\bf Return} the special point $\bot$.\vspace{-.4em} 
\end{itemize}
\hrule\vspace{.7em}

Once this procedure is executed by robot $r_i$, it enters state {\bf terminate} and does not perform any further steps. 
%and As we will show in the next section, once the robot reaches this
%state, it will never enter the {\bf Compute} state again, and the gathering problem 
%has been solved.

\remove{% ommiting 
\begin{lemma}
\label{ConnectedLemma}
{\tt Connected}{\bf (Compute.$\langle$Connected$\rangle$)} = $\bot$ $iff$ all robots have $full$ $visibility$ and are connected.
\end{lemma}

\begin{proof}
Based on Lemma ~\ref{OnConvexHullLemma}, $|\CH(V_i)|=n$ and all robots have $full$ $visibility$. Based on Lemma ~\ref{AllOnConvexHullLemma} all robots form a $connected$ configuration. Because of these facts, configuration $G$ is $connected$ and hence the algorithm must terminate.
\end{proof}
}

\subsubsection{Procedure {\tt NotConnected}}

\paragraph{High level idea:} The purpose of this procedure is eventually all robots to form a $connected$ configuration. This procedure gives priority to
components with the smallest size firstly and secondly to components that the distance to their neighbor
component on the right is the smallest distance between any two components. The rightmost robot of the
component with the biggest priority moves to the left of its right neighbor component. If all components
have equal priority (i.e. all components have the same size and the distance between any two components is
the same) then robots start to converge. Robots can start moving only if for any three neighbor robots of the
 component, say $r_l,r_m$ and $r_r$ the vertical distance from line $r_l,r_r$ to $r_m$ is equal or more than $\frac{1}{n}$. We proceed with the complete code of the procedure.
\vspace{3pt}

\hrule\vspace{5pt}
\noindent Procedure: {\tt NotConnected}\vspace{3pt}
\hrule\vspace{5pt}
\noindent {\em Precondition}: state= {\bf Compute.NotConnected}\\
\noindent {\em Effect}:\vspace{-.8em}
\begin{itemize}
\item Consider $c_m$, the center of the left neighbor of $r_i$ and $c_l$ the center of the left neighbor of $r_m$.
\begin{itemize}
\item {\bf If} the distance between the line $c_i$,$c_l$ and the point $c_m$ is $<\frac{1}{n}$, {\bf then}
Consider the cases that $r_i$ is in the middle or the right robot of three neighbor robots. Calculate $x$, the maximum distance that $r_i$ can move vertical to the line $c_i$,$c_l$ and with direction to the inside of the convex hull without causing any two other robots to be on a straight line
 or the distance between any line $c_l$,$c_r$ to $c_m$(any setting of three neighbor robots that involve $r_i$) be less than $\frac{1}{2n}$. $x'$ is the smallest between $\frac{1}{2n}-\epsilon$ and $x$. Consider the line that is vertical to the line $c_i$,$c_l$, starts from $c_i$ and has direction to the inside of the convex hull with an ending point $p$ with distance $x'$ from $c_i$. 
{\bf Return} p.
\end{itemize}
\item {\bf If} $r_i$ belongs to a set of continuous robots that the previous condition is true and the last robot on the right cannot move, {\bf then do} exactly the same step as above but instead of the $r_i$ to be the right robot of three neighboring robots, to be the left with its right neighbors.
\item {\bf If} there exists a configuration of three neighbor robots $c_r$, $c_m$ and $c_l$ that the distance between the line $c_i$,$c_l$ and the point $c_m$ is $<\frac{1}{n}$, {\bf then} {\bf Return} $c_i$.
\item {\bf If} there is a $c_k\in V_i-\{c_i\}$ and a $c_j\in V_i-\{c_i\}$
that their unit disc is tangent with your unit disc, at your left and right respectively,
{\bf then} {\bf return} $c_i$ ($r_i$'s current position).\vspace{-.7em}
\item{\bf If} All robots form one component and $r_i$ does not touch any other robot {\bf then} Consider $c_l$ the center of the left neighbor of $r_i$ and $c_r$ the center of the right neighbor of $r_i$. Draw a line starting from $c_i$, vertical to line $c_l$,$c_r$ with direction to the inside of the convex hull and ending point $p$, $p$ is in distance $\frac{1}{2n}$ from $c_i$. {\bf Return} $p$.
\item {\bf Else} {\bf call} function {\tt In-Largest-Component} with inputs, $V_i$ 
and $c_i$. \vspace{-.7em}
\item {\bf If} function {\tt In-Largest-Component} returns $1$, {\bf then}
{\bf return} $c_i$.\vspace{-.7em}
\item {\bf ElseIf} function {\tt In-Largest-Component} returns $2$, {\bf then}
{\bf call} function {\tt How-Much-Distance} with inputs, $V_i$ 
and $c_i$.\vspace{-.7em}
\begin{itemize}
\item {\bf If} function {\tt How-Much-Distance} returns $1$, {\bf then}
{\bf call} function {\tt Move-to-Point} with inputs, $c_i$ and $c_j$, where
$c_j$ is $c_i$'s right neighbor on the Convex Hull. {\bf Return} the point returned
by {\tt Move-to-Point}.\vspace{-.3em}
\item {\bf ElseIf} function {\tt How-Much-Distance} returns $2$, {\bf then}
{\bf call} function {\tt Connected-Components} with inputs, $V_i$ and $c_i$.
Let $(c_l,c_r)$ be the component that $c_i$ belongs ($c_l$ is the center
of the left-most robot and $c_r$ the center of the right-most robot of the component). 
Draw a straight line between $c_l$ and $c_r$; call it $\overline{AB}$. Draw a parallel
line (wrt $\overline{AB}$) such that it goes through $c_i$; call $c_i$, $C$ and the line 
$\overline{CD}$. Draw a vertical line (wrt $\overline{CD}$) from point $C$ towards the 
inside of Convex Hull with distance $\frac{1}{2n}-\epsilon$. {\bf If} moving to $D$ does not cause 
robot $r_i$ to touch another unit disc in $r_i$'s component, {\tt or}, {\bf if} $c_i$ is $c_l$ or $c_r$, {\bf then} {\bf Return} $D$,
{\bf else} {\bf return} $c_i$ (current position).\vspace{-.3em}
\item {\bf Else} {\bf Return} $c_i$.\vspace{-.7em}
\end{itemize}
\item {\bf ElseIf} function {\tt In-Largest-Component} returns $3$, {\bf then}
{\bf call} function {\tt In-Smallest-Component} with inputs, $V_i$ 
and $c_i$.\vspace{-.7em}
\begin{itemize}
\item {\bf If} function {\tt In-Smallest-Component} returns $1$, {\bf then}
{\bf call} function {\tt Move-to-Point} with inputs, $c_i$ and $c_j$, where
$c_j$ is $c_i$'s right neighbor on the Convex Hull. {\bf Return} the point returned
by {\tt Move-to-Point}.\vspace{-.3em}
\item {\bf ElseIf} function {\tt In-Smallest-Component} returns $2$, {\bf then}
{\bf call} function {\tt How-Much-Distance} with inputs, $V_i$ 
and $c_i$.\vspace{-.3em}
\begin{itemize}
\item {\bf If} function {\tt How-Much-Distance} returns $1$, {\bf then}
{\bf call} function {\tt Move-to-Point} with inputs, $c_i$ and $c_j$, where
$c_j$ is $c_i$'s right neighbor on the Convex Hull. {\bf Return} the point returned
by {\tt Move-to-Point}.\vspace{-.3em}
\item {\bf ElseIf} function {\tt How-Much-Distance} returns $2$, {\bf then}
{\bf call} function {\tt Connected-Components} with inputs, $V_i$ and $c_i$.
Let $(c_l,c_r)$ be the component that $c_i$ belongs ($c_l$ is the center
of the left-most robot and $c_r$ the center of the right-most robot of the component). 
Draw a straight line between $c_l$ and $c_r$; call it $\overline{AB}$. Draw a parallel
line (wrt $\overline{AB}$) such that it goes through $c_i$; call $c_i$, $C$ and the line 
$\overline{CD}$. Draw a vertical line (wrt $\overline{CD}$) from point $C$ towards the 
inside of Convex Hull with distance $\frac{1}{2n}-\epsilon$. {\bf Return} $D$.
\vspace{-.3em}
\item {\bf Else} {\bf Return} $c_i$.\vspace{-.3em}
\end{itemize}
\item {\bf Else} {\bf Return} $c_i$.
\end{itemize}
\end{itemize}
\hrule

\begin{lemma}
\label{NotConnectedLemma}
The point returned by {\tt NotConnected}{\bf (Compute.$\langle$NotConnected$\rangle$)} keeps $V_i$ as a fully visible configuration and $|\CH(V_i)|=n$. 
%not result {\tt G} to be not $fully$ $visible$ and will eventually result all robots to form a $connected$ configuration. 
\end{lemma}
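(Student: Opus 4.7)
The plan is to verify the lemma by case analysis on which branch of Procedure \texttt{NotConnected} supplies the returned point, using as the standing precondition that $|\CH(V_i)|=n$ and that $V_i$ is a fully visible configuration upon entry to the procedure (this follows from Lemmas~\ref{OnConvexHullLemma} and~\ref{AllOnConvexHullLemma}, whose hypotheses are inherited via the state transition from \textbf{Compute.OnConvexHull} to \textbf{Compute.NotConnected}). Throughout, I treat $V_i$ as the static snapshot taken by $r_i$ and analyze the effect of replacing $c_i$ by the returned point $p$ on $\CH(V_i)$ and on full visibility of the modified set.

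First I would dispose of the cheap cases. Every branch that returns $c_i$ (no motion) trivially preserves both invariants. Next, for the ``outward'' perpendicular perturbation branches (the first two bullets addressing near-collinearities), I would point out that the offset $x'$ is explicitly capped by $\min\{\tfrac{1}{2n}-\epsilon,x\}$, where $x$ is the maximal shift that keeps every vertical distance from a chord $\overline{c_\ell c_r}$ to the opposite point of any triple involving $r_i$ at least $\tfrac{1}{2n}$. Consequently (i) no three points of $V_i$ become collinear, so full visibility is maintained via the convex-hull characterization used in Lemma~\ref{OnConvexHullLemma}, and (ii) since $p$ lies strictly on the outward side of the previous hull edge adjacent to $c_i$, the replacement $V_i\setminus\{c_i\}\cup\{p\}$ still has $n$ vertices on its hull, with $p$ inheriting the extreme-point role of $c_i$.

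For the \texttt{Move-to-Point} branches (called with $c_j$ the right convex-hull neighbour and $m=n$), I would invoke the construction of \texttt{Move-to-Point}: the returned $\mu$ lies on the unit circle around $c_j$ at a perpendicular offset of $\tfrac{1}{2n}-\epsilon$ from the segment $\overline{c_ic_j}$, shifted toward the interior of the hull. Because $\mu$ lies on the tangent circle of a hull-neighbour and the offset is strictly smaller than the clearance guaranteed in the current hull (full visibility means no three points are collinear), $\mu$ remains an extreme point of $V_i\setminus\{c_i\}\cup\{\mu\}$ and no new collinearity is introduced; full visibility and $|\CH|=n$ are therefore preserved.

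The main obstacle is the ``converging'' branch in which $r_i$ is sent to the point $D$, a distance $\tfrac{1}{2n}-\epsilon$ perpendicular to $\overline{AB}$ \emph{towards the interior} of the current convex hull. A priori such an inward move could strip $r_i$ of its extreme-point status, dropping $|\CH(V_i)|$ below $n$. The rescue is precisely the first bullet of the procedure: \texttt{NotConnected} refuses to ever reach the inward-move branch unless, for every triple of consecutive hull-neighbours $c_\ell, c_m, c_r$ involving $r_i$, the vertical distance from $c_m$ to the chord $\overline{c_\ell c_r}$ is at least $\tfrac{1}{n}$ (otherwise the earlier outward-perturbation branch fires and returns a different point). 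Hence the strict clearance of $c_i$ above its local hull-chord exceeds the inward shift $\tfrac{1}{2n}-\epsilon$, and $r_i$ after the move is still a strict extreme point of the hull of $V_i$; no triple becomes collinear (the shift is orthogonal to, and smaller than, the pre-existing clearance), so full visibility is preserved. Combining all cases yields the lemma.
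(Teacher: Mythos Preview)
Your approach is essentially the paper's: a case analysis over the branches of Procedure \texttt{NotConnected}, relying on the preconditions inherited from Lemmas~\ref{OnConvexHullLemma} and~\ref{AllOnConvexHullLemma}. Your write-up is in fact more detailed than the paper's own proof, which for several branches simply asserts that ``by close investigation of the code'' the invariants are preserved.

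One factual slip: the first two bullets of \texttt{NotConnected} do \emph{not} move $r_i$ outward; the code sends $r_i$ perpendicular to $\overline{c_ic_l}$ \emph{toward the inside} of the convex hull (the point is to increase the clearance of the middle neighbour $c_m$ from the chord $\overline{c_ic_l}$, which an inward move of the chord endpoint achieves). Your claim (ii) that ``$p$ lies strictly on the outward side'' is therefore wrong as stated. Fortunately your own observation (i) already suffices for both conclusions: the cap $x' \le x$ is defined precisely so that every triple involving $r_i$ retains clearance at least $\tfrac{1}{2n}$ after the move, and that in particular keeps $r_i$ itself a strict extreme point. So the error is cosmetic; just delete the ``outward'' description and let the explicit bound on $x'$ carry both halves of that case.
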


\begin{proof}
Based on Lemma ~\ref{OnConvexHullLemma}, $|\CH(V_i)|=n$ and all robots have full visibility (that is, $V_i$ is a fully visible configuration). 
Based on Lemma ~\ref{AllOnConvexHullLemma}, $V_i$ is not a $connected$ configuration. We now show that in each of the following 6 possible cases,
the point returned keeps $V_i$ as a fully visible configuration:
\begin{enumerate}
\item Robot $r_i$ touches one robot on its left and one robot on its right on the convex hull.

In this case, procedure {\tt NotConnected} will return $r_i$'s current position and hence it will not cause any change.

\item Robot $r_i$ is in the component with the most robots.

In this case, procedure {\tt NotConnected} will return $r_i$'s current position hence it will not cause any change.

\item All the components have the same number of robots and $r_i$ is the rightmost robot of the component that has the smallest distance with its neighbor component, among any distance between any two adjacent components.

In this case, procedure {\tt NotConnected} will call the function {\tt Move-To-Point}, which will return a position on convex hull adjacent to the leftmost robot of the right neighbor component. Hence in this case robot $r_i$'s next position will be on the convex hull, $r_i$ will have full visibility, will not be in the same line with any two other robots and will not cause any other robot not to be on the convex hull, provided that the other robots do not move.

\item All the components have the same number of robots and all the spaces between adjacent components are of the same distance.

In this case, procedure {\tt NotConnected} will cause the components to approach by making small steps. 
By close investigation of the code of this Function, it follows that the point returned in this case, will not cause  $V_i$ to be not $fully$ $visible$ and no three robots will be on the same line, provided that the other robots do not move.

\item All the components have the same number of robots and $r_i$ is not the rightmost robot of the component that has the smallest distance with its neighbor component, among any distance between any two adjacent components, and not all the spaces between adjacent components are of the same distance.

In this case, Procedure {\tt NotConnected} will return $r_i$'s current position hence it will not cause any change.

\item Robot $r_i$ is part of the component that has the smallest number of robots.

In this case, procedure {\tt NotConnected} will call the function {\tt Move-To-Point}, which will return a position on convex hull adjacent to the leftmost robot of the right neighbor component. Hence in this case robot $r_i$'s next position will be on the convex hull, $r_i$ will have full visibility, will not be in the same line with any two other robots and will not cause any other robot not to be on the convex hull, provided that the other robots do not move.\hfill$\blacksquare$ 
%\vspace{-1.8em}
\end{enumerate}
\end{proof}

\subsubsection{Procedure {\tt NotAllOnConvexHull}}

\hrule\vspace{3pt}
\noindent Procedure: {\tt NotAllOnConvexHull}\vspace{3pt}
\hrule\vspace{5pt}
\noindent {\em Precondition}: state= {\bf Compute.NotAllOnConvexHull}\\
\noindent {\em Effect}:\vspace{-.8em}
\begin{itemize}
\item Consider three points $c_l, c_m, c_r\in \CH(V_i)$, where $c_l$ is the left neighbor
and $c_r$ the right neighbor of $c_m$, respectively, on the convex hull, and $c_i$ is one of these 
points.\vspace{-.7em}
\item Draw line segment $\overline{c_lc_r}$. Let $\overline{AB}$ be the line that is vertical 
to $\overline{c_lc_r}$, it goes through $c_l$ and both $\overline{Ac_l}$ and 
$\overline{c_lB}$ have length $\frac{1}{n}$. Line $\overline{CD}$ is defined similarly for $c_r$. 
(See Figure~\ref{fig:morespace} for an example.)\vspace{-.7em}
\item Consider all three cases, that is, $c_i = c_l$, $c_i=c_m$ or $c_i=c_r$. \vspace{-.7em}
\item {\bf If} in {\em any} of the three cases, $c_m$ is in the rectangle $ABCD$, {\bf then} state:= {\bf Compute.OnStraightLine}\vspace{-.7em}
\item [] {\bf Else} state:= {\bf Compute.NotOnStraightLine}\vspace{-.4em} 
\end{itemize}
\hrule

\begin{lemma}
\label{NotAllOnConvexHullLemma}
\sloppy{
{\tt NotAllOnConvexHull}{\bf(Compute.$\langle$NotAllOnConvexHull$\rangle$)} = {\bf Compute.$\langle$OnStraightLine$\rangle$} iff $r_i$ is on the same line with any two other robots that are also on the convex hull.} 
\end{lemma}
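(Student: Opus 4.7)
The plan is to first unpack what rectangle $ABCD$ geometrically tests, and then establish both directions of the iff on top of that observation.

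First I would analyze the construction. Since $\overline{Ac_l}$ and $\overline{c_lB}$ are each perpendicular to $\overline{c_lc_r}$ with length $1/n$, and an analogous perpendicular through $c_r$ determines $C$ and $D$, the rectangle $ABCD$ is exactly the thin strip of width $2/n$ whose central axis is the segment $\overline{c_lc_r}$. Hence a point $p$ lies in $ABCD$ iff its perpendicular distance to the line through $c_l$ and $c_r$ is at most $1/n$ and its orthogonal projection onto $\overline{c_lc_r}$ falls inside that segment. The procedure performs this test for three choices of consecutive hull triples, one per possible role of $c_i$ as the left, middle, or right vertex of the triple.

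For the ``if'' direction, assume $r_i$ is collinear with two other convex-hull robots $r_a, r_b$. By a standard convexity argument, every hull vertex lying (in hull order) between two collinear hull vertices must itself sit on the line joining them; otherwise that intermediate vertex would be strictly outside the chord, forcing $r_i$ (which is between them) into the interior of the convex hull, contradicting its being a hull vertex. In particular, there exist three consecutive hull vertices $c_l, c_m, c_r$ with $c_i \in \{c_l, c_m, c_r\}$ that are exactly collinear, and $c_m$ lies on segment $\overline{c_lc_r}$ and thus strictly inside $ABCD$. One of the three cases the procedure examines witnesses this, so the state is set to \textbf{Compute.OnStraightLine}.

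For the ``only if'' direction, assume the procedure sets the state to \textbf{Compute.OnStraightLine}; then for some consecutive triple $c_l, c_m, c_r$ with $c_i \in \{c_l, c_m, c_r\}$ the middle point $c_m$ lies in $ABCD$. Because $c_l, c_m, c_r$ are consecutive hull vertices, $c_m$ cannot lie on the interior side of the line through $c_l$ and $c_r$ (else it would be inside the hull of $V_i$, contradicting its being a hull vertex). Combined with $c_m$ being within perpendicular distance $1/n$ of that line and projecting into $\overline{c_lc_r}$, we conclude $c_m$ lies on (or within the algorithm's built-in collinearity tolerance of) the line through $c_l, c_r$, so $r_i$ is collinear with two other hull robots.

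The hard part, I expect, is reconciling the $1/n$ slack baked into rectangle $ABCD$ with the unqualified phrase ``on the same line'' in the statement. The intended reading is that the algorithm treats collinearity modulo this tolerance (the same $1/n$ scale used, e.g., in Function \texttt{Find-Points} to preserve full visibility after moves). Once that interpretation is pinned down, the remaining technicalities are boundary cases, such as when $c_i$ sits at the end of a maximal collinear chain of hull vertices or when the hull has very few vertices; each of these is absorbed cleanly into one of the three consecutive-triple cases the procedure enumerates.
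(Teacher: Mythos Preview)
Your proposal is correct and follows the same underlying idea as the paper, namely that the rectangle test in Procedure {\tt NotAllOnConvexHull} is a collinearity check on consecutive hull triples containing $c_i$. The paper's own proof, however, is essentially two lines: it cites Lemmas~\ref{StartLemma} and~\ref{OnConvexHullLemma} for the preconditions and then simply asserts that the procedure ``uses simple geometric calculations to determine if $r_i$ is on straight line with its neighbor robots on $\CH(V_i)$,'' with no further justification. Your unpacking of $ABCD$ as the width-$2/n$ strip about $\overline{c_lc_r}$, the convexity argument reducing an arbitrary collinear hull pair through $r_i$ to a consecutive triple containing $c_i$, and the observation that a consecutive middle vertex cannot lie strictly on the interior side of its neighbors' chord are all sound and go well beyond what the paper supplies. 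Your flag about the $1/n$ slack is also apt: the paper silently adopts exactly the tolerant reading you propose, treating the rectangle test as its operational definition of ``on the same line'' (the same $1/n$ scale reappears in Procedure {\tt SeeTwoRobot} and Function {\tt Find-Points}).
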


\begin{proof}
Based on Lemma ~\ref{StartLemma}, robot $r_i \in \CH(V_i)$. Based on Lemma ~\ref{OnConvexHullLemma}, $|V_i|\not=n$, or $|\CH(V_i)|<n$ or not all robots have full visibility.
Procedure {\tt NotAllOnConvexHull} uses simple geometric calculations to determine if $r_i$ is on straight line with its neighbor robots on $\CH(V_i)$.
\hfill$\blacksquare$
\end{proof}

\subsubsection{Procedure {\tt NotOnStraightLine}}
\hrule\vspace{3pt}
\noindent Procedure: {\tt NotOnStraightLine}\vspace{3pt}
\hrule\vspace{5pt}
\noindent {\em Precondition}: state= {\bf Compute.NotOnStraightLine}\\
\noindent {\em Effect}:\vspace{-.8em}
\begin{itemize}
\item {\bf If} $|\CH(V_i)|=n$ {\bf then} state:= {\bf Compute.SpaceForMore} and {\bf Return}\vspace{-.7em}
\item{\bf If} $|V_i|=n$ {\bf then}
\begin{itemize}
\item {\bf Check} whether there exists a side on the Convex Hull with length of at least 2.  \vspace{-.7em}
\item {\bf If} there exists {\bf then} state:= {\bf Compute.SpaceForMore}\vspace{-.7em}
\item[] {\bf Else} state:= {\bf Compute.NoSpaceForMore}\vspace{-.4em}
\end{itemize}
\item {\bf Else}
\begin{itemize}
\item $\forall c_j \in \CH(V_i)$ {\bf Copy} $c_j$ in a new set named $onCH2$
\item $\forall c_j \notin \CH(V_i)$ draw a straight line from $r_i$ that has an ending point $x$, $x\in\CH$ and $c_j$ is on that line.
\item {\bf Copy} $x$ on $onCH2$
\item {\bf Check} whether there exists a side on $onCH2$ with length of at least 2. 
\item {\bf If} there exists {\bf then} state:= {\bf Compute.SpaceForMore}
\item[] {\bf Else} state:= {\bf Compute.NoSpaceForMore}
\end{itemize}
\end{itemize}
\hrule

\begin{lemma}
\label{NotOnStraightLineLemma}
{\tt NotOnStraightLine}{\bf (Compute.$\langle$NotOnStraightLine$\rangle$)} = {\bf Compute.$\langle$SpaceForMore$\rangle$} {\em iff} $|\CH(V_i)|=n$ or there exist a space, for at least one robot, between any two adjacent robots that are on the convex hull. 
\end{lemma}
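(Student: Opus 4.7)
Since the precondition of the procedure is state $=$ \textbf{Compute.NotOnStraightLine}, Lemma~\ref{NotAllOnConvexHullLemma} together with Lemma~\ref{StartLemma} already tells us that $c_i\in \CH(V_i)$, that $r_i$ is not collinear with its two CH-neighbors, and that we are in the regime where either $|V_i|<n$, or $|\CH(V_i)|<n$, or not all robots have full visibility. The plan is to split along the three mutually exclusive branches of the procedure: (a)~$|\CH(V_i)|=n$, (b)~$|V_i|=n$ and $|\CH(V_i)|<n$, (c)~$|V_i|<n$, and prove the equivalence branch by branch, using the fact that robots are unit discs (hence diameter $2$) so ``space for one more robot between two adjacent CH robots'' is exactly equivalent to ``CH side of length $\geq 2$''.

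\textbf{Forward direction.} Suppose the procedure sets state to \textbf{Compute.SpaceForMore}. In branch~(a), $|\CH(V_i)|=n$ holds by the guard, which is the first disjunct of the statement. In branch~(b), the guard found a side of $\CH(V_i)$ of length at least $2$; since $|V_i|=n$, the local hull coincides with the global one, so this side directly witnesses the second disjunct. In branch~(c), the algorithm constructs the augmented set $onCH2$ by copying every visible CH point and, for each invisible robot $r_j$, adding the intersection $x$ of the line from $r_i$ through $c_j$ with $\CH(V_i)$. I would argue that every such $x$ lies on an edge of $\CH(V_i)$ between two adjacent visible hull points and that a side of $onCH2$ of length $\geq 2$ is contained in some side of the true convex hull of length $\geq 2$, yielding the second disjunct.

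\textbf{Reverse direction.} Assume the hypothesis of the lemma. If $|\CH(V_i)|=n$, branch~(a) fires immediately and sets state to \textbf{Compute.SpaceForMore}. Otherwise there exist two adjacent CH robots with a free gap for a unit disc between them, i.e.\ an edge of length $\geq 2$. If $|V_i|=n$, branch~(b) inspects exactly these edges and transitions to \textbf{Compute.SpaceForMore}. If $|V_i|<n$, branch~(c) inspects $onCH2$; here the main technical step is to verify that the length-$\geq 2$ CH edge witnessed by the hypothesis survives as a length-$\geq 2$ edge of $onCH2$, so that the test succeeds. I would establish this by checking that inserting a projected point $x$ between two consecutive visible CH points $c_l,c_r$ only subdivides the segment $\overline{c_lc_r}$ and never artificially shortens a segment outside that arc, using chirality and the fact that projections are taken along rays emanating from $r_i$ which itself lies on $\CH(V_i)$.

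\textbf{Main obstacle.} The delicate step is case~(c) in both directions: one must argue that the $onCH2$ construction is a faithful proxy for detecting long free sides of the true (global) convex hull even though $r_i$ cannot see all robots. This requires showing that the mapping $c_j \mapsto x$ for invisible robots places $x$ on the same hull edge that $r_j$ would, if projected onto $\CH(V_i)$, so that gap-length tests on $onCH2$ agree with gap-length tests on the true CH relative to the visible points. A clean way to carry out this step is to observe that, because $c_i\in\CH(V_i)$ and the robots are opaque unit discs, any invisible robot is geometrically ``hidden'' behind some visible CH edge, and the ray-intersection defining $x$ lands on precisely that edge; with this in hand the iff follows by a straightforward length comparison.
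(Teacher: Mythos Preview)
Your overall decomposition into branches (a), (b), (c) tracks the code of {\tt NotOnStraightLine} more closely than the paper's own proof, which is quite cursory: the paper simply splits on whether $|\CH(V_i)|=n$ and, when not, on whether a side of length at least $2$ exists, declaring the detection to be ``simple calculations'' without ever distinguishing the $|V_i|=n$ and $|V_i|<n$ sub-branches or analyzing the $onCH2$ construction at all. So in spirit you are attempting a strictly more careful argument than what the paper provides.

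There is, however, a genuine misreading in your branch~(c) that invalidates your ``main obstacle'' analysis. You write that in the Else branch the algorithm, ``for each invisible robot $r_j$,'' projects $c_j$ onto $\CH(V_i)$ via the ray from $r_i$. That is not what the code does, and indeed cannot be: if $r_j$ is invisible then $c_j\notin V_i$ and $r_i$ has no access to $c_j$ to draw a ray through it. The quantifier in the code is ``$\forall c_j\notin \CH(V_i)$'', which in context ranges over centers $c_j\in V_i$ that are \emph{visible but not on the local hull}, i.e.\ interior points of $\con(V_i)$. The projection therefore subdivides hull edges of $\CH(V_i)$ by the radial shadows of visible interior robots, not by hypothetical positions of invisible ones.

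This changes the content of both directions of your argument in case~(c). Your forward-direction claim that ``a side of $onCH2$ of length $\geq 2$ is contained in some side of the true convex hull of length $\geq 2$'' and your reverse-direction claim that a long true-hull edge ``survives as a length-$\geq 2$ edge of $onCH2$'' are both argued from the wrong picture (rays through hidden robots rather than through visible interior robots), so the supporting geometry you propose (``any invisible robot is hidden behind some visible CH edge, and the ray-intersection defining $x$ lands on precisely that edge'') is beside the point. To repair the argument you would instead need to relate sides of $onCH2$ to sides of $\CH(V_i)$ after subdivision by the projections of \emph{visible} interior points, and then argue why that test is the right proxy for ``space for one more robot'' when $|V_i|<n$. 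The paper sidesteps all of this by simply asserting the check is a simple calculation; if you want a rigorous iff in branch~(c), that is the step you must redo with the correct reading of the code.
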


\begin{proof}
Based on Lemma ~\ref{StartLemma}, $r_i \in \CH$. Based on Lemma ~\ref{OnConvexHullLemma}, $|V_i|\not=n$, or $|\CH(V_i)|<n$ or not all robots have full visibility. Based on Lemma ~\ref{NotAllOnConvexHullLemma}, $r_i$ is not on a straight line with any two other robots that $\in \CH$.
There are three possible cases:
\begin{enumerate}
\item $|\CH(V_i)|=n$.
In this case, robot $r_i$ moves to the state {\bf Compute.SpaceForMore}, because it is not necessary to create extra space on the convex hull for more robots, because all $n$ robots are already on convex hull.
\item $|\CH(V_i)|<n$ and there exist a space, for at least one robot, on the convex hull.

In this case, procedure {\tt NotOnStraightLine} uses simple calculations to determine if there exist a space for at least one robot on the convex hull. It calculates the distance between adjacent robots on the convex hull and if there exist at least two adjacent points that have more than 2 distance, the correct state to move is {\bf Compute.SpaceForMore}.
\item $|\CH(V_i)|<n$ and no space, for at least one robot, on the convex hull exists.

In this case procedure {\tt NotOnStraightLinel} uses simple calculations to determine if there exist a space for at least one robot on the convex hull. It calculates the distance between adjacent robots on the convex hull and if no such space exists, the correct state to move 
is {\bf Compute.SpaceForMore}.\hfill$\blacksquare$
\end{enumerate}
\end{proof}

\subsubsection{Procedure {\tt SpaceForMore}}
\hrule\vspace{3pt}
\noindent Procedure: {\tt SpaceForMore}\vspace{3pt}
\hrule\vspace{5pt}
\noindent {\em Precondition}: state= {\bf Compute.SpaceForMore}\\
\noindent {\em Effect}:\vspace{-.8em}
\begin{itemize}
\item{\bf If} $r_i$ is tangent with a robot $r_j$, $r_j \in\CH(V_i)$  that they are not adjacent on $\CH(V_i)$ {\bf then} consider $c_l$ and $c_r$, the centers of $r_i$s left and right neighbor on the convex hull respectively. Draw a straight line starting from $c_i$, vertical to the line $c_l$,$c_r$, with direction outside of the convex hull and at distance $\frac{1}{2n}-\epsilon$. The ending point of this line is $p$. {\bf Return} $p$.
\item {\bf Else Return} $c_i$
\end{itemize}

\hrule

\paragraph {} The reason that $p$ is outside of the convex hull by a distance $\frac{1}{2n}-\epsilon$ is because if two robots are not adjacent on the convex hull and are touching, it means that it is possible to obstruct other robots from seeing each-other.

\begin{lemma}
\label{SpaceForMoreLemma}
{\tt SpaceForMore}{\bf (Compute.$\langle$SpaceForMore$\rangle$)} = $c_i$ iff $r_i$ is not  tangent with any robot $r_j$, $r_j \in\CH(V_i)$  that they are not adjacent on $\CH(V_i)$, {\bf else} {\tt SpaceForMore}{\bf (Compute.$\langle$SpaceForMore$\rangle$)} = $p$  were $p$ is at distance $\frac{1}{2n}-\epsilon$ outside of $\CH(V_i)$.
\end{lemma}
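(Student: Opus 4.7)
The plan is to prove this lemma essentially by a direct read of the code of procedure {\tt SpaceForMore}, splitting on the single conditional that the code contains. First I would invoke the preconditions carried into this state from the earlier lemmas in the chain (Lemmas~\ref{StartLemma}, \ref{OnConvexHullLemma}, \ref{NotAllOnConvexHullLemma}, and \ref{NotOnStraightLineLemma}), so that we may assume $c_i \in \CH(V_i)$, that $r_i$ has well-defined left and right neighbors $c_l, c_r$ on the convex hull, and that $r_i$ is not collinear with them. These preconditions guarantee that the geometric construction used in the procedure is well-defined.

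Next I would split into the two cases dictated by the {\bf If} statement of the procedure. In the first case, $r_i$ is not tangent with any $r_j \in \CH(V_i)$ that is non-adjacent to $r_i$ on $\CH(V_i)$. The guard of the {\bf If} fails, so control falls through to the {\bf Else} branch, which returns $c_i$; this matches the first half of the biconditional. In the second case, there is such a non-adjacent $r_j$ on $\CH(V_i)$ tangent to $r_i$; then the {\bf If} branch executes, and the code explicitly constructs $p$ as the endpoint of the segment starting at $c_i$, perpendicular to the chord $\overline{c_l c_r}$, directed to the exterior of the convex hull, of length $\frac{1}{2n}-\epsilon$. By construction, $p$ sits outside $\CH(V_i)$ at distance exactly $\frac{1}{2n}-\epsilon$ from $c_i$, which matches the second half of the biconditional.

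The only non-routine step is justifying that the direction chosen in the code truly points to the exterior of the convex hull. This follows because $c_i$ is a vertex of the convex hull lying between neighbors $c_l$ and $c_r$: the perpendicular to $\overline{c_l c_r}$ through $c_i$ has two sides, and by basic properties of convex polygons the side that does not contain the interior of the hull is well-defined and non-empty. The procedure names this side explicitly as ``outside of the convex hull''.

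The main obstacle, therefore, is not depth but bookkeeping: ensuring that the converse direction of the iff also holds, i.e.\ that the procedure cannot accidentally return $c_i$ when the tangency condition is met (it cannot, since the {\bf If} guard is the sole control branch) and cannot return $p$ otherwise (it cannot, for the same reason). Hence both directions of the biconditional follow immediately from the structure of the code, completing the proof.
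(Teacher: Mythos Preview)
Your proposal is correct and follows essentially the same approach as the paper: invoke the chain of preconditions from Lemmas~\ref{StartLemma}, \ref{OnConvexHullLemma}, \ref{NotAllOnConvexHullLemma}, and \ref{NotOnStraightLineLemma}, then read the result directly off the single conditional in the code of {\tt SpaceForMore}. In fact your write-up is more careful than the paper's, which simply states the two cases without your additional justification that the ``outside'' direction is well-defined.
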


\begin{proof}
Based on Lemma ~\ref{StartLemma}, $r_i \in \CH(V_i)$. Based on Lemma ~\ref{OnConvexHullLemma}, $|V_i|\not=n$, or $|\CH(V_i)|<n$ or not all robots have full visibility. Based on Lemma ~\ref{NotAllOnConvexHullLemma}, $r_i$ is not on a straight line with any two other robots that $\in \CH(V_i)$. Based on Lemma ~\ref{NotOnStraightLineLemma}, there exist a space for at least one robot on $\CH(V_i)$. There are are two cases:

 If $r_i$ is not  tangent with any robot $r_j$ such that $r_j \in\CH(V_i)$  and $r_j$ is not adjacent to $r_i$ on $\CH(V_i)$, then  procedure {\tt SpaceForMore} returns $c_i$, else it returns a point $p$ at distance $\frac{1}{2n}-\epsilon$ outside of $\CH(V_i)$.

\end{proof}

\hfill$\blacksquare$

\subsubsection{Procedure {\tt NoSpaceForMore}}
\hrule\vspace{3pt}
\noindent Procedure: {\tt NoSpaceForMore}\vspace{3pt}
\hrule\vspace{5pt}
\noindent {\em Precondition}: state= {\bf Compute.NoSpaceForMore}\\
\noindent {\em Effect}:\vspace{-.8em}
\begin{itemize}
\item Let $c_l$ be the center of $c_i$'s left neighbor 
and let $c_r$ be the center of $c_i$'s right neighbor on 
the Convex Hull. Draw a straight line between $c_l$ and
$c_r$; call it $\overline{AB}$.\vspace{-.7em} 
\item Let $m$ be the center of line $\overline{AB}$. \vspace{-.7em}
\item Draw a vertical line (wrt $\overline{AB}$) starting from $m$ and
ending at distance $\frac{1}{2n} - \epsilon$ away from the
Convex Hull; call the ending point, $p$.\vspace{-.7em}
\item Calculate the point on the line between $m$ and $p$, which $r_i$ can move to maximum distance from $m$ without causing $\CH(V_i)$ to change; call this point $p'$.
\item {\bf Return} $p'$.\vspace{-.4em} 
\end{itemize}
\hrule

\begin{lemma}
\label{NoSpaceForMoreLemma}
{\tt NoSpaceForMore}{\bf (Compute.$\langle$NoSpaceForMore$\rangle$)} = $p$ were $p$ is at distance $\frac{1}{2n}- \epsilon$ outside of $\CH(V_i)$.
\end{lemma}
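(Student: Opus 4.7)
The plan is to verify the claim by tracing the geometric construction in the body of procedure {\tt NoSpaceForMore} and confirming that the returned point lies on the outward perpendicular bisector of the chord connecting $r_i$'s two convex-hull neighbors, within the prescribed distance bound.

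First I would invoke the preconditions accumulated from the preceding lemmas. By Lemma~\ref{StartLemma}, $c_i\in\CH(V_i)$, so $r_i$ has well-defined left and right neighbors on the convex hull with centers $c_l$ and $c_r$. By Lemma~\ref{NotAllOnConvexHullLemma}, $c_i$ is not collinear with $c_l$ and $c_r$, so the chord $\overline{AB}=\overline{c_lc_r}$ is non-degenerate, its midpoint $m$ is well defined, and the perpendicular direction from $m$ pointing away from the interior of $\CH(V_i)$ is unambiguous. These conditions are precisely what the code assumes.

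Next I would trace the construction. The code extends a perpendicular segment from $m$, outward with respect to $\CH(V_i)$, up to a point $p$ at distance exactly $\frac{1}{2n}-\epsilon$ from $m$. Since $\overline{AB}$ lies on the boundary of $\CH(V_i)$ at $c_l,c_r$ and the hull is convex, every point of $\overline{mp}$ other than $m$ itself sits strictly outside $\CH(V_i)$, at perpendicular distance at most $\frac{1}{2n}-\epsilon$ from the hull. The returned point $p'$ is selected on $\overline{mp}$ as the point farthest from $m$ for which relocating $r_i$ to $p'$ does not alter $\CH(V_i)$; such a $p'$ exists because $p'=m$ trivially satisfies the constraint, and the admissible set is closed since ``hull unchanged'' reduces to finitely many half-plane inequalities against the other hull edges, which vary continuously with $p'$. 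Consequently $p'\in\overline{mp}$, so $p'$ is outside $\CH(V_i)$ at distance at most $\frac{1}{2n}-\epsilon$, which is exactly the property asserted by the lemma.

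The only real obstacle is a matter of precision rather than substance: the lemma as written reads as if the returned point is at distance exactly $\frac{1}{2n}-\epsilon$ outside of $\CH(V_i)$, but strictly the construction only guarantees an upper bound of $\frac{1}{2n}-\epsilon$, attained whenever no other hull edge constrains $p'$ before reaching $p$. I would state this reading explicitly in the proof, since it is the upper bound (and not equality) that the subsequent correctness arguments for the algorithm actually need.
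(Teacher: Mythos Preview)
Your proposal is correct and follows the same approach as the paper: invoke the accumulated preconditions from the earlier lemmas and then read off the result from the body of the procedure. The paper's own proof is in fact considerably terser---after listing the preconditions from Lemmas~\ref{StartLemma}, \ref{OnConvexHullLemma}, \ref{NotAllOnConvexHullLemma}, and \ref{NotOnStraightLineLemma}, it simply states that the claim follows from the code---so your trace of the geometric construction and your observation about $p$ versus $p'$ (exact distance versus the upper bound $\frac{1}{2n}-\epsilon$) go beyond what the paper supplies, and the caveat you flag is a genuine imprecision in the lemma's wording that the paper does not address.
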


\begin{proof}
Based on Lemma ~\ref{StartLemma}, $r_i \in \CH$. Based on Lemma ~\ref{OnConvexHullLemma}, $|V_i|\not=n$, or $|\CH(V_i)|<n$ or not all robots have full visibility. Based on Lemma ~\ref{NotAllOnConvexHullLemma}, $r_i$ is not on a straight line with any two other robots that $\in \CH$. Based on Lemma ~\ref{NotOnStraightLineLemma}, no space exists for at least one robot on $\CH(V_i)$.

The correctness of the Lemma follows from the code of the procedure.
\hfill$\blacksquare$
\end{proof}

\subsubsection{Procedure {\tt OnStraightLine}}
\hrule\vspace{3pt}
\noindent Procedure: {\tt OnStraightLine}\vspace{3pt}
\hrule\vspace{5pt}
\noindent {\em Precondition}: state= {\bf Compute.OnStraightLine}\\
\noindent {\em Effect}:\vspace{-.8em}
\begin{itemize}
\item Consider the same setting as in procedure {\tt NotAllOnConvexHull}.\vspace{-.7em}
\item {\bf If} for one of the cases $c_m$ is in the rectangle $ABCD$ and holds that $c_m=c_i$, {\bf then} state:= {\bf Compute.SeeTwoRobot}\vspace{-.7em}
\item [] {\bf Else} state:= {\bf Compute.SeeOneRobot}\vspace{-.4em} 
\end{itemize}
\hrule
\begin{lemma}
\label{OnStraightLineLemma}
{\tt OnStraightLine}{\bf (Compute.$\langle$OnStraightLine$\rangle$)} = {\bf Compute.$\langle$SeeTwoRobots$\rangle$} iff $r_i$ is on the same line
with two robots on the convex hull, its left neighbor $r_l$, and its right neighbor $r_r$. 
\end{lemma}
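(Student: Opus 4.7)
The plan is to reduce the claim to the geometric characterization already established in Lemma~\ref{NotAllOnConvexHullLemma}, and then carefully distinguish which of the three possible roles $r_i$ plays in the collinear triple of convex-hull robots. By that earlier lemma we may assume we are indeed in state {\bf Compute.OnStraightLine}, i.e., there exist $c_l, c_m, c_r \in \CH(V_i)$ with $c_l$ the left neighbor of $c_m$ and $c_r$ the right neighbor of $c_m$ on the convex hull, such that $c_m$ lies inside the thin rectangle $ABCD$ of width $2/n$ erected over the segment $\overline{c_l c_r}$, and such that $c_i$ equals one of $c_l, c_m, c_r$. The procedure then tests precisely whether the ``middle'' slot of this triple is played by $c_i$ itself.

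For the forward direction, suppose the procedure returns {\bf Compute.SeeTwoRobot}. By the code, this happens exactly when there is a choice of the triple $(c_l, c_m, c_r)$ for which $c_m$ lies in $ABCD$ \emph{and} $c_m = c_i$. I would then argue that this is the geometric statement ``$c_i$ has a left neighbor $c_l$ and a right neighbor $c_r$ on the convex hull such that $c_l, c_i, c_r$ are essentially collinear'', which is exactly the condition of the lemma. The fact that being inside $ABCD$ forces collinearity (up to the slack $1/n$ which corresponds to tangencies of the unit discs, not to true deviation from the line) was already used to identify {\bf OnStraightLine} configurations, so I would simply invoke that geometric observation.

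For the reverse direction, suppose $r_i$ lies on a line with its left neighbor $r_l$ and its right neighbor $r_r$ on the convex hull. Then the triple $(c_l, c_i, c_r) = (c_l, c_m, c_r)$ with $c_m := c_i$ satisfies both the collinearity (so $c_m$ lies in $ABCD$) and the identification $c_m = c_i$, so the \textbf{If} branch fires and the procedure sets the state to {\bf Compute.SeeTwoRobot}. Conversely, if $r_i$ were an endpoint of the collinear triple (i.e., $c_i = c_l$ or $c_i = c_r$), then in any triple witnessing the {\bf OnStraightLine} precondition, the middle point $c_m$ is some other robot lying strictly between $c_i$ and the opposite endpoint on the line; then $c_m \ne c_i$, the \textbf{If} branch fails in that case, and no other triple can make $c_m = c_i$ without contradicting the assumed position of $r_i$ as an endpoint.

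The only delicate step is the ``essentially collinear'' part: the rectangle $ABCD$ has nonzero width $2/n$, so membership of $c_m$ in $ABCD$ is slightly weaker than true collinearity. I expect this to be the main obstacle. I would handle it by noting (as implicitly done in Lemma~\ref{NotAllOnConvexHullLemma}) that since all three of $c_l, c_m, c_r$ lie on the convex hull and $c_l, c_r$ are separated by $c_m$ on that hull, the only way $c_m$ can be within $1/n$ of the chord $\overline{c_l c_r}$ without perturbing the hull is that $c_m$ lies on (or arbitrarily close to) the line through $c_l$ and $c_r$; in the limiting exact-collinearity setting used throughout the algorithm's analysis, this coincides with true collinearity, which is what the lemma refers to.
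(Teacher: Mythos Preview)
Your proposal is correct and follows essentially the same approach as the paper: invoke the chain of prior lemmas (\ref{StartLemma}, \ref{OnConvexHullLemma}, \ref{NotAllOnConvexHullLemma}) to establish the precondition, then read off from the code of {\tt OnStraightLine} that the branch to {\bf Compute.SeeTwoRobot} fires exactly when $c_m = c_i$ in the witnessing triple, i.e., when $r_i$ is the middle robot between its two convex-hull neighbors. The paper's own proof is a two-sentence version of this; your forward/reverse case split and your explicit treatment of the rectangle-width slack are more careful than what the paper provides, which simply says the procedure ``uses simple geometric calculations to determine if robot $r_i$ is in the middle of $c_l$ and $c_r$'' and does not address the $ABCD$-versus-true-collinearity issue at all.
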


\begin{proof}
Based on Lemma ~\ref{StartLemma}, robot $r_i \in \CH(V_i)$. Based on Lemma ~\ref{OnConvexHullLemma}, $|V_i|\not=n$, or $|\CH(V_i)|<n$ or not all robots have full visibility. Based on Lemma ~\ref{NotAllOnConvexHullLemma}, $r_i$ is on straight line with its neighbor robots on the convex hull.
Procedure {\tt OnStraightLine} uses simple geometric calculations %based on figure \ref{fig:morespace} 
to determine if robot $r_i$ is in the middle of $c_l$ and $c_r$.
\hfill$\blacksquare$
\end{proof}

\subsubsection{Procedure {\tt SeeOneRobot}}
\hrule\vspace{3pt}
\noindent Procedure: {\tt SeeOneRobot}\vspace{3pt}
\hrule\vspace{5pt}
\noindent {\em Precondition}: state= {\bf Compute.SeeOneRobot}\\
\noindent {\em Effect}:\vspace{-.8em}
\begin{itemize}
\item {\bf Return} $c_i$ (current position).\vspace{-.4em} 
\end{itemize}
\hrule\vspace{.7em}

Then, trivially:

\begin{lemma}
\label{SeeOneRobotLemma}
{\tt SeeOneRobot}{\bf (Compute.$\langle$SeeOneRobot$\rangle$)} = $c_i$.
\end{lemma}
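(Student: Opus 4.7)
The plan is to prove this lemma by direct inspection of the procedure's code, since the claim is essentially a syntactic reading of a one-line procedure body. The procedure {\tt SeeOneRobot} has a single effect clause, namely ``{\bf Return} $c_i$,'' so there is no branching and no auxiliary function call whose behavior must be analyzed; the returned value is fixed regardless of the geometric configuration.

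Concretely, I would first recall the precondition state= {\bf Compute.SeeOneRobot}, which by the state transitions established in Lemmas \ref{NotAllOnConvexHullLemma} and \ref{OnStraightLineLemma} guarantees that $r_i$ lies on the convex hull, collinear with two other hull robots, and can see only one of them. These context facts are not actually needed for the return value, but noting them makes clear that the procedure is invoked in a legitimate state. Then I would simply observe that the only executable statement in the {\em Effect} section returns $c_i$, the current position of $r_i$, so the output of {\tt SeeOneRobot} is $c_i$ by construction.

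Since there is no nontrivial computation happening, no obstacles arise. If a detail is worth spelling out, it is merely that the algorithm does not modify $c_i$ between reading it and returning it, which is immediate from the absence of any assignment to $c_i$ in the procedure body. Thus the proof reduces to a one-sentence appeal to the code, matching the author's remark that the lemma follows ``trivially.''
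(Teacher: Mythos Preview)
Your proposal is correct and matches the paper's approach exactly: the paper offers no proof beyond the word ``trivially,'' and your argument that the procedure's single effect clause unconditionally returns $c_i$ is precisely the intended justification. The contextual remarks you include about the precondition are accurate but, as you note, unnecessary for the conclusion.
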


%\begin{proof}
%Trivial
%\end{proof}
%\hfill$\blacksquare$

\subsubsection{Procedure {\tt SeeTwoRobot}}
\hrule\vspace{3pt}
\noindent Procedure: {\tt SeeTwoRobot}\vspace{3pt}
\hrule\vspace{5pt}
\noindent {\em Precondition}: state= {\bf Compute.SeeTwoRobot}\\
\noindent {\em Effect}:\vspace{-.8em}
\begin{itemize}
\item Consider the same setting as in procedure {\tt NotAllOnCOnvexHull} and $c_m=c_i$.\vspace{-.7em} 
\item Consider the line segment that is vertical to 
line $\overline{c_lc_r}$, it starts from $c_i$ with direction outside of the convex hull (if this is not possible to determine choose a random direction) 
and ends at distance $\frac{1}{2n}-\epsilon$ from $c_i$. Call the ending point, $p$. 
Consider the line segment that is vertical to line $\overline{c_lc_r}$, it starts from line $\overline{c_lc_r}$ and ends at distance $\frac{1}{n}$ from line $\overline{c_lc_r}$. Call the ending point, $p'$, such that $c_i$ is in the same line with and between $p'$ and  line $\overline{c_lc_r}$. \vspace{-.7em} 
\item {\bf Return} the point that is nearest to $c_i$, between $p$ and $p'$.\vspace{-.4em} 
\end{itemize}
\hrule

\begin{lemma}
\label{SeeTwoRobotLemma}
The point $p$ returned by {\tt SeeTwoRobot}{\bf (Compute.$\langle$SeeTwoRobot$\rangle$)} is such that if robot $r_i$ moves there
($c_i$ is on $p$), then $r_i$
will no longer be in a straight line with it's two adjacent robots on the convex hull, provided that the other robots do not move. 
\end{lemma}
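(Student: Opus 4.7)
The plan is to read the non-collinearity off directly from the construction of the returned point. Tracing the preconditions through Lemmas~\ref{StartLemma}, \ref{OnConvexHullLemma}, \ref{NotAllOnConvexHullLemma} and \ref{OnStraightLineLemma}, we have that $r_i$ is on the convex hull, its two adjacent hull-neighbours $r_l, r_r$ are distinct from $r_i$, and the three centres $c_l, c_i, c_r$ are collinear with $c_i$ between $c_l$ and $c_r$. Let $\ell$ denote the line through $c_l$ and $c_r$ (which also contains $c_i$). The target of the proof is to show that after the move the new $c_i$ leaves $\ell$, since by hypothesis $c_l$ and $c_r$ remain fixed.

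First, I would observe that by construction both candidate points produced inside procedure \texttt{SeeTwoRobot} lie on the perpendicular to $\ell$ through $c_i$, and each is at strictly positive perpendicular distance from $\ell$: $p$ at distance $\frac{1}{2n}-\epsilon$ (in the outward-from-the-hull direction) and $p'$ at distance $\frac{1}{n}$ (on the same perpendicular line, on the side dictated by the ``between'' clause). Second, the procedure returns whichever of $p, p'$ is nearer to $c_i$; call this point $q$. Since both $p$ and $p'$ have strictly positive perpendicular offset from $\ell$, so does $q$, and therefore $q \notin \ell$. Finally, since $c_l$ and $c_r$ do not move, the line through their centres is still $\ell$, while the new position of $c_i$ is $q \notin \ell$; hence $c_l, c_i, c_r$ are no longer collinear, which is exactly the claim.

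The main obstacle is parsing the geometric description of $p'$ in the procedure, which is stated somewhat awkwardly: one must confirm that ``the perpendicular segment starts from line $\overline{c_lc_r}$ and ends at distance $\frac{1}{n}$'' together with ``$c_i$ is in the same line with and between $p'$ and line $\overline{c_lc_r}$'' really forces the foot of the perpendicular to coincide with $c_i$, so that $p'$ lies on the perpendicular through $c_i$ and hence contributes a nonzero perpendicular displacement. A minor secondary subtlety is the degenerate case where the ``outside of the convex hull'' direction is not uniquely defined (when all three neighbours are collinear), which the procedure resolves by an arbitrary choice; this does not affect the argument since both orientations yield positive perpendicular offset from $\ell$.
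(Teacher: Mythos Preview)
Your proposal is correct and follows essentially the same approach as the paper: trace the chain of preconditions through Lemmas~\ref{StartLemma}, \ref{OnConvexHullLemma}, \ref{NotAllOnConvexHullLemma}, \ref{OnStraightLineLemma}, and then read the non-collinearity off the geometric construction in the procedure. If anything, you are more careful than the paper's own argument: the paper's proof only mentions the displacement $\frac{1}{2n}-\epsilon$ (the point $p$) and does not explicitly treat the alternative $p'$, whereas you correctly handle both candidates and note that whichever is returned has strictly positive perpendicular offset from the line through $c_l,c_r$.

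One small imprecision worth noting: the precondition coming from procedure \texttt{NotAllOnConvexHull} does not guarantee that $c_l, c_i, c_r$ are \emph{exactly} collinear, only that $c_i$ lies in the rectangle $ABCD$ of half-width $\frac{1}{n}$ about $\overline{c_lc_r}$. This does not damage your argument, since your conclusion (the returned point is off the line $\ell$ through $c_l,c_r$) does not depend on where $c_i$ started, but you may wish to adjust the wording of your first paragraph accordingly.
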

\begin{proof}
Based on Lemma ~\ref{StartLemma}, robot $r_i \in \CH(V_i)$. Based on Lemma ~\ref{OnConvexHullLemma}, $|V_i|\not=n$, or $|\CH(V_i)|<n$ or not all robots have full visibility. Based on Lemma ~\ref{NotAllOnConvexHullLemma}, $r_i$ is on straight line with its neighbor robots on the convex hull. Based on Lemma ~\ref{OnStraightLineLemma}, $r_i$ is on the same line and in the middle of its left neighbor robot on the convex hull $r_l$ and with its right neighbor robot on the convex hull $r_r$.
Procedure {\tt SeeTwoRobot} results robot $r_i$ to move at distance $\frac{1}{2n}-\epsilon$ from its current position, with direction out of the convex hull as described in the Procedure {\tt SeeTwoRobot}. After moving in this position robot $r_i$ will no longer be on a straight line with its adjacent robots on the convex hull.~$\blacksquare$
\end{proof}

\begin{figure}[t]
\centering
\includegraphics[width=3in]{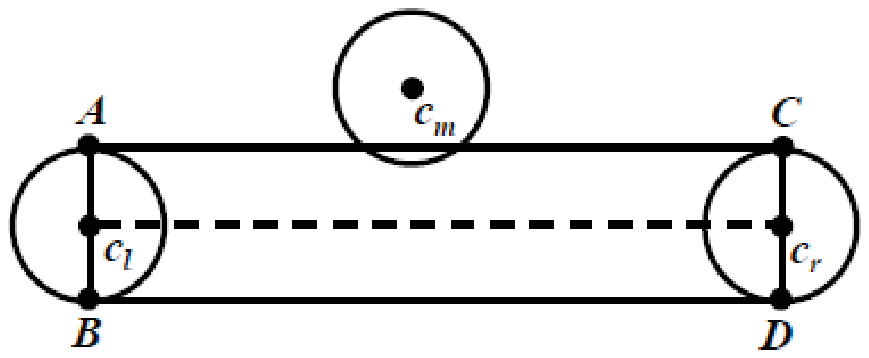}\vspace{-1em}
\caption{An example where the unit disc (robot) with center $c_m$ intersects rectangle $ABCD$.}
\label{fig:morespace}
\end{figure}

\subsubsection{Procedure {\tt NotOnConvexHull}}
\hrule\vspace{3pt}
\noindent Procedure: {\tt NotOnConvexHull}\vspace{3pt}
\hrule\vspace{5pt}
\noindent {\em Precondition}: state= {\bf Compute.NotOnConvexHull}\\
\noindent {\em Effect}:\vspace{-.8em}
\begin{itemize}
%\item Compute Convex Hull: {\bf Set} $CH = \emptyset$. For each $c_j\in V_i-\{c_i\}$ {\bf call} 
%function {\tt On-Convex-Hull} with inputs, the local view
%$V_i$ and $c_j$; If the function returns {\sc yes} then $CH = CH \cup \{c_j\}$.\vspace{-.7em}
\item {\bf Check} whether there is a $c_j\in V_i$ so that
the unit disc with center $c_j$ is tangent with the unit disc with center $c_i$. \vspace{-.7em}
\item {\bf If} yes, {\bf then} state:= {\bf Compute.IsTouching}\vspace{-.7em}
\item [] {\bf Else} state:= {\bf Compute.NotTouching}\vspace{-.4em} 
\end{itemize}
\hrule

\begin{lemma}
\label{NotOnConvexHullLemma}
{\tt NotOnConvexHull}{\bf(Compute.$\langle$NotOnConvexHull$\rangle$)} = {\bf Compute.$\langle$IsTouching$\rangle$}  iff $r_i$'s unit disk is tangent with a unit disk of any other robot. 
\end{lemma}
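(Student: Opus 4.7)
The plan is to reduce the lemma to a direct inspection of the procedure's pseudocode, after first noting what is known from earlier lemmas. From Lemma~\ref{StartLemma} together with the fact that the precondition state is \textbf{Compute.NotOnConvexHull}, we know $c_i \notin \CH(V_i)$, i.e., $r_i$ is strictly enclosed by the convex hull of the robots it sees. The procedure then does exactly one thing: it scans $V_i$ for a center $c_j$ whose unit disc is tangent to the unit disc with center $c_i$, and branches into {\bf Compute.IsTouching} in the positive case and into {\bf Compute.NotTouching} otherwise.

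I would then prove both directions of the iff. For the forward direction, if the procedure sets the state to \textbf{Compute.IsTouching}, then by the code there is some $c_j \in V_i$ whose unit disc is tangent to $r_i$'s, which by definition means $r_i$'s unit disc is tangent to the unit disc of another robot. For the converse, suppose some robot $r_j$ is tangent to $r_i$. The tangent point lies on the bounding circle of $r_j$ and is reachable from the bounding circle of $r_i$ by a zero-length segment containing no other robot's interior, so by the visibility definition $r_j$ is visible to $r_i$ and hence $c_j \in V_i$; the scan therefore succeeds and the procedure transitions to \textbf{Compute.IsTouching}.

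The only subtle point worth spelling out is precisely this visibility observation — that physical tangency with \emph{any} other robot (not just one whose center happens to be in $V_i$) is enough to guarantee that the tangent robot appears in the local view, so the code's local check coincides with the global geometric condition stated in the lemma. Beyond this, the argument is entirely by case analysis on the two branches of the \textbf{If} statement, and no further calculation is required.
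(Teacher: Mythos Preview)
Your proposal is correct and follows essentially the same approach as the paper: invoke Lemma~\ref{StartLemma} to establish $r_i \notin \CH(V_i)$, then inspect the two branches of the procedure's conditional. The paper's own proof is in fact even terser than yours, simply stating that the claim ``follows by close investigation of procedure {\tt NotOnConvexHull}''; your explicit observation that tangency with $r_i$ forces the tangent robot to lie in $V_i$ (so that the local scan over $V_i$ really does capture the global condition) is a detail the paper leaves implicit.
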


\begin{proof}
Based on Lemma~\ref{StartLemma}, robot $r_i \notin \CH(V_i)$. Now the claim of the lemma follows by close investigation of procedure {\tt NotOnConvexHull}.
\hfill$\blacksquare$
\end{proof}

\subsubsection{Procedure {\tt IsTouching}}
\hrule\vspace{3pt}
\noindent Procedure: {\tt IsTouching}\vspace{3pt}
\hrule\vspace{5pt}
\noindent {\em Precondition}: state= {\bf Compute.IsTouching}\\
\noindent {\em Effect}:\vspace{-.8em}
\begin{itemize}
\item {\bf Call} function {\tt Find-Points} with input $\CH(V_i)$.\vspace{-.7em}
\item
%{\bf call} function {\tt OnConvexHull} for each $c_j$ that is neighboring $c_i$.
{\bf If} Function {\tt Find-Points} returned one or more points {\bf then} {\bf choose} $p$, the point returned by Function {\tt Find-Points} that is
closest to $c_i$\vspace{-.7em}
\begin{itemize}
\item {\bf If} one of the robots that are touching $r_i$ is closer to $p$
than $r_i$ {\bf then} {\bf return} $c_i$ (current position).\vspace{-.5em}
\item{\bf Else If} one or more of the robots that are touching $r_i$ have the same distance with $r_i$ to $p$  {\bf then}
\begin{itemize}
\item{\bf If} $r_i$ is the rightmost of the robots that are touching and have the same distance to $p$ {\bf then}
$p' \in \CH(V_i)$ and $p'$ is on the line between $c_i$ and $p$.
{\bf Return} $p'$.\vspace{-.1em}
\item{\bf Else}
{\bf return} $c_i$ (current position).
\end{itemize}
\item {\bf Else} $p' \in \CH(V_i)$ and $p'$ is on the line between $c_i$ and $p$. {\bf Return} $p'$ .\vspace{-.7em}   
\end{itemize}

%%%%%%simeia pou tha prokalesoun allagi%%%%%%%%%%%%%%
\item
%{\bf call} function {\tt OnConvexHull} for each $c_j$ that is neighboring $c_i$.
{\bf Else} {\bf choose} the two closest neighboring robots, that have distance of at least 2, to $c_i$ that are on the Convex Hull.\vspace{-.7em}
\begin{itemize}
\item {\bf If} No neighboring robots on the convex hull have distance of at least 2 {\bf then} {\bf return} $c_i$ (current position).
\item {\bf If} one of the robots that are touching $r_i$ is closer to these robots
than $r_i$ {\bf then} {\bf return} $c_i$ (current position).\vspace{-.5em}
\item{\bf Else If} one or more of the robots that are touching $r_i$ have the same distance with $r_i$ to these robots {\bf then}
\begin{itemize}
\item{\bf If} $r_i$ is the rightmost of the robots that are touching and have the same distance to the closest robots on the convex hull {\bf then}
draw a straight line between the centers of the two closest robots to $r_i$ on the convex hull and find the center
of this line, $p$. {\bf Return} $p$.\vspace{-.1em}
\item{\bf Else}
{\bf return} $c_i$ (current position).
\end{itemize}
\item {\bf Else} draw a straight line between the centers of these robots and find the center
of this line, $p$. {\bf Return} $p$.\vspace{-.8em}   
\end{itemize}
\end{itemize}
\hrule\vspace{.5em}

In a given set of robots, we consider that a robot has {\em higher proximity} compared to the other robots of that set if it is the closest to its closest space on the convex hull or to the closest space that Function {\tt FindPoints} returned (depending on the case). If more than one robots of that set of robots have the same distance to the closest space, then the rightmost of these robots has the highest proximity (straight direction is considered to be to the outside of the convex hull of the target point).

\begin{lemma}
\label{IsTouchingLemma}
{\tt IsTouching}{\bf (Compute.$\langle$IsTouching$\rangle$)} will result robot $r_i$'s unit disk to no longer be tangent with any other robot's unit disk (from the robots that $r_i$ touches)  $if$ $r_i$ has the highest proximity (from the robots that are touching). If no space of at least 2 exists on the convex hull, then $r_i$ stays in the same position.
\end{lemma}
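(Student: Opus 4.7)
The plan is to proceed by invoking the preceding lemmas to fix the situation (robot $r_i$ is strictly inside $\CH(V_i)$ and is tangent to at least one other robot's unit disc), and then to perform a case analysis following the branching structure of procedure {\tt IsTouching}. By Lemma~\ref{StartLemma} we know $c_i\notin\CH(V_i)$, and by Lemma~\ref{NotOnConvexHullLemma} we know $r_i$ is tangent to some other robot. The two top-level branches of the procedure correspond to whether Function {\tt Find-Points} returns at least one candidate point or returns the empty set; I would treat these two cases separately and in each case show that the returned point either equals $c_i$ (so $r_i$ does not move) or lies on the convex hull and is reachable by $r_i$ without being blocked by a touching neighbor.

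In the first branch, Function {\tt Find-Points} returns at least one point, and $p$ is chosen to be the one closest to $c_i$. By Lemma~\ref{FindPointsLemma}, placing a unit disc centered at $p$ on the convex hull does not alter $\CH(V_i)$. I would argue: (i) if another touching robot is strictly closer to $p$ than $r_i$, then $r_i$ has strictly lower proximity than some touching robot and the procedure returns $c_i$, so the hypothesis ``$r_i$ has the highest proximity'' fails and there is nothing to prove; (ii) if one or more touching robots tie with $r_i$ for distance to $p$ but $r_i$ is not the rightmost among them, again the procedure returns $c_i$ and the hypothesis fails; (iii) if $r_i$ is the unique closest touching robot to $p$, or the rightmost among the tied ones (i.e.\ $r_i$ has the highest proximity), the procedure returns the point $p'$ on $\CH(V_i)$ that lies on the segment from $c_i$ to $p$. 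The geometry then forces $p'$ to be further from the centers of the robots touching $r_i$ than $c_i$ is, which means that after the move the unit disc of $r_i$ is separated from each of those touching discs; this is the content to be verified, and it is the main obstacle because the tangency‑breaking property must be checked carefully against every robot currently touching $r_i$, including the possibility of one that happens to lie exactly between $c_i$ and $p'$.

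In the second branch, Function {\tt Find-Points} returns no point. The procedure then looks for the two closest neighboring robots on the convex hull with distance at least $2$; if none exist, it explicitly returns $c_i$, which proves the second sentence of the lemma (if no space of length at least $2$ is present on the convex hull, then $r_i$ remains in place). Otherwise, by an argument analogous to case~(i)--(iii) above---another touching robot closer, a tie breaking against $r_i$, or $r_i$ having the highest proximity---the procedure either returns $c_i$ or returns the midpoint $p$ of the chord between those two convex‑hull neighbors. Again the tangency‑breaking claim has to be checked: moving to $p$ places $c_i$ at distance at least $1$ from each of the two designated convex‑hull neighbors (since the chord has length at least $2$) and strictly increases $r_i$'s distance from each of the robots it was touching, because $r_i$ had the highest proximity among them.

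Putting the two branches together gives the lemma: whenever $r_i$ is declared to have the highest proximity, the returned point is a genuine convex‑hull location that $r_i$ reaches while breaking every existing tangency with the robots it was touching, and whenever no space of length at least $2$ is available on the convex hull, the returned point is exactly $c_i$. I anticipate the main obstacle to be the verification that, under the ``highest proximity'' assumption, moving $r_i$ along the straight segment from $c_i$ to $p'$ (respectively to the midpoint $p$) cannot \emph{re-create} a tangency with one of the robots $r_i$ was previously touching; this needs a small geometric argument that the projection of any such robot's center onto the motion line is not closer to $p'$ than $c_i$ is, which follows precisely from the ``highest proximity'' condition combined with the tie‑breaking rule.
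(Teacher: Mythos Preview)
Your proposal is correct and follows essentially the same approach as the paper: invoke Lemmas~\ref{StartLemma} and~\ref{NotOnConvexHullLemma} to fix the context, then perform a case analysis mirroring the branching of procedure {\tt IsTouching} (the paper enumerates nine cases, which are exactly your subcases (i)--(iii) in each of the two top-level branches, plus the ``no space of at least $2$'' subcase). One difference worth noting: the paper's proof only verifies that the procedure returns $c_i$ precisely when $r_i$ fails to have highest proximity and returns a point on the hull otherwise; it does \emph{not} carry out the geometric verification you flag as the main obstacle---that moving along the segment to $p'$ (or to the midpoint $p$) actually breaks every existing tangency. Your attention to that point is more than the paper itself provides, so you should not expect to find a detailed argument for it there.
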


\begin{proof}
Based on Lemma ~\ref{StartLemma} $r_i \notin \CH(V_i)$. Based on Lemma ~\ref{NotOnConvexHullLemma} robot $r_i$'s unit disk is tangent with at least one other robot's unit disk.

There are 9 possible cases:
\begin{enumerate}

\item Function {\tt FindPoints} returned one point or more and  $r_i$ is not the closest robot (from the robots that are touching) to $p$.

In this case, $r_i$ will remain in the same position because it does not have the highest proximity.

\item Function {\tt FindPoints} returned one point or more.  $r_i$ is the closest robot (from the robots that are touching) to $p$ and has the same distance with $p$ with at least another robot. $r_i$ is the rightmost robot of the robots that are touching and have the same distance to $p$.

In this case, $r_i$ will move to $p'$ because it has the highest proximity.

\item Function {\tt FindPoints} returned one point or more.  $r_i$ is the closest robot (from the robots that are touching) to $p$ and has the same distance with $p$ with at least another robot. $r_i$ is not the rightmost robot of the robots that are touching and have the same distance to $p$.

In this case, $r_i$ will remain in the same position because it does not have the highest proximity.

\item Function {\tt FindPoints} returned one point or more.  $r_i$ is the closest robot (from the robots that are touching) to $p$ and no other robot (from the robots that are touching)  has the same distance to $p$ with $r_i$.

In this case, $r_i$ will move to $p'$ because it has the highest proximity.
%%%%%%%%%

\item Function {\tt FindPoints} did not returned any point. No space of at least 2 exists on the convex hull.

In this case, $r_i$ will remain in the same position because not enough space for it exists on the convex hull.

\item Function {\tt FindPoints} did not returned any point.  $r_i$ is not the closest robot (from the robots that are touching) to the two closest robots (from $r_i$) on the convex hull.

In this case, $r_i$ will remain in the same position because it does not have the highest proximity.

\item Function {\tt FindPoints} did not returned any point.  $r_i$ is the closest robot (from the robots that are touching) to the two closest robots (from $r_i$) on the convex hull and has the same distance with at least another robot. $r_i$ is the rightmost robot of the robots that are touching and have the same distance to the two closest robots (from $r_i$) on the convex hull.

In this case, $r_i$ will move to the center of the line between those two robots because it has the highest proximity.

\item Function {\tt FindPoints} did not returned any point.  $r_i$ is the closest robot (from the robots that are touching) to the two closest robots (from $r_i$) on the convex hull and has the same distance with $p$ with at least another robot. $r_i$ is not the rightmost robot of the robots that are touching and have the same distance to the two closest robots (from $r_i$) on the convex hull.

In this case, $r_i$ will remain in the same position because it does not have the highest proximity.

\item Function {\tt FindPoints} did not returned any point.  $r_i$ is the closest robot (from the robots that are touching) to the two closest robots on the convex hull and no other robot (from the robots that are touching)  has the same distance to those robots with $r_i$.

In this case, $r_i$ will move to the center of the line between those two robots because it has the highest proximity.\hfill$\blacksquare$
\end{enumerate}
\end{proof}

\begin{lemma}
\label{IsTouchingLemma2}
{\tt IsTouching}{\bf (Compute.$\langle$IsTouching$\rangle$)} will result at least one of the robots that are touching to move with direction to the convex hull, if a space of at least 2 exists on the convex hull.
\end{lemma}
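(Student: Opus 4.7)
The plan is to combine Lemma~\ref{IsTouchingLemma} with a uniqueness argument on the ``highest proximity'' tie-breaking rule used inside procedure {\tt IsTouching}. That lemma already guarantees that \emph{the} robot of highest proximity within a group of mutually touching robots moves toward its target --- either a point produced by Function {\tt Find-Points} or the midpoint of a chord between two hull-neighbor robots at distance at least $2$. The missing piece for the present statement is therefore to exhibit such a robot under the hypothesis that a space of at least $2$ exists on the convex hull, and to verify that the target it moves to lies on $\CH(V_i)$, so the motion is genuinely ``in the direction of the convex hull.''

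First I would let $T$ denote the cluster consisting of $r_i$ together with all robots whose unit disc is currently tangent to $r_i$'s disc. By Lemma~\ref{StartLemma} and Lemma~\ref{NotOnConvexHullLemma}, $r_i \notin \CH(V_i)$ and $|T|\geq 2$. Next I would split into the two branches of procedure {\tt IsTouching}: (i) Function {\tt Find-Points} returns a nonempty set $P$ of candidate points; (ii) $P = \emptyset$ but some hull side has length at least $2$. In branch (i), pick the target $p \in P$ closest to $T$ and let $r^\star \in T$ be the robot closest to $p$, breaking ties by selecting the rightmost --- a choice that is well-defined thanks to the chirality assumption. Then $r^\star$ has highest proximity in $T$ and by Lemma~\ref{IsTouchingLemma} moves to the point $p' \in \CH(V_i)$ on the segment $\overline{c_{r^\star}p}$, i.e., in the direction of the convex hull. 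In branch (ii), the same tie-break identifies a unique highest-proximity mover in $T$, which per the code is routed to the midpoint of the chord joining the two nearest hull-adjacent robots at distance at least $2$ --- a point lying on the convex hull boundary.

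The hard part will be making uniqueness of the highest-proximity robot fully rigorous and verifying the corner cases. In particular I would carefully argue that the rightmost-tie-break yields a \emph{globally consistent} choice across the members of $T$ who each run {\tt IsTouching} from their own viewpoint: without chirality they could disagree on who is ``rightmost,'' so either more than one member of $T$ might move simultaneously (breaking the hull invariants maintained by the sibling procedures) or none would move (violating this very lemma). I would also need to check the degenerate subcase in which every robot of $T$ is exactly equidistant from the target, where the rightmost-wins rule still picks out a unique element because $T$ is finite and chirality provides a shared orientation. Finally, I would verify in both branches that the destination lies on $\CH(V_i)$ --- immediate in branch (i) by the specification of {\tt Find-Points}, and in branch (ii) because the midpoint of a chord between two hull vertices is on the hull.
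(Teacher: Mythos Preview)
Your proposal is correct and follows essentially the same route as the paper, which simply writes ``The claim follows from the code of the procedure.'' You have merely unpacked that one-liner: you invoke Lemma~\ref{IsTouchingLemma} to identify the highest-proximity robot as the mover, argue that such a robot always exists in a finite touching cluster thanks to the rightmost tie-break (which chirality makes well-defined), and then verify in each branch of the procedure that the returned target lies on $\CH(V_i)$. The consistency worry you raise---whether different members of $T$ might disagree on who is rightmost---is a legitimate concern for the overall algorithm but is more than what this lemma requires, since it only asserts that \emph{at least one} robot moves toward the hull; your existence argument already suffices.
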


\begin{proof}
The claim follows from the code of the procedure.\hfill$\blacksquare$
\end{proof}

\subsubsection{Procedure {\tt NotTouching}}
\hrule\vspace{3pt}
\noindent Procedure: {\tt NotTouching}\vspace{3pt}
\hrule\vspace{5pt}
\noindent {\em Precondition}: state= {\bf Compute.NotTouching}\\
\noindent {\em Effect}:\vspace{-.8em}
\begin{itemize}
\item {\bf Call} function {\tt Find-Points} with input $\CH(V_i)$. \vspace{-.7em}
\item {\bf If} function {\tt Find-Points} returns at least one point,
%in which you can move at without causing the Convex Hull to change, 
{\bf then} state:= {\bf Compute.NotChange}\vspace{-.7em}
\item [] {\bf Else} state:= {\bf Compute.ToChange}\vspace{-.4em} 
\end{itemize}
\hrule

\begin{lemma}
\label{NotTouchingLemma}
{\tt NotTouching}{\bf (Compute.$\langle$NotTouching$\rangle$)} = {\bf Compute.$\langle$NotChange$\rangle$}  iff $r_i$ can move on the convex hull, without causing any additional change on the convex hull. 
\end{lemma}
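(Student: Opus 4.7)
The plan is to reduce the claim to properties of Function \texttt{Find-Points} that have already been established, and to the preconditions inherited from the states that lead into \texttt{Compute.NotTouching}. By Lemma~\ref{StartLemma} we already have $c_i \notin \CH(V_i)$, and by Lemma~\ref{NotOnConvexHullLemma} robot $r_i$ is not tangent to any other robot in this branch. Thus the only question is whether there is a location on $\CH(V_i)$ to which $r_i$ can relocate its center without inducing a change in the hull; the procedure's only computation is a single call to \texttt{Find-Points} on $\CH(V_i)$, so the iff we must prove reduces to: \emph{\texttt{Find-Points} returns at least one point iff such a relocation target exists.}

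For the forward direction ($\Rightarrow$), I would invoke Lemma~\ref{FindPointsLemma} directly. If \texttt{Find-Points} returns a nonempty set, then any $p$ in that set is a point on $\CH(V_i)$ where a unit disc can be placed without altering the hull (under the standing assumption that other robots do not move). Consequently $r_i$ can select $p$ as its target, and hence $r_i$ can move onto the convex hull without causing an additional change; the procedure correspondingly sets the state to \textbf{Compute.NotChange}.

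For the backward direction ($\Leftarrow$), I would argue the converse via the structure of \texttt{Find-Points}. Suppose there exists a point $q$ on $\CH(V_i)$ such that placing a unit disc centered at $q$ does not change the hull. Any such $q$ must lie along some segment $\overline{c_l c_r}$ between two adjacent convex-hull points, so that the segment length is at least $2$ (otherwise no unit disc fits between them without pushing one of $c_l$ or $c_r$ off the hull). The procedure \texttt{Find-Points} exhaustively iterates over all adjacent pairs $(c_l,c_r)$ with $|\overline{c_l c_r}|\ge 2$ and inspects the canonical candidate $p$ placed at vertical distance $1/n$ from the midpoint $\mu$; by the tangent-preservation test with respect to the neighbors $c_{l-1},c_{r+1}$, this candidate is added to $Points$ whenever it keeps the hull unchanged. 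Since the existence of \emph{some} valid $q$ implies (by the geometric argument used in Lemma~\ref{CHlengthLemma}, namely that the minimum safe distance is achieved precisely by the midpoint-perpendicular placement at distance $1/n$) that the canonical candidate for that pair passes the test, \texttt{Find-Points} returns a nonempty set and the procedure transitions to \textbf{Compute.NotChange}.

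The main obstacle will be the backward direction: one must argue that the narrow set of candidates examined by \texttt{Find-Points} (one canonical point per adjacent hull edge) is \emph{complete}, i.e.\ no valid relocation target is missed. I would handle this by appealing to Lemma~\ref{CHlengthLemma}, which characterizes the safe distance and shows that the midpoint-perpendicular placement at offset $1/n$ is the canonical witness of feasibility for that edge; thus an arbitrary valid $q$ on edge $\overline{c_l c_r}$ forces the edge length to exceed the safe distance, which in turn guarantees that the canonical candidate $p$ for that edge satisfies the acceptance conditions inside \texttt{Find-Points}. Combining the two directions yields the claimed equivalence.
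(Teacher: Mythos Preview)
Your approach is essentially the same as the paper's: establish the preconditions via Lemma~\ref{StartLemma} and Lemma~\ref{NotOnConvexHullLemma}, then reduce the question to whether \texttt{Find-Points} returns a nonempty set, invoking Lemma~\ref{FindPointsLemma} for soundness. The paper's own proof is considerably terser---it simply asserts that \texttt{Find-Points} ``returns all the possible points on convex hull that $r_i$ can move to without causing any changes,'' citing Lemma~\ref{FindPointsLemma}, and concludes directly.

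Where you differ is in explicitly flagging and attempting to justify the backward (completeness) direction via Lemma~\ref{CHlengthLemma}. The paper does not do this; it folds completeness into its appeal to Lemma~\ref{FindPointsLemma} even though that lemma, as stated, only guarantees soundness (points returned are safe), not completeness (all safe placements are witnessed). Your extra paragraph using the safe-distance characterization is a genuine strengthening of the argument and is the right instinct. That said, your completeness argument is still not airtight: Lemma~\ref{CHlengthLemma} establishes that beyond a certain edge length the canonical midpoint candidate succeeds, but it does not directly rule out an edge where some off-midpoint $q$ is valid while the canonical midpoint $p$ fails the tangent test against $c_{l-1}$ or $c_{r+1}$. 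To close this you would need to argue that the midpoint-perpendicular candidate is extremal (i.e., if any placement on the edge clears the tangent constraints, the midpoint one does too), which is plausible by symmetry/convexity but not stated as a lemma in the paper. In short: your proof matches the paper's and is more careful; both share the same residual looseness in the completeness direction, but you at least name it.
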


\begin{proof}
Based on Lemma ~\ref{StartLemma}, robot $r_i \notin \CH(V_i)$. Based on Lemma ~\ref{NotOnConvexHullLemma}, robot $r_i$ is not touching with any other robot. Procedure {\tt NotTouching}
calls Function {\tt Find-Points}. Per Lemma ~\ref{FindPointsLemma}, function {\tt Find-Points} returns all the possible points on convex hull, that $r_i$ can move to,
without causing any changes to $\CH(V_i)$, provided that other robots do not move. If Function {\tt Find Points} does not return any points, the next state will correctly
be {\bf Compute.ToChange}. Otherwise, the correct state to move is {\bf Compute.NotChange}.
\hfill$\blacksquare$
\end{proof}

\subsubsection{Procedure {\tt ToChange}}
\hrule\vspace{3pt}
\noindent Procedure: {\tt ToChange}\vspace{3pt}
\hrule\vspace{5pt}
\noindent {\em Precondition}: state= {\bf Compute.ToChange}\\
\noindent {\em Effect}:\vspace{-.8em}
\begin{itemize}
\item 
%{\bf Call} function {\tt OnConvexHull} for each $c_j$ that is neighboring $c_i$
%and have distance more than~$2$. 
Among the robots that are neighboring on the convex hull and have distance at least 2 from each-other, {\bf choose} the two closest ones to $r_i$.\vspace{-.7em}
\item {\bf If} no neighbor robots have distance greater or equal with 2, {\bf Return} $c_i$
\item {\bf Else} Draw a straight line between the centers of these robots and find the center
of this line, $p$. {\bf Return} $p$.\vspace{-.4em} 
\end{itemize}
\hrule

\begin{lemma}
\label{ToChangeLemma}
{\tt ToChange}{\bf (Compute.$\langle$ToChange$\rangle$)} = $p$, when $p \in \CH(V_i)$ if there exists a space of at least 2 on the convex hull. Else {\tt ToChange}{\bf (Compute.$\langle$ToChange$\rangle$)} = $c_i$
\end{lemma}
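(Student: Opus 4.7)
The plan is to derive the lemma by a straightforward case analysis on whether the convex hull $\CH(V_i)$ admits an edge of length at least $2$. First I would collect the preconditions that have been established by the time control reaches procedure {\tt ToChange}: by Lemma~\ref{StartLemma}, $c_i \notin \CH(V_i)$; by Lemma~\ref{NotOnConvexHullLemma}, $r_i$ does not touch any other robot; and by Lemma~\ref{NotTouchingLemma} (invoking Lemma~\ref{FindPointsLemma}) the call to {\tt Find-Points} returned no candidate point, meaning that $r_i$ cannot slip onto the hull through any admissible gap without forcing $\CH(V_i)$ to change. These facts are not used in the geometric argument, but they fix the regime in which the claim is to be interpreted.

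Next I would split into the two cases dictated by the explicit test in the code. In the first case, suppose no pair of adjacent robots on the convex hull has mutual distance at least $2$. Then the guard ``\textbf{If} no neighbor robots have distance greater or equal with $2$, \textbf{Return} $c_i$'' fires, so the procedure returns $c_i$, matching the ``else'' branch of the lemma's statement. In the second case, suppose at least one such pair exists. The procedure then selects the two closest adjacent hull robots (to $r_i$) with mutual distance at least $2$, calls their centers $c_l$ and $c_r$, and returns the midpoint $p$ of the segment $\overline{c_lc_r}$.

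It remains to verify that $p \in \CH(V_i)$. Since $c_l$ and $c_r$ are consecutive vertices on the convex hull, the segment $\overline{c_lc_r}$ is an edge of the hull; every point of this edge, and in particular its midpoint $p$, therefore lies on the boundary of $\CH(V_i)$. This is exactly the assertion ``$p \in \CH(V_i)$'' in the lemma.

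The main (and essentially only) obstacle is bookkeeping: making sure that the guard ``no adjacent pair of distance $\geq 2$'' is indeed the precise certificate for the returned value to be $c_i$, and that the midpoint of an edge of the convex hull always lies on the hull rather than in its strict interior. Both facts are immediate from the code of the procedure and from the definition of the convex hull, so no further geometric computation is needed.
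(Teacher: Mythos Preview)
Your proposal is correct and follows essentially the same approach as the paper: establish the preconditions via Lemmas~\ref{StartLemma}, \ref{NotOnConvexHullLemma}, and \ref{NotTouchingLemma}, then read off the two branches directly from the code of procedure {\tt ToChange}. The paper's own proof is in fact more terse than yours---after listing the same three preconditions it simply states that ``the claim now follows from the code of the procedure''---so your explicit verification that the midpoint of a hull edge lies on the boundary of $\CH(V_i)$ is additional detail rather than a different route.
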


\begin{proof}
Based on Lemma ~\ref{StartLemma}, $r_i \notin \CH(V_i)$. Based on Lemma ~\ref{NotOnConvexHullLemma}, $r_i$ is not tangent with any other robot. Based on Lemma ~\ref{NotTouchingLemma}, $r_i$ cannot move to the convex hull without causing it to change. The claim now follows from the code of the procedure. \hfill$\blacksquare$
\end{proof}

\subsubsection{Procedure {\tt NotChange}}
\hrule\vspace{3pt}
\noindent Procedure: {\tt NotChange}\vspace{3pt}
\hrule\vspace{5pt}
\noindent {\em Precondition}: state= {\bf Compute.NotChange}\\
\noindent {\em Effect}:\vspace{-.8em}
\begin{itemize}
\item {\bf Call} function {\tt Find-Points} with input $\CH(V_i)$.\vspace{-.7em}
\item {\bf Choose} the point returned by function {\tt Find-Points} that is
closest to $c_i$; call it $x$.\vspace{-.7em}
\item $p \in \CH$ and $p$ is in the line between the points $c_i$ and $x$. 
\item {\bf Return} $p$.\vspace{-.4em} 
\end{itemize}
\hrule

\begin{lemma}
\label{NotChangeLemma}
{\tt NotChange}{\bf (Compute.$\langle$NotChange$\rangle$)} = $p$, where $p \in \CH(V_i)$.
\end{lemma}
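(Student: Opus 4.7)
The plan is to chain the preconditions established by the earlier lemmas with the explicit construction performed by the procedure. First I would invoke Lemma~\ref{StartLemma} to record that $r_i \notin \CH(V_i)$, so $c_i$ lies strictly inside the convex hull of the local view; then Lemma~\ref{NotOnConvexHullLemma} to note that $r_i$ is not tangent with any other robot; and finally Lemma~\ref{NotTouchingLemma}, which is the key precondition: the fact that the algorithm is in state \textbf{Compute.NotChange} means, by that lemma, that Function {\tt Find-Points} called on $\CH(V_i)$ returns at least one point.

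Next I would argue that the intermediate quantities used in the procedure are well-defined. Because {\tt Find-Points} returns a non-empty set, the ``closest to $c_i$'' selection step yields a unique (or, in case of ties, any valid) point $x$. By Lemma~\ref{FindPointsLemma}, $x$ is a location where a unit disc can be placed on $\CH(V_i)$ without altering the convex hull, so in particular $x$ lies on the boundary of (or slightly outside, in the sense formalized by {\tt Find-Points}) the current convex hull.

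Then I would turn to the geometric step defining $p$. Since $c_i$ lies strictly inside $\CH(V_i)$ and $x$ lies on the boundary delimited by $\CH(V_i)$, the straight segment $\overline{c_i x}$ must cross the boundary of the convex hull; the procedure defines $p$ to be the unique such intersection point on $\CH(V_i)$ lying on this segment. This is a trivial line/polygon intersection, so $p$ is well-defined and, by construction, $p \in \CH(V_i)$. The procedure returns exactly this $p$, which establishes the claim.

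The main (and really only) subtle point will be justifying that the segment $\overline{c_i x}$ does intersect $\CH(V_i)$ in a single well-defined point, and that this point is genuinely on the convex hull rather than merely near it; this reduces to the standard fact that a segment from an interior point of a convex region to a boundary/exterior point crosses the boundary exactly once, combined with Lemma~\ref{FindPointsLemma} to control the geometric status of $x$. Everything else is immediate from the code of the procedure.
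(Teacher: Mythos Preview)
Your proposal is correct and follows essentially the same approach as the paper: invoke Lemmas~\ref{StartLemma}, \ref{NotOnConvexHullLemma}, and \ref{NotTouchingLemma} to establish the preconditions, then read off the conclusion from the code of the procedure. The paper's own proof is considerably terser---it simply cites the three lemmas and asserts ``Then it follows that the returned point is on the convex hull''---whereas you additionally invoke Lemma~\ref{FindPointsLemma} and spell out the interior-to-exterior segment argument, which is a reasonable (and arguably needed) elaboration rather than a different route.
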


\begin{proof}
Based on Lemma ~\ref{StartLemma}, $r_i \notin \CH(V_i)$. Based on Lemma ~\ref{NotOnConvexHullLemma}, $r_i$ is not tangent with any other robot. Based on Lemma ~\ref{NotTouchingLemma}, $r_i$ can move to the convex hull without causing it to change.
Then it follows that the returned point is on the convex hull. \hfill$\blacksquare$
\end{proof}

%%%%%%%%%%%%%%%%%%%%%%%%%%%%%%%%%% SECTION DISTRIBUTED %%%%%%%%%%%%%%%
\section{Distributed Algorithm for Gathering}
\label{sec:Alggather}

The high level idea of the algorithm is as follows: The objective is
for the robots to form a convex hull and be able to see each other. 
%Once all robots are on the
%convex hull they can see each other and hence 
Once this is achieved, then the robots start to converge (they get closer), while
maintaining the convex hull formation,  so that they form a connected component.
It follows that when all robots are on the convex hull, they can see each other, 
and are connected, the gathering problem is solved and each robot terminates. 

The {\em distributed algorithm} is essentially composed of the asynchronous execution
of the robots' state transition cycle (including the local algorithm when in
state {\bf Compute}).

We now proceed to show that the distributed algorithm correctly solves the
gathering problem. We first provide some important definitions and then
we proceed with the proof of correctness.  Robots make decisions based on
their local views, but due to asynchrony, each robot's local view might not
reflect the current system configuration. Hence, our proof shows that the local
decisions made by the robots are designed in such a way, that robots can
coordinate correctly in the face of asynchrony and hence reach a solution
to the gathering problem. 

\subsection{Definitions}
\label{baddefs}
%We begin with some definitions. 
Given a Robot Configuration ${\cal R}$, we denote by ${\cal G}_{\cal R}$ the geometric configuration 
and by ${\cal S}_{\cal R}$ the state configuration of ${\cal R}$.
Recall that for a geometric configuration ${\cal G}$, we denote by $\con({\cal G})$ the convex hull formed by the points in ${\cal G}$, as output by Graham's Algorithm. Also, we denote by $\CH({\cal G})\subseteq {\cal G}$ the set of points in ${\cal G}$ that are {\em on} the convex hull. 
%
%\noindent Given any execution of the algorithm we say that $V_i \equiv {\cal G}_{{\cal R}_{m}}$ if robot $r_i$ took a snapshot when this configuration %was at place.

\paragraph{Bad Configurations.}
We say that a robot configuration ${\cal R}_x$ is a {\em bad} configuration, when one of the two following cases is true:

\begin{enumerate}
\item Bad configuration of Type 1. When all of the following hold:
\begin{itemize}
\item Configuration ${\cal G}_{{\cal R}_x}$ is {\em fully visible} and $|\CH({\cal G}_{{\cal R}_x})|=n$;
\item A robot $r_i$ in this configuration has as local view $V_i$ a previous configuration ${\cal G}_{{\cal R}_y}$, $y<x$, 
such that $|\CH({\cal G}_{{\cal R}_y})|<n$, $r_i\in\CH({{\cal R}_y})$ and $r_i$ sees that no space for more robots to get on the convex hull exists. % $\CH({\cal G}_{{\cal R}_y})$ exists. 
\end{itemize}	
\item Bad configuration of Type 2. When all of the following hold:
\begin{itemize}
\item Configuration ${\cal G}_{{\cal R}_x}$ is {\em fully visible} and $|\CH({\cal G}_{{\cal R}_x})|=n$;
\item There exists a preceding configuration ${\cal G}_{{\cal R}_y}$, $y<x$, in which at least four robots, call them $r_l, r_{m1}, r_{m2}$ and $r_r$,  are on a straight line and $r_l, r_{m1}, r_{m2},r_r \in \CH({\cal G}_{{\cal R}_y})$.
\end{itemize}
\end{enumerate}	

Both types are considered bad, because they can potentially lead to a succeeding configuration (wrt ${\cal R}_x$) that is no longer
fully visible or all robots are on the convex hull; a property that we would like, once it holds, to continue holding for
all succeeding configurations.

Let us explain how this is possible, first for the bad configuration of type 1. According to the local algorithm,
when robot $r_i$ witness a view as described in the second bullet of type 1 configuration, robot $r_i$ must start moving 
with direction outside of the convex hull so to make space for more robots to get on the convex hull. This is also the case for all
robots sharing the same or similar view with $r_i$. When $r_i$ starts moving (it gets in state {\bf move}),
the adversary can impose the following strategy: It makes $r_i$ to ``move too slow" and lets the other robots move 
with such ``a speed'' that the robots reach configuration ${\cal R}_x$. Since $r_i$ has not changed its state (it is still in state {\bf move}),
it continues to move outside of the convex hull. This may cause a neighboring robot of $r_i$ not to be on the convex hull anymore or not
be able to see all robots. Hence, while ${\cal G}_{{\cal R}_x}$ was a {\em fully$ $visible} configuration and $|\CH({\cal G}_{{\cal R}_x})|=n$,
it is possible for a succeeding configuration not to have one (or both) of the these properties anymore.

Now we consider a type 2 bad configuration. According to the local algorithm, if robots $r_l, r_{m1}, r_{m2},r_r$ witness
configuration ${\cal G}_{{\cal R}_y}$, then robots $r_{m1}$ and $r_{m2}$ must start moving with direction outside of the convex hull
(the robots that realize they are in the middle of the straight line must move outside so to enable the ``edge'' robots to
see each other; the ``edge'' robots do not move). When $r_{m1}$ and $r_{m2}$ start moving (they get in state {\bf move})
the adversary can impose the following strategy: It lets robot $r_{m1}$ to move slightly and then it stops it (with a $stop(r_{m1})$ event).
It lets robot $r_{m2}$ to move slightly and then the adversary makes it to move very slow (so robot $r_{m2}$ is still in state
{\bf move}). The adversary could stop robot $r_{m1}$ and delay $r_{m2}$ in such a way that configuration ${\cal R}_x$ is reached
(recall that $|\CH({\cal G}_{{\cal R}_x})|=n$ and ${\cal G}_{{\cal R}_x}$ is a {\em fully visible} configuration). 
But since $r_{m2}$ continues to move, it is possible to cause robot $r_{m1}$ to no longer be $\in \CH$ or some other
robot (including $r_{m2}$) not be able to see all other robots. Hence it is possible for a succeeding configuration of 
${\cal G}_{{\cal R}_x}$ not to have one (or both) of the these properties anymore.

\paragraph{Safe Configurations.}
We say that a robot configuration ${\cal R}$ is a {\em safe} configuration, when the following is true:
\begin{itemize}
%\item Safe configuration of Type 1. $|\CH({\cal G}_{{\cal R}})|=n$ and the distance between any two neighboring robots on the convex hull is 
%{\em safe} (as defined in Lemma~\ref{CHlengthLemma}).
\item [] $|\CH({\cal G}_{{\cal R}})|=n$, ${\cal G}_{{\cal R}}$ is a {\em fully visible} configuration and 
$\forall r_i$, $|\CH(V_i)|=n$ and $V_i$ is a {\em fully visible} configuration (that is, all robots know that the configuration
is fully visible).
\end{itemize}

The reason we consider these configurations as safe, is because, as we will show later, once an execution of the
algorithm reaches such a safe configuration, then no succeeding configuration can be a bad configuration. 

We define a {\em bad execution fragment} (resp. execution) of the algorithm to be an execution fragment (resp. execution) 
that contains at least one bad robot configuration. Similarly, we define a {\em good execution fragment} (resp. execution)
to be an execution fragment (resp. execution) that contains only good configurations. 

\subsection{Proof of Correctness}
\label{subsec:correctness}

The proof is broken into two parts. In the first part we prove safety and liveness
properties considering only good execution fragments. 
Then we show that the algorithm is correct for any execution (including ones containing bad configurations).

\subsubsection{Good Executions}
\label{goodexecs}

In the section (with the exception of the first lemma) we consider only executions and executions fragments that are good,  
that is, they do not consider bad configurations. We first prove safety and then liveness properties
for such executions.  

\subsubsection*{Safety Properties}

The following lemma states that as long as not all robots are on the convex hull, 
or even if all robots are on the convex hull but there is at least one robot that cannot
see all other robots, then the convex hull can only expand. (This property holds
even for bad execution fragments).

\begin{lemma}
\label{IncreaseLemma}
Given an execution fragment ${\cal R}_{0},e_1, ... ,{\cal R}_{m-1}$ such that for all ${\cal R}_{k}$, $0 \le k \le m-1$ holds that:
c1: $|\CH({\cal G}_{{\cal R}_{k}})|<n$ or c2: $|\CH({\cal G}_{{\cal R}_{k}})|=n$ and ${\cal G}_{{\cal R}_{k}}$ is not a $fully$ $visible$ configuration, then for any step $\langle {\cal R}_{m-1},e_m,  {\cal R}_{m}\rangle$, $\con({\cal G}_{{\cal R}_{m-1}})\subseteq \con({\cal G}_{{\cal R}_{m}})$

\end{lemma}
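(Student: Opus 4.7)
The plan is to examine the single step $\langle {\cal R}_{m-1}, e_m, {\cal R}_m \rangle$ directly. The only way the geometric configuration can change between ${\cal R}_{m-1}$ and ${\cal R}_m$ is through robots currently in state \textbf{Move} advancing along pre-computed straight-line trajectories toward targets $A_j(V_j)$ fixed during an earlier \textbf{Compute} phase. Hence it suffices to show that, for each such moving robot $r_j$, advancing along its segment in this step keeps every vertex of $\con({\cal G}_{{\cal R}_{m-1}})$ inside $\con({\cal G}_{{\cal R}_m})$. The overall proof will go by strong induction on $m$, which lets me invoke $\con(V_j) \subseteq \con({\cal G}_{{\cal R}_{m-1}})$ whenever $V_j$ was captured at an earlier configuration within the fragment.

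The first substantive step is to rule out procedure \texttt{NotConnected} as the source of $A_j(V_j)$. \texttt{NotConnected} is entered only through \texttt{AllOnConvexHull}, whose precondition (Lemma~\ref{OnConvexHullLemma}) forces $|V_j|=n$, $|\CH(V_j)|=n$, and no three collinear robots in $V_j$. A view of size $n$ taken at ${\cal R}_k$ equals ${\cal G}_{{\cal R}_k}$: under c1 this contradicts $|\CH(V_j)|=n$, and under c2 the failure of full visibility with every robot on the hull forces at least three hull robots to be collinear, which the \texttt{On-Straight-Line-2} check in \texttt{OnConvexHull} detects and uses to branch to \texttt{NotAllOnConvexHull}. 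So the target must come from one of \texttt{SpaceForMore}, \texttt{NoSpaceForMore}, \texttt{SeeOneRobot}, \texttt{SeeTwoRobot}, \texttt{IsTouching}, \texttt{ToChange}, or \texttt{NotChange}. I then split by whether $r_j \in \CH(V_j)$: the hull-case procedures (Lemmas~\ref{SpaceForMoreLemma}--\ref{SeeTwoRobotLemma}) either return $c_j$ itself or send $r_j$ strictly outside $\con(V_j)$ along the outward normal of a CH-edge at $c_j$; the interior-case procedures (Lemmas~\ref{IsTouchingLemma}--\ref{NotChangeLemma}) return a target either on $\CH(V_j)$ or between two adjacent CH robots, so the entire segment from $c_j$ to the target stays in $\con(V_j)$. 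By the inductive hypothesis $\con(V_j) \subseteq \con({\cal G}_{{\cal R}_{m-1}})$, so interior motions stay inside $\con({\cal G}_{{\cal R}_{m-1}})$, and hull motions proceed into the closed outer halfplane of the supporting line of $\con({\cal G}_{{\cal R}_{m-1}})$ through $c_j$.

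The main obstacle is lifting these per-robot outward-motion statements to the global containment $\con({\cal G}_{{\cal R}_{m-1}}) \subseteq \con({\cal G}_{{\cal R}_m})$, since several hull robots may move concurrently and the adversary may let a move complete fully in one step, so the new CH vertices need not be geometrically close to their old positions. I would handle this vertex by vertex using the fact that $\con$ is the convex hull of the point set: fix any old CH vertex $c_i$ of $\con({\cal G}_{{\cal R}_{m-1}})$; if $r_i$ did not move, $c_i \in {\cal G}_{{\cal R}_m}$ directly and we are done; if $r_i$ moved to a new position $c_i'$, then by the case analysis $c_i'$ lies in the closed outer halfplane of the line through $c_i$'s old CH neighbours $c_{i-1}, c_{i+1}$ (possibly also moved to their own new outward positions $c_{i-1}', c_{i+1}'$), so $c_i$ is contained in the triangle $\triangle(c_{i-1}', c_i', c_{i+1}')$ and hence in $\con({\cal G}_{{\cal R}_m})$. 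Iterating this over every old CH vertex gives $\con({\cal G}_{{\cal R}_{m-1}}) \subseteq \con({\cal G}_{{\cal R}_m})$, as desired.
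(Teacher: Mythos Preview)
Your case analysis is essentially the paper's own: both arguments enumerate which terminal procedure produced the target $A_j(V_j)$, rule out \texttt{NotConnected} via Lemma~\ref{OnConvexHullLemma}, and then split on $r_j\in\CH(V_j)$ versus $r_j\notin\CH(V_j)$, invoking Lemmas~\ref{SpaceForMoreLemma}--\ref{SeeTwoRobotLemma} (outward or stationary) and Lemmas~\ref{IsTouchingLemma}--\ref{NotChangeLemma} (target on or inside $\con(V_j)$) respectively. Your use of strong induction to obtain $\con(V_j)\subseteq\con({\cal G}_{{\cal R}_{m-1}})$ is cleaner than the paper's implicit treatment, and your explicit flagging of the concurrent-move issue is something the paper simply waves through (``same exact reasoning'').

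The gap is in your final paragraph. The inference ``$c_i'$ lies in the closed outer halfplane of the line through $c_{i-1},c_{i+1}$, so $c_i\in\triangle(c_{i-1}',c_i',c_{i+1}')$'' is asserted, not proved, and conflates two different notions of neighbour. The procedures \texttt{SeeTwoRobot}, \texttt{NoSpaceForMore}, \texttt{SpaceForMore} displace $r_i$ along the outward normal of the chord through its neighbours \emph{in $V_i$}, which was captured at an earlier configuration ${\cal R}_{k'}$; these need not coincide with $r_i$'s neighbours on $\con({\cal G}_{{\cal R}_{m-1}})$. Even granting that, the triangle containment requires you to control where $c_{i-1}'$ and $c_{i+1}'$ land relative to the chord $\overline{c_{i-1}c_{i+1}}$, and each of those robots moves along the outward normal of \emph{its own} neighbour-chord (again in its own, possibly different, view), which bears no a~priori relation to the chord you need. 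You also do not handle the case where $r_i$ is a vertex of $\con({\cal G}_{{\cal R}_{m-1}})$ yet $r_i\notin\CH(V_i)$: such a robot moves toward a point of $\con(V_i)$, and you have not argued that its \emph{old} position $c_i$ remains in the new hull. The paper's proof does not resolve these composition issues either---it argues per robot that the motion ``can only increase'' the hull and leaves it at that---so your proposal is at least as complete as the original; but the triangle step as written does not close the argument.
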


\begin{proof}
The possible events $e_m$ are:
\begin{enumerate}
\item [(A)] $e_m$ involves (directly) robot $r_i$. If $r_i$ in ${\cal R}_{m-1}$ is in state {\bf Wait}, {\bf Look} or {\bf Compute}, then it trivially holds that none of the possible events $e_m$ can affect the $\con$. 
So, it remains to consider the case that $r_i$ in ${\cal R}_{m-1}$ is in state {\bf Move}.
In this case, there are three possible cases for event $e_m$: $stop(r_i),~arrive(r_i)$ or $collide(X), r_i\in X$.

Since $r_i$ is in state {\bf Move}, then it is following a trajectory ($start$,$target$), where
$start$ is the position of its center when it start moving, and $target$ is the position it wants to reach, as it was
calculated when the robot was in state {\bf Compute} (it is possible that $start = target$), say in ${\cal R}_k,~k<m-1$. 
Furthermore, $r_i$ made decisions based on the view the robot obtained while in state 
{\bf Look}, in some configuration $R_{k'},~k'<k$. It follows that $k'<m-1$, hence the lemma Hypothesis applies (i.e., for ${\cal G}_{{\cal R}_{k'}}$ either property c1 or c2 hold). In other words, for configurations $R_{k'}$ through $R_{m-1}$, $V_i \subseteq {\cal G}_{{\cal R}_{k'}}$. 
Now, for $V_i$ we have the following possible cases:
%Specifically, $r_i$ had a view where c1 or c2 was true for ${\cal R}_k$. 
%Now, for For ${\cal R}_{k'}$, $k'<k<m-1$, we get the following possible cases:
\begin{itemize}
\item $r_i \in \CH(V_i)$. %\CH({\cal G}_{{\cal R}_{k'}})$. 
Based on Lemma~\ref{StartLemma} and Function {\tt Start}, $r_i$ first gets into state {\bf Compute.OnConvexHull}. Then, based on Lemma~\ref{OnConvexHullLemma} and Function {\tt OnConvexHull}, $r_i$ gets into state {\bf Compute.NotAllOnConvexHull}, since c1 or c2 is true for 
${\cal G}_{{\cal R}_{k'}}$. Now the following are possible: %$|\CH({{\cal G}_{{\cal R}_{k}}})|<n$.
\begin{itemize}
\item $r_i$ is on straight line with two other robots that are also on the convex hull. %are on$\in \CH({{\cal G}_{{\cal R}_{k}}})$
In this case, per Lemma~\ref{NotAllOnConvexHullLemma} and Function {\tt NotAllOnConvexHull}, robot $r_i$ gets into state 
{\bf Compute.OnStraightLine}. Then two cases are possible:

\begin{itemize}
\item $r_i$ is in the middle of the two other robots. %that $\in \CH({{\cal G}_{{\cal R}_{k}}})$
Based on Lemma~\ref{OnStraightLineLemma} and Function {\tt OnStraightLine}, robot $r_i$ gets to state {\bf Compute.SeeTwoRobots}
and runs Procedure {\tt SeeTwoRobots}. Based on  Procedure {\tt SeeTwoRobots} and per Lemma ~\ref{SeeTwoRobotLemma}, the procedure returns a point $p$ with direction {\em away} from the convex hull (as witnessed
in view $V_i$ in configuration $R_{k'}$). If $e_m$ is $Stop(r_i)$ or $Collide(X),~r_i\in X$, $r_i$'s position in 
$R_m$ is a point between the trajectory ($c_i,~p$), $c_i$ being the position of $r_i$'s in $V_i$. Hence   
%$r_i$ moves a distance of at least $\delta$ with direction from $c_i$ to $p$,  hence 
$\con({\cal G}_{{\cal R}_{m-1}})$ can only increase (it certainly cannot decrease since it is moving out of the convex hull).
If $e_m = Arrive(r_i)$, then $r_i$ reaches point $p$, which again means that 
$\con({\cal G}_{{\cal R}_{m-1}})$ can only increase.
%\end{itemize}
\item  $r_i$ is not in the middle of the two other robots. % that $\in \CH({{\cal G}_{{\cal R}_{k}}})$
Based on Lemma~\ref{OnStraightLineLemma} and Function {\tt OnStraightLine}, robot $r_i$ gets into state {\bf Compute.SeeOneRobot}
and runs Procedure {\tt SeeOneRobot}. Based on Lemma ~\ref{SeeOneRobotLemma}, the procedure returns $c_i$, that is, the robot does not move. 
Hence $r_i$ does not cause $\con({\cal G}_{{\cal R}_{m-1}})$ to change. 
\end{itemize}
\item $r_i$ is not on a straight line with any two other robots that are also on the convex hull. % $\in \CH({{\cal G}_{{\cal R}_{k}}})$
Per Lemma~\ref{NotAllOnConvexHullLemma} and Function {\tt NotAllOnConvexHull}, robot $r_i$ gets into state {\bf Compute.NotOnStraightLine}.
Then the following cases are possible:
\begin{itemize}
\item Condition c1 holds and $r_i$ sees that there exist enough space for at least one robot to get on the convex hull. 
%on $\CH({{\cal G}_{{\cal R}_{k}}})$
Then, per Lemma~\ref{NotOnStraightLineLemma} and Function {\tt NotOnStraightLine}, robot $r_i$ gets into state {\bf Compute.SpaceForMore}
and runs Procedure {\tt SpaceForMore}, per Lemma ~\ref{SpaceForMoreLemma}, it returns $c_i$, that is, the robot does not move, or it moves with direction outside of the convex hull. Hence $r_i$ does not 
cause $\con({\cal G}_{{\cal R}_{m-1}})$ to change or it causes $\con({\cal G}_{{\cal R}_{m-1}})$ to increase.
\item Condition c1 holds and $r_i$ sees that there is not enough space for at least one robot to get on the convex hull. 
%on $\CH({{\cal G}_{{\cal R}_{k}}})$
In this case, per Lemma~\ref{NotOnStraightLineLemma} and Function {\tt NotOnStraightLine}, robot $r_i$ gets into state 
{\bf Compute.NoSpaceForMore} and runs Procedure {\tt NoSpaceForMore}. Based on Lemma ~\ref{NoSpaceForMoreLemma}, the procedure returns a point $p$ with direction {\em away} 
from the convex hull (as witnessed in view $V_i$ in configuration $R_{k'}$). Then, using the exact reasoning as above (when the Procedure
{\tt SeeTwoRobots} is run), it follows that $\con({\cal G}_{{\cal R}_{m-1}})$ can only increase.
\item The case that Condition c2 holds is handled identically as above, depending what $r_i$ sees. 
\end{itemize}
\end{itemize}
%\end{itemize}
\item $r_i \notin \CH(V_i)$. (Only when condition c1 holds.)
Based on Lemma~\ref{StartLemma} and Function {\tt Start}, $r_i$ gets into state {\bf Compute.NotOnConvexHull}.
Then we have the following cases. 
\begin{itemize}
\item $r_i$ is touching another robot. 
Based on Lemma~\ref{NotOnConvexHullLemma} and Function {\tt NotOnConvexHull}, $r_i$ gets into state {\bf Compute.IsTouching}
and runs Procedure {\tt IsTouching}, based on Lemma ~\ref{IsTouchingLemma}, it returns a point $p\in\CH({\cal G}_{{\cal R}_{k'}})$ (that is, a point towards
the witnessed convex hull) or $c_i$. So, this means, regardless if $e_m$ is a {\em Stop, Collide or Arrive} event on $r_i$, 
robot $r_i$
can reach up to the boundary of $CH({\cal G}_{{\cal R}_{k'}})$. Then it is not difficult to see that $r_i$
does not cause $\con({\cal G}_{{\cal R}_{m-1}})$ to change ($r_i$ will either be on the boundary or inside of $\con({\cal G}_{{\cal R}_{m-1}})$).
%or PThree possible events can happen:
%\begin{itemize}
%\item Stop($r_i$) or Collide($r_i$)
%$r_i$ moves a distance of at least $\delta$ with direction from $c_i$ to $p$, hence  it does not cause $\con({\cal G}_{{\cal R}_{m-1}})$ to change.
%
%\item Arrive($r_i$)
%$r_i$ moves to $p$, hence  it does not cause $\con({\cal G}_{{\cal R}_{m-1}})$ to change.
%\end{itemize}
\item $r_i$ is not touching any other robot.
Based on Lemma~\ref{NotOnConvexHullLemma} and Function {\tt NotOnConvexHull}, $r_i$ moves to state {\bf Compute.NotTouching}. 
\begin{itemize}
\item $r_i$ can move towards $\CH({{\cal G}_{{\cal R}_{k'}}})$ without causing it to change.
Then, per Lemma~\ref{NotTouchingLemma} and Function {\tt NotTouching}, $r_i$ gets into state {\bf Compute.NotChange}
and runs Procedure {\tt NotChange}, based on Lemma ~\ref{NotChangeLemma}, it returns a point $p \in \CH({\cal G}_{{\cal R}_{k'}})$. If $V_i\subseteq {{\cal G}_{{\cal R}_{k}}}$,
then as above, it follows that $r_i$ does not cause $\con({\cal G}_{{\cal R}_{m-1}})$ to change. If $V_i\neq{{\cal G}_{{\cal R}_{k}}}$ and  $V_i$ is before ${{\cal G}_{{\cal R}_{k}}}$, it follows that $\con({{\cal G}_{{\cal R}_{k'}}})$ could only expand and it is not possible for both c1 and c2 to be false, since $r_i \notin \CH({\cal G}_{{\cal R}_{k'}})$.Hence $r_i$ could have only cause $\con({{\cal G}_{{\cal R}_{m-1}}})$ to expand or did not caused any change because $\con({{\cal G}_{{\cal R}_{m-1}}})$ is bigger compared to $V_i$.
%Three possible events can happen:
%\begin{itemize}
%\item Stop($r_i$) or Collide($r_i$)
%$r_i$ moves a distance of at least $\delta$ with direction from $c_i$ to $p$, hence  it does not cause $\con({\cal G}_{{\cal R}_{m-1}})$ to change.
%\item Arrive($r_i$)
%$r_i$ moves to $p$, hence  it does not cause $\con({\cal G}_{{\cal R}_{m-1}})$ to change.
%\end{itemize}
\item $r_i$ cannot move towards $\CH({{\cal G}_{{\cal R}_{k'}}})$ without causing it to change.
Based on Lemma~\ref{NotTouchingLemma} and Function {\tt NotTouching}, $r_i$ gets into state {\bf Compute.ToChange}
and runs Procedure {\tt ToChange}. Based on Lemma ~\ref{ToChangeLemma}, the procedure returns a point $p \in \CH({\cal G}_{{\cal R}_{k'}})$ or $c_i$. If it is $c_i$, it follows that it does not cause $\con({\cal G}_{{\cal R}_{m-1}})$ to change.
Else, in the case $r_i$ does not arrive to $p$ (events {\em Stop or Collide}) then it follows that
it does not cause $\con({\cal G}_{{\cal R}_{m-1}})$ to change. In the case it arrives to $p$ (event {\em Arrive}) and $V_i\subseteq {{\cal G}_{{\cal R}_{k}}}$,
it is not difficult to see that $\con({\cal G}_{{\cal R}_{m-1}})$ can only increase (if for example
$\con({\cal G}_{{\cal R}_{k'}}) = \con({\cal G}_{{\cal R}_{m-1}})$, then $r_i$ it causes it to change,
but not to decrease). If $V_i\neq{{\cal G}_{{\cal R}_{k}}}$ and  $V_i$ is before ${{\cal G}_{{\cal R}_{k}}}$, it follows that $\con({{\cal G}_{{\cal R}_{k'}}})$ could only expand and it is not possible for both c1 and c2 to be false, since $r_i \notin \CH({\cal G}_{{\cal R}_{k'}})$.Hence $r_i$ could have only cause $\con({{\cal G}_{{\cal R}_{m-1}}})$ to expand or did not caused any change because $\con({{\cal G}_{{\cal R}_{m-1}}})$ is bigger compared to $V_i$.
%Three possible events can happen:
%\begin{itemize}
%\item Stop($r_i$) or Collide($r_i$)
%$r_i$ moves a distance of at least $\delta$ with direction from $c_i$ to $p$, hence  it does not cause $\con({\cal G}_{{\cal R}_{m-1}})$ to change.
%\item Arrive($r_i$)
%$r_i$ moves to $p$, hence  it does not cause $\con({\cal G}_{{\cal R}_{m-1}})$ to change.
%\end{itemize}
\end{itemize}
\end{itemize}
\end{itemize}

%b) c2 is true:
%This is a sub-case of a.

\item [(B)] $e_m$ involves indirectly a robot $r_j$ that is in state {\bf Move}.
This follows the same exact reasoning as with the case where $e_m$ involves
directly robot $r_i$ while in state {\bf Move}.\hfill$\blacksquare$
\end{enumerate}
\end{proof}

\begin{lemma}
\label{FullyVisibleLemma}
Given a good execution fragment ${\cal R}_x,e_x,\dots, {\cal R}_{m-1}$ such that $\forall {\cal R}_k$, $x \le k \le m-1$ holds that

 c1: $|\CH({\cal G}_{{\cal R}_{k}})|=n$ and ${\cal G}_{{\cal R}_{k}}$ is a $fully$ $visible$ configuration

AND

c2: ${\cal G}_{{\cal R}_{k}}$ is not a $connected$ configuration,

then for any step  $\langle {\cal R}_{m-1},e_m,  {\cal R}_{m}\rangle$, c1 holds for ${\cal G}_{{\cal R}_{m}}$ and $\con({\cal G}_{{\cal R}_{m-1}}) \supseteq \con({\cal G}_{{\cal R}_{m}})$

\end{lemma}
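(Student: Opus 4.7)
The plan is a case analysis on the event $e_m$ together with which robots actually change position between ${\cal R}_{m-1}$ and ${\cal R}_m$. First, if $e_m$ is an event that moves no robot (for example, a \emph{Look}, \emph{Compute} or \emph{Done} event acting on an idle robot with no other robot currently in state {\bf Move}) then ${\cal G}_{{\cal R}_m}={\cal G}_{{\cal R}_{m-1}}$ and both conclusions are immediate. So the substantive case is when one or more robots $r_i$, all currently in state {\bf Move} at ${\cal R}_{m-1}$, advance partially or fully along their trajectories. For each such $r_i$, the target $p=A_i(V_i)$ was computed in some earlier {\bf Compute} phase from a view $V_i$ snapped at a configuration ${\cal R}_{k'}$, $k'<m-1$. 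I would then split on when $V_i$ was captured.

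If $x\le k'\le m-1$, hypotheses c1 and c2 both hold for $V_i$, so during {\bf Compute} robot $r_i$ walked the chain {\bf Start}\,$\to$\,{\bf OnConvexHull}\,$\to$\,{\bf AllOnConvexHull}\,$\to$\,{\bf NotConnected} (by Lemmas~\ref{StartLemma},~\ref{OnConvexHullLemma}, and~\ref{AllOnConvexHullLemma}) and invoked Procedure {\tt NotConnected}. Lemma~\ref{NotConnectedLemma} certifies that $p$ keeps $V_i$ fully visible with all $n$ centers on the convex hull once $r_i$ is placed at $p$; in particular both $c_i$ (the old center of $r_i$ in $V_i$) and $p$ lie in $\con(V_i)$. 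Inducting on the length of the fragment (using the very statement being proved), $\con({\cal G}_{{\cal R}_{m-1}})=\con({\cal G}_{{\cal R}_{k'}})=\con(V_i)$, so the entire segment $\overline{c_ip}$, and hence any intermediate position that $r_i$ actually reaches in ${\cal R}_m$, stays inside the current convex hull.

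If $k'<x$, Lemma~\ref{IncreaseLemma} applied to the prefix gives $\con({\cal G}_{{\cal R}_{k'}})\subseteq \con({\cal G}_{{\cal R}_x})$; combined with the fragment inductive hypothesis, $\con({\cal G}_{{\cal R}_{k'}})\subseteq \con({\cal G}_{{\cal R}_{m-1}})$. Goodness of the fragment excludes bad configurations of Types~1 and~2 throughout ${\cal R}_x,\ldots,{\cal R}_{m-1}$, which in turn excludes exactly the stale-view patterns that could push $p$ outside $\con({\cal G}_{{\cal R}_x})$ or force a collinear triple on the current hull. Invoking Lemmas~\ref{SpaceForMoreLemma},~\ref{NoSpaceForMoreLemma},~\ref{SeeTwoRobotLemma},~\ref{IsTouchingLemma},~\ref{NotChangeLemma} and~\ref{ToChangeLemma} to locate the targets produced by the various procedures, each returned $p$ is at worst a point of $\con({\cal G}_{{\cal R}_{k'}})$, so the traversed segment again lies inside the current hull. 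In both sub-cases c1 is preserved at ${\cal R}_m$: Case~A inherits full visibility and $|\CH|=n$ directly from Lemma~\ref{NotConnectedLemma}, while Case~B inherits them from the absence of bad patterns.

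The main obstacle is asynchronous concurrency. Lemma~\ref{NotConnectedLemma} only certifies $p$ ``provided other robots do not move,'' yet several robots may simultaneously advance along trajectories computed from independent (possibly different) views. I would close this gap by exploiting the conservative step magnitudes built into Procedure {\tt NotConnected} (perpendicular offsets of at most $\frac{1}{2n}-\epsilon$ toward the interior and {\tt Move-to-Point} displacements of the same order) together with the safe-distance bound of Lemma~\ref{CHlengthLemma}: the union of these small, locally safe moves cannot simultaneously drop a robot off the hull, obstruct a pairwise line-of-sight along the hull, or create three collinear hull vertices. Formalizing this concurrency argument is the technical heart of the proof.
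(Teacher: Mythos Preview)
Your approach mirrors the paper's proof closely: both perform a case split on the event $e_m$, reduce to robots currently in state {\bf Move}, then sub-split on whether the snapshot $V_i$ was taken inside the fragment ($x\le k'$) or before it ($k'<x$), invoking the same chain of procedure lemmas and appealing to the exclusion of bad configurations of Types~1 and~2 to rule out the dangerous stale-view sub-cases.

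There is one concrete slip. In your Case~A you write $\con({\cal G}_{{\cal R}_{m-1}})=\con({\cal G}_{{\cal R}_{k'}})=\con(V_i)$, but the inductive hypothesis only yields the containment $\con({\cal G}_{{\cal R}_{m-1}})\subseteq\con({\cal G}_{{\cal R}_{k'}})$ (the hull shrinks over the fragment, not stays equal). So the argument ``the segment $\overline{c_ip}$ lies in the \emph{current} hull'' does not follow as written; you only get that it lies in the older, possibly larger hull $\con({\cal G}_{{\cal R}_{k'}})$. The paper's proof is itself somewhat loose at this point, arguing only that $p\in\con({\cal G}_{{\cal R}_{k'}})$ is strictly interior so the hull ``can only shrink,'' without explicitly reconciling this with a hull that may already have shrunk between $k'$ and $m-1$.

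For the $k'<x$ sub-case the paper is more explicit than your sketch: rather than asserting that goodness ``excludes exactly the stale-view patterns,'' it walks through each terminal algorithmic state ({\tt SeeTwoRobot}, {\tt NoSpaceForMore}, {\tt SpaceForMore}, {\tt IsTouching}, {\tt ToChange}, {\tt NotChange}) and shows for each that either the sub-case is a Type~1 or Type~2 bad configuration (hence excluded), or the robot returns $c_j$ (no motion), or the sub-case is outright impossible because a robot still strictly inside the old hull and not yet arrived would contradict $|\CH({\cal G}_{{\cal R}_{m-1}})|=n$. This last impossibility argument is a key piece you should make explicit. Your final paragraph on concurrency via the $\frac{1}{2n}-\epsilon$ step bounds goes beyond what the paper actually proves; the paper does not carry out that quantitative interference analysis.
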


\begin{proof}
The possible events $e_{m}$ are:

\begin{enumerate}
\item $e_m$ involves (directly) robot $r_i$. If $r_i$ in ${\cal R}_{m-1}$ is in state {\bf Wait}, {\bf Look} or {\bf Compute}, then it trivially holds that none of the possible events $e_m$ can affect the $\con$. 
So, it remains to consider the case that $r_i$ in ${\cal R}_{m-1}$ is in state {\bf Move}.
In this case, there are three possible cases for event $e_m$: $stop(r_i),~arrive(r_i)$ or $collide(X), r_i\in X$.
\begin{enumerate}
\item [(A)]
Since $r_i$ is in state {\bf Move}, then it is following a trajectory ($start$,$target$), where
$start$ is the position of its center when it start moving, and $target$ is the position it wants to reach, as it was
calculated when the robot was in state {\bf Compute} (it is possible that $start = target$), say in ${\cal R}_k,~k<m-1$. 
Furthermore, $r_i$ made decisions based on the view the robot obtained while in state 
{\bf Look}, in some configuration $R_{k'},~k'<k$. It follows that $k'<m-1$, hence the lemma Hypothesis applies (i.e., for ${\cal G}_{{\cal R}_{k'}}$ properties c1 and c2 hold). In other words, for configurations $R_{k'}$ through $R_{m-1}$, $V_i \subseteq {\cal G}_{{\cal R}_{k'}}$. 
Now, for $V_i$ we have the following possible cases:

\begin{itemize}
\item Robot $r_i \in \CH({\cal G}_{{\cal R}_{k'}})$, $|\CH({\cal G}_{{\cal R}_{k'}})|=n$ and ${\cal G}_{{\cal R}_{k'}}$ is a $fully$ $visible$ configuration because c1 is true.

Based on Lemma ~\ref{StartLemma} and Function {\tt Start}, $r_i$ moves to state {\bf OnConvexHull}. Based on Lemma ~\ref{OnConvexHullLemma} and Function {\tt OnConvexHull}, $r_i$ moves to state{\bf AllOnConvexHull}. Based on Lemma ~\ref{AllOnConvexHullLemma} and Function {\tt AllOnConvexHull}, $r_i$ moves to state {\bf NotConnected}. Procedure {\tt NotConnected} returns a point $p \in \con({\cal G}_{{\cal R}_{k'}})$ . Three possible events can happen:
\begin{itemize}
\item Stop($r_i$) or Collide($r_i$)

$r_i$ moves a distance of at least $\delta$ with direction from $c_i$ to $p$. Because of Lemma ~\ref{NotConnectedLemma}, $r_i$ does not cause $|\CH({\cal G}_{{\cal R}_{k'}})|<n$ or ${\cal G}_{{\cal R}_{k'}}$ to be not a $fully$ $visible$ configuration. Because $p \in \con({\cal G}_{{\cal R}_{k'}})$ and $p \notin \CH({\cal G}_{{\cal R}_{k'}})$, it follows that $\con({\cal G}_{{\cal R}_{k}})$ can only shrink.

\item Arrive($r_i$)

$r_i$ moves to $p$. Because of Lemma ~\ref{NotConnectedLemma}, $r_i$ does not cause $|\CH({\cal G}_{{\cal R}_{k'}})|<n$ or ${\cal G}_{{\cal R}_{k'}}$ to be not a $fully$ $visible$ configuration. Because $p \in \con({\cal G}_{{\cal R}_{k'}})$ and $p \notin \CH({\cal G}_{{\cal R}_{k'}})$, it follows that $\con({\cal G}_{{\cal R}_{k}})$ can only shrink.
\end{itemize}

\item Otherwise

This case is not possible, since c1 is true.
\end{itemize}

Another robot $r_j$ could also was in state {\bf Move} in $e_{m}$. We get the following cases:

a) $r_j$ is in a trajectory ($start$,$target$), that was decided on a robot configuration, say ${\cal R}_k$. It follows that $x<k<m-1$, hence Lemma Hypothesis applies.
Specifically, $r_j$ had a view where c1 and c2 were true for ${\cal R}_k$.
This is the same case with $r_i$ (previous).

b) $r_j$ is in a trajectory ($start$,$target$), that was decided on a robot configuration, say ${\cal R}_k$. It follows that $k<x$.
\begin{itemize}
\item c1 and c2 in ${\cal R}_k$ were true

This is the same case with 1-A

\item c1 was not true in  ${\cal R}_k$. We get the following cases:
\begin{itemize}

\item $r_j \in \CH({\cal G}_{{\cal R}_{k}})$ 

Based on Lemma ~\ref{StartLemma} and Function {\tt Start}, $r_j$ moves to state {\bf Compute.OnConvexHull}. Based on Lemma ~\ref{OnConvexHullLemma} and Function {\tt OnConvexHull}, $r_j$ changes to state {\bf Compute.NotAllOnConvexHull}, because $|\CH({{\cal G}_{{\cal R}_{k}}})|<n$.

\begin{itemize}
\item $r_j$ is on straight line with any two other robots that $\in \CH({{\cal G}_{{\cal R}_{k}}})$

Based on Lemma ~\ref{NotAllOnConvexHullLemma} and Function {\tt NotAllOnConvexHull}, robot $r_j$ moves to state {\bf Compute.OnStraightLine}.

\begin{itemize}
\item $r_j$ is in the middle of two other robots that $\in \CH({{\cal G}_{{\cal R}_{k}}})$

Based on Lemma ~\ref{OnStraightLineLemma} and Function {\tt OnStraightLine}, robot $r_i$ moves to state {\bf Compute.SeeTwoRobots}. Based on 
Procedure {\tt SeeTwoRobots} and per Lemma ~\ref{SeeTwoRobotLemma}, it returns a point $p$ with direction away from the convex hull.

This case is not possible to happen since it is considered as bad configuration of Type 2.

\item  $r_j$ is not in the middle of two other robots that $\in \CH({{\cal G}_{{\cal R}_{k}}})$

Based on Lemma ~\ref{OnStraightLineLemma} and Function {\tt OnStraightLine}, robot $r_j$ moves to state {\bf Compute.SeeOneRobot}.
Procedure {\tt SeeOneRobot} based on Lemma ~\ref{SeeOneRobotLemma}, returns $c_j$, hence $r_j$ does not cause $\con({\cal G}_{{\cal R}_{m-1}})$ to change.

\end{itemize}
\item $r_j$ is not on straight line with any two other robots that $\in \CH({{\cal G}_{{\cal R}_{k}}})$

Based on Lemma ~\ref{NotAllOnConvexHullLemma} and Function {\tt NotAllOnConvexHull}, robot $r_j$ moves to state {\bf Compute.NotOnStraightLine}.

\begin{itemize}
\item $r_j$ sees that there exist enough space for at least one robot on $\CH({{\cal G}_{{\cal R}_{k}}})$

Based on Lemma ~\ref{NotOnStraightLineLemma} and Function {\tt NotOnStraightLine}, robot $r_j$ moves to state {\bf Compute.SpaceForMore}. Procedure {\tt SpaceForMore} based on Lemma ~\ref{SpaceForMoreLemma}, returns $c_j$ or $p$ a point outside of the convex hull if $r_j$ touches another not adjacent robot on $\CH({\cal G}_{{\cal R}_{m-1}})$ . If it returns $c_j$ $r_j$ does not cause $\con({\cal G}_{{\cal R}_{m-1}})$ to change. The case that $r_j$ touches another not adjacent robot on $\CH({\cal G}_{{\cal R}_{m-1}})$ 
is impossible because this means that the two robots that are touching block at least one robot from seeing other robots, hence it is impossible to have $fully$ $visible$ and this situation.

\item $r_j$ sees that not enough space exists for at least one robot on $\CH({{\cal G}_{{\cal R}_{k}}})$

Based on Lemma ~\ref{NotOnStraightLineLemma} and Function {\tt NotOnStraightLine}, robot $r_j$ moves to state {\bf Compute.NoSpaceForMore}. Procedure {\tt NoSpaceForMore} based on Lemma ~\ref{NoSpaceForMoreLemma}, returns a point $p$ with direction away from the convex hull.

This case is not possible to happen since it is considered as bad configuration of Type 1.

\end{itemize}

\end{itemize}

\item $r_j \notin \CH({{\cal G}_{{\cal R}_{k}}})$

Based on Lemma ~\ref{StartLemma} and Function {\tt Start}, $r_j$ moves to state {\bf Compute.NotOnConvexHull}

\begin{itemize}
\item $r_j$ is touching another robot.

Based on Lemma ~\ref{NotOnConvexHullLemma} and Function {\tt NotOnConvexHull}, $r_j$ moves to state {\bf Compute.IsTouching}. Procedure {\tt IsTouching} based on Lemma ~\ref{IsTouchingLemma}, returns a point $p \in \CH({\cal G}_{{\cal R}_{k}})$ or $c_j$.

This case is not possible to happen, because if $r_j$ did not arrived to p before $e_{m}$, it is not possible for $|\CH({\cal G}_{{\cal R}_{m-1}})|=n$, since no robot that belongs to $\CH({\cal G}_{{\cal R}_{m-1}})$ moves and neither does $r_j$.

\item $r_j$ is not touching any other robot.

Based on Lemma ~\ref{NotOnConvexHullLemma} and Function {\tt NotOnConvexHull}, $r_j$ moves to state {\bf Compute.NotTouching}. 

\begin{itemize}
\item $r_j$ can move to $\CH({{\cal G}_{{\cal R}_{k}}})$ without causing it to change

Based on Lemma ~\ref{NotTouchingLemma} and Function {\tt NotTouching}, $r_j$ moves to state {\bf Compute.NotChange}. Based on Lemma ~\ref{NotChangeLemma}, Procedure {\tt NotChange} returns a point $p \in \CH({\cal G}_{{\cal R}_{k}})$.

This case is not possible to happen, because if $r_j$ did not arrived to p before $e_{m}$, it is not possible for $|\CH({\cal G}_{{\cal R}_{m-1}})|=n$, since no robot that belongs to $\CH({\cal G}_{{\cal R}_{m-1}})$ moves and neither does $r_j$.

\item $r_j$ cannot move to $\CH({{\cal G}_{{\cal R}_{k}}})$ without causing it to change

Based on Lemma ~\ref{NotTouchingLemma} and Function {\tt NotTouching}, $r_j$ moves to state {\bf Compute.ToChange}. Based on Lemma ~\ref{ToChangeLemma}, Procedure {\tt ToChange} returns a point $p \in \CH({\cal G}_{{\cal R}_{k}})$ or $c_j$.

This case is not possible to happen, because if $r_j$ did not arrived to p before $e_{m}$, it is not possible for $|\CH({\cal G}_{{\cal R}_{m-1}})|=n$, since no robot that belongs to $\CH({\cal G}_{{\cal R}_{m-1}})$ moves and neither does $r_j$.
\end{itemize}

\end{itemize}

\end{itemize}

\end{itemize}

\item [(B)] $r_i$ is in a trajectory ($start$,$target$), that was decided on a robot configuration, say ${\cal R}_k$. It follows that $k<x$.

This is a similar case with $r_j$ in 1-A-b.
\end{enumerate}

\item $e_{m-1}$ on $r_j$ (indirect)

This is the same case with $r_j$ in 1-A-a and 1-A-b.\hfill$\blacksquare$
\end{enumerate}

\end{proof}

\subsubsection*{Liveness Properties}

\begin{lemma}
\label{ReachFullyVisibleLemma}
Given any good execution of the algorithm, there exists a configuration ${\cal R}_m$ such that $|\CH({\cal G}_{{\cal R}_m})|=n$ and ${\cal G}_{{\cal R}_m}$ is a fully visible configuration.
\end{lemma}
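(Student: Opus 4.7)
The plan is to argue by contradiction, using Lemma~\ref{IncreaseLemma} for monotonicity together with the liveness conditions to force progress. Suppose no such ${\cal R}_m$ exists in the good execution; then every reached configuration ${\cal R}_k$ satisfies either $|\CH({\cal G}_{{\cal R}_k})|<n$ (condition c1) or $|\CH({\cal G}_{{\cal R}_k})|=n$ but ${\cal G}_{{\cal R}_k}$ is not fully visible (condition c2). By Lemma~\ref{IncreaseLemma}, the convex hull $\con({\cal G}_{{\cal R}_k})$ is monotonically (set-theoretically) non-decreasing throughout the execution.

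First I would treat the case in which $|\CH({\cal G}_{{\cal R}_k})|<n$ holds forever. Define the potential $\Phi_k = |\CH({\cal G}_{{\cal R}_k})|\leq n$. I claim $\Phi$ is weakly monotone non-decreasing along the execution: a direct case analysis over the procedures (Procedures~\texttt{SeeTwoRobot}, \texttt{SpaceForMore}, \texttt{NoSpaceForMore} move a hull-robot \emph{outward}, which by Lemma~\ref{IncreaseLemma} keeps all previous hull-robots on the hull; Procedures~\texttt{IsTouching}, \texttt{NotChange}, \texttt{ToChange} move an interior robot toward the hull, which only adds a vertex) shows that no event can decrease $\Phi$. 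Next I need the crucial progress claim: whenever $\Phi_k < n$ there is some later step at which $\Phi$ strictly increases. Fix an interior robot $r_i\notin\CH({\cal G}_{{\cal R}_k})$. By the liveness condition, $r_i$ executes infinitely many \textbf{Look}--\textbf{Compute}--\textbf{Move} cycles. Depending on which branch it takes (given by Lemmas~\ref{NotOnConvexHullLemma},~\ref{NotTouchingLemma},~\ref{IsTouchingLemma},~\ref{NotChangeLemma},~\ref{ToChangeLemma}), either $r_i$ itself reaches a point on the current convex hull, or Lemma~\ref{IsTouchingLemma2} guarantees that one of the robots $r_i$ is tangent to has higher proximity and moves hullward. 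By the $\delta$-move condition, after at most $\lceil D/\delta\rceil$ cycles (where $D$ is the bounded distance to the target computed from the eventually-stable view) some interior robot arrives at a point lying on the current hull, so $\Phi$ strictly increases. Since $\Phi\le n$, this yields $\Phi_k=n$ in finitely many steps, contradicting the assumption that c1 always holds.

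It remains to handle the case in which $|\CH({\cal G}_{{\cal R}_k})|=n$ but ${\cal G}_{{\cal R}_k}$ is never fully visible. Not being fully visible while all robots are on the hull means some triple $r_l,r_m,r_r$ of hull-neighbors is collinear; by Lemma~\ref{NotAllOnConvexHullLemma} the middle robot $r_m$ enters state \textbf{Compute.OnStraightLine}, and by Lemma~\ref{OnStraightLineLemma} then \textbf{Compute.SeeTwoRobot}. By Lemma~\ref{SeeTwoRobotLemma} the returned point $p$ lies strictly outside the line $\overline{c_lc_r}$, and by liveness plus the $\delta$-condition $r_m$ eventually arrives at $p$ (or at a point along its trajectory that is also off the line). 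Each such successful move strictly decreases the finite quantity ``number of collinear triples on the hull'' without (by Lemma~\ref{IncreaseLemma} and the outward direction) destroying the property $|\CH|=n$. Hence after finitely many such moves no three hull-robots are collinear, so by the observation in the proof of Lemma~\ref{OnConvexHullLemma} the configuration is fully visible, again a contradiction.

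The main obstacle I anticipate is the progress argument inside the first case: because of asynchrony, a robot's view $V_i$ may be an old snapshot, and multiple robots may be simultaneously trying to enter the hull at nearby points, so one must argue that the ``higher proximity'' tie-breaking in Procedure~\texttt{IsTouching} (together with Lemma~\ref{IsTouchingLemma2} and the safe-distance bound from Lemma~\ref{CHlengthLemma}) guarantees that \emph{some} interior robot actually completes its move to a hull point, rather than all interior robots perpetually yielding to one another. This is the place where the $\delta$-lower-bound on movement and the monotonicity from Lemma~\ref{IncreaseLemma} have to be combined carefully to rule out Zeno-like behavior of the adversary.
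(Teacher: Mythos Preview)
Your proposal has a genuine gap: the potential $\Phi_k = |\CH({\cal G}_{{\cal R}_k})|$ is \emph{not} monotone non-decreasing along the execution, so the core progress argument collapses. When a hull robot moves outward (via \texttt{SeeTwoRobot}, \texttt{NoSpaceForMore}, or \texttt{SpaceForMore}), the new convex hull strictly contains the old one as a region, but a neighbouring robot that was previously a vertex may now lie strictly inside the new hull; likewise, an interior robot reaching the hull via \texttt{ToChange} can push a former vertex into the interior. Lemma~\ref{IncreaseLemma} guarantees monotonicity of the hull \emph{region} $\con({\cal G}_{{\cal R}_k})$, not of the vertex count. The paper's own proof acknowledges this explicitly (``some robots that get on the convex hull cause some other robots to no longer be on the convex hull''), so your case analysis that each procedure ``only adds a vertex'' or ``keeps all previous hull-robots on the hull'' is simply false.

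This also breaks your separate treatment of the two cases. Condition c2 ($|\CH|=n$ but not fully visible) does \emph{not} persist once the middle robot moves outward: that move can knock a neighbour off the hull and return the system to condition c1. The paper's proof handles this by allowing c1 and c2 to alternate, using as the true monotone quantity the hull region itself, and invoking Lemma~\ref{CHlengthLemma}: once the region has expanded so far that every pair of adjacent hull robots is at the \emph{safe distance}, any interior robot can be placed on the hull via \texttt{Find-Points} without displacing anyone, and any straight-line fix no longer kicks robots off. Your argument would need to replace $\Phi$ with the hull region (or equivalently, argue via the safe-distance threshold) and then show that the alternation between c1 and c2 terminates once that threshold is crossed; the $\delta$-liveness condition is used there, not to bound the number of cycles for a single robot to reach a fixed target, but to ensure the region grows by a definite amount per outward move.
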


\begin{proof}
If ${\cal R}_0$ has the stated properties, there is nothing to prove. So consider the case that ${\cal R}_0$ is either c1: $|\CH({\cal G}_{{\cal R}_0})|<n$ or c2: $|\CH({\cal G}_{{\cal R}_0})|=n$ and ${\cal G}_{{\cal R}_0}$ is not a fully visible configuration.

Based on Lemma ~\ref{IncreaseLemma}, if c1 or c2 is true, then $\CH({\cal G}_{{\cal R}_0})$ can only expand, hence $\CH({\cal G}_{{\cal R}_0})$ will not shrink unless c1 and c2 are not true. 

We first list the various cases to be considered and then we show how they are interleaved.
\begin{enumerate}
\item c1 is true.
\begin{itemize}
\item[A] There exists space for at least one robot to be on the convex hull.
\begin{itemize}
\item[i] Robots that $\in\CH$

In this case, based on Lemmas ~\ref{SpaceForMoreLemma}, ~\ref{SeeOneRobotLemma} and ~\ref{SeeTwoRobotLemma},the robots that $\in \CH$ do not move or move outside of the convex hull.

\item[ii] Robots that $\notin \CH$
\begin{itemize}
\item[a] Robots that are tangent with other robots.

In this case, based on Lemma ~\ref{IsTouchingLemma}, Robots that are tangent with other robot either stay in the same position, or move to $\CH$.

\item[b] No point on $\CH$ exists, such that Function {\tt FindPoints} will return it as valid point.

In this case, based on Lemma ~\ref{ToChangeLemma}, robots that called Function {\tt FindPoints} and no point was returned, will move to $\CH$.

\item[c]  At least a point on $\CH$ exists, such that Function {\tt FindPoints} will return it as valid point.

In this case, based on Lemma ~\ref{NotChangeLemma}, robots that called Function {\tt FindPoints} and at least a point was returned, will move to $\CH$.
\end{itemize}
\end{itemize}
\item[B] No space exists for at least one robot on $\CH$.

\begin{itemize}

\item[i] Robots that $\notin \CH$.

In this case, based on Lemmas ~\ref{IsTouchingLemma}, ~\ref{ToChangeLemma} and ~\ref{NotChangeLemma}, robots that $\notin \CH$ do not move.

\item[ii] Robots that $\in \CH$.

In this case, robots that $\in \CH$, Based on Lemmas ~\ref{NoSpaceForMoreLemma}, ~\ref{SeeOneRobotLemma} and ~\ref{SeeTwoRobotLemma} can only move with direction outside of the convex hull or stay at the same position. 

\end{itemize}

\end{itemize}

\item c2 is true.

In this case $|\CH({\cal G}_{{\cal R}_0})|=n$ and ${\cal G}_{{\cal R}_0}$ is not a fully visible configuration. This implies that at least three robots are on the same line, hence we get the following cases:

\begin{itemize}

\item[A] Robots that are not on a straight line with any two other robots.

In this case, based on Lemma ~\ref{SpaceForMoreLemma}, robots stay in the same position.

\item[B] Robots that are on the same straight line with at least two other robots but are not in the middle of any two other robots that are on the same line.

In this case, based on Lemma ~\ref{SeeOneRobotLemma}, robots stay in the same position.

\item[C] Robots that are on the same straight line with at least two other robots and are in the middle of any two other robots that are on the same line.

In this case, based on Lemma ~\ref{SeeTwoRobotLemma}, robots move outside of the convex hull.
\end{itemize}
\end{enumerate}

We now discuss how the cases above are combined to yield the claimed result.\vspace{.4em}

(a) If no space for at least one robot on the convex hull exist, this is case 1-B. In case 1-B, necessary some robots are on the convex hull and this is case 1-B-ii for some robots. Therefore, robots of case 1-B-ii will continue to expand until a space for at least one robot exists. Hence if a space does not exist, eventually a space for more robots on the convex hull will be created.

(b)If some robots that are touching are in case 1-A-ii-a, based on Lemma ~\ref{IsTouchingLemma2}, at least one robot will move. Hence, eventually the robots that were tangent will no longer be tangent in the same place.

(c) If three or more robots are on the same line, it means that at least one robot is in the middle of two other robots. The robot that is in the middle, based on Lemma ~\ref{SeeTwoRobotLemma}, will move to the outside of the convex hull. Each time the robots that are not in the middle, Based on Lemma ~\ref{SeeOneRobotLemma} will stay in the same position. Therefore, eventually no three robots will be on the same line and each time there exists a line, the convex hull expands.

(d)If c1 is true it means that at least one robot is not on the convex hull. If a space for at least one robot on the convex hull exists, then it could be one of the cases 1-A-ii. Robots in cases 1-A-ii-a (at least 1), 1-A-ii-b and 1-A-ii-c try to move on the convex hull. If at least one space on the convex hull exists, one of the robots that are inside the convex hull will move to $\CH$. Because of (c) eventually no three robots will be on the same line, hence the robots on the convex hull will be run Procedure {\tt NoSpaceForMore} (see the possible cases if c1 is true and no 3 robots are on the same line). If no space exists on the convex hull robots that are on the convex hull will move to expand to the convex hull and create more space as described earlier in (a) .
Hence if c1 is true it follows that the convex hull expands. 

(e) Based on Lemma ~\ref{CHlengthLemma}, for any two adjacent robots with centers $c_l$ and $c_r$, $c_l$ and $c_r\in \CH$, there
exists a $safe$ $distance$ between $c_l$ and $c_r$ for which a third robot $r_i$ can be on $\CH$ between $c_l$ and $c_r$ without causing it to change.

(f) Based on (d) and (e) it follows that if c1 is true convex hull will continue expanding and the number of robots that are on the convex hull will increase, until c1 is not true or the $safe$ $distance$ was reached. Some robots that get on the convex hull cause some other robots to no longer be on the convex hull. This means that the convex hull will continue to expand if c1 is true and after a safe distance between 
neighboring robots on the convex hull is reached, the next robot that is inside the convex hull can and will move on the convex hull without causing it to change. This will continue happening until all robots are on the convex hull. Hence c2 will be true.

(g) If c2 is true, it means that at least three robots are on the same line. Based on (c) the convex hull expands and eventually no three robots  will be on the same line. Some robots that move to the outside of the convex hull may cause others to no longer be on the convex hull. Then c1 will be true and based on (f) c2 will be true again. This will continue happening until safe distance is reached (The convex hull continues expanding if c1 or c2 is true). In the same way as in (f) all robots will be on the convex hull without any changes caused and based on (c) no three robots will be on the same line. Hence, all robots will be on the convex hull and all robots will have full visibility. This completes the proof.
\hfill$\blacksquare$\vspace{1em}
\end{proof}

The following lemma states that starting from any initial configuration, 
when the robots form a configuration such that all robots are on the
convex hull and they can see each other, then the robots will eventually
form a connected configuration.

\begin{lemma}
\label{ReachConnectedLemma}
Given any good execution of the algorithm, if ${\cal R}_l$ is such that $|\CH({\cal G}_{{\cal R}_l})|=n$ and  ${\cal G}_{{\cal R}_l}$ is a fully visible configuration and not a $connected$ configuration, then there exists ${\cal R}_k$, $l\le k$ so that ${\cal R}_k$ is a connected configuration.
\end{lemma}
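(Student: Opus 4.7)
The plan is to argue by combining the safety invariant established by Lemma~\ref{FullyVisibleLemma} with a well-founded progress measure on the configurations. Since ${\cal R}_l$ satisfies $|\CH({\cal G}_{{\cal R}_l})|=n$ and ${\cal G}_{{\cal R}_l}$ is fully visible, Lemma~\ref{FullyVisibleLemma} guarantees that for every subsequent step these two properties are preserved, and that $\con({\cal G}_{{\cal R}_{k}})$ can only shrink (in particular, every robot stays on the convex hull and every robot continues to see every other robot). Hence the entire remainder of the execution fragment stays in the regime handled by Procedure \texttt{NotConnected}, and no robot can ever become hidden or leave the hull. This lets us ignore the complicated transitions of \textbf{Compute.NotAllOnConvexHull} and its descendants, and focus solely on the effect of \texttt{NotConnected}.

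Next I would define a progress measure $\Phi({\cal R})=(N({\cal R}),\,D({\cal R}))$ ordered lexicographically, where $N({\cal R})$ is the number of connected components formed by the unit discs in ${\cal G}_{\cal R}$ and $D({\cal R})$ is the sum of the gap lengths between consecutive components along the hull. The key observation is that Procedure \texttt{NotConnected} imposes a strict priority order on motions: (i) the robot in the strictly smallest component that is rightmost moves toward its right neighbouring component via \texttt{Move-to-Point}; (ii) if all components have the same size, the rightmost robot of the component facing the smallest gap moves; (iii) only if all components have equal size \emph{and} all gaps are equal do all robots perform small synchronised convergence steps of length $\tfrac{1}{2n}-\epsilon$ toward the interior of the hull while the sanity checks on the vertical distance $\geq\tfrac{1}{n}$ guarantee that full visibility and the $|\CH|=n$ property are preserved (exactly the content of Lemma~\ref{NotConnectedLemma}). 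In cases (i) and (ii), after the moving robot completes a {\em Move-to-Point} trip it becomes tangent to its target's leftmost robot, so either $N$ strictly decreases or $D$ strictly decreases.

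The main obstacle, and the place where asynchrony bites hardest, is to rule out the possibility that a stale local view causes a robot to move in a way that either merges components spuriously (and later desyncs) or leaves a ``permanently moving'' robot wandering forever. For this I would argue as follows: because c1 is maintained throughout (Lemma~\ref{FullyVisibleLemma}), any robot $r_j$ still in state \textbf{Move} from an earlier view $V_j$ satisfies $V_j\subseteq {\cal G}_{{\cal R}_{l'}}$ for some $l'\ge l$; thus $V_j$ is itself fully visible and all-on-hull, so the target point computed from $V_j$ lies on or inside the current convex hull and cannot break either invariant when reached. Moreover, the priority rules in \texttt{NotConnected} are based only on component structure seen in $V_j$; since components can only merge (never split) while c1 holds, any target computed from a stale $V_j$ is still a legal merging target in the current configuration. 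Therefore the potential $\Phi$ is non-increasing along the entire fragment, and strictly decreases each time some selected robot completes its journey.

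Finally, I would close using the two liveness conditions. Since every robot takes infinitely many steps and each \textbf{Move} either traverses at least $\delta$ or reaches its target, the uniquely selected highest-priority robot in case (i) or (ii) reaches its {\em Move-to-Point} destination in finitely many events, producing a strict decrease of $\Phi$. In the degenerate case (iii) all robots take synchronised $\Theta(1/n)$ steps inward, so after finitely many such steps two components must touch, again strictly decreasing $N$. Since $N\ge 1$, after at most $N({\cal R}_l)-1$ such strict decreases we reach a configuration ${\cal R}_k$ with $N({\cal R}_k)=1$, i.e.\ a connected configuration, which is the desired conclusion.
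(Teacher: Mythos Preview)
Your overall plan mirrors the paper's proof: invoke Lemma~\ref{FullyVisibleLemma} for safety, then do a three-case analysis of Procedure \texttt{NotConnected} (smallest component moves; equal sizes but unequal gaps; everything equal so all components shrink inward), and close with the $\delta$-liveness condition. The explicit lexicographic potential $\Phi=(N,D)$ is a clean way to package what the paper argues only informally.

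There is, however, a real gap in your treatment of stale views. You assert that ``any robot $r_j$ still in state \textbf{Move} from an earlier view $V_j$ satisfies $V_j\subseteq{\cal G}_{{\cal R}_{l'}}$ for some $l'\ge l$''. This is false: $r_j$ may have taken its snapshot \emph{before} ${\cal R}_l$, at a time when not all robots were on the convex hull. In that case $r_j$ is not running \texttt{NotConnected} at all; it might be executing \texttt{SeeTwoRobot} or \texttt{NoSpaceForMore} with an outward-pointing target, and your claim that ``the entire remainder stays in the regime handled by Procedure \texttt{NotConnected}'' fails. Lemma~\ref{FullyVisibleLemma} saves the \emph{safety} side of this (in a good execution those cases are shown to be impossible or harmless), but it does not give you the monotone decrease of $\Phi$ that your progress argument needs.

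The paper handles exactly this point differently, via a mechanism you omit: the first three clauses of Procedure \texttt{NotConnected} force every robot to return $c_i$ (i.e.\ not move) until, for every triple of adjacent hull robots $r_l,r_m,r_r$, the perpendicular distance from $c_m$ to the line $c_lc_r$ is at least $\tfrac{1}{n}$. This acts as a barrier synchronisation: no convergence motion begins until every robot has cycled back through \textbf{Look} and sees a view that is itself fully visible with $|\CH(V_i)|=n$. Only after that point are all robots guaranteed to be running \texttt{NotConnected} with consistent component structures, which is what makes the three-case progress argument sound. You should insert this synchronisation step before launching the potential-function argument; without it, the claim that $\Phi$ is non-increasing along the whole fragment is unjustified.
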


\begin{proof}
Based on Lemma ~\ref{FullyVisibleLemma}, if a configuration ${\cal R}_m$ is such that $|\CH({\cal G}_{{\cal R}_m})|=n$ and  ${\cal G}_{{\cal R}_m}$ is a $fully$ $visible$ configuration, then $|\CH({\cal G}_{{\cal R}_{m+1}})|=n$,  ${\cal G}_{{\cal R}_{m+1}}$ is a $fully$ $visible$ configuration and $\con({\cal G}_{{\cal R}_m}) \subseteq \con({\cal G}_{{\cal R}_{m+1}})$. 

Based on Procedure {\tt NotConnected} (see first three cases of procedure), no robot will start moving unless:
Between any three adjacent robots on the convex hull, say $r_l,r_m$ and $r_r$ left robot, middle robot and right robot respectively, the distance between line segment $\overline{r_lr_r}$ and $r_m$ must be equal or more than $\frac{1}{n}$.
This, along with Lemma ~\ref{ReachFullyVisibleLemma}  guarantee that no robot will move unless the distance of $\frac{1}{n}$ at least exists and that eventually all robots will be on the convex hull and have full visibility. Because no robot moves unless the distance of $\frac{1}{n}$ at least exists, all robots will eventually move to the {\bf Look} state and see that the configuration they see is $fully$ $visible$ and $|\CH(V_i)|=n$.
We get the three following cases:
\begin{itemize}
\item[A] There exists at least one component (as it was defined in Function ~\ref{subsec:RC}) that is smaller than at least one other component, with respect to the number of the robots that consist each component

Function {\tt NotConnected} results all robots of the smallest component(s) to join one component that is larger than it. Given the liveness condition that whenever a robot decides to move, it moves at least a distance of $\delta$, eventually the number of the components become smaller and eventually the convex hull shrinks. Also the robots, of the components that are not the smallest, do not move.

\item[B] All components are of the same size, with respect to the number of the robots that consist each component. The distance between two neighboring components is not the same for all the neighboring components.

Function {\tt NotConnected} results that all robots of the component that has the smallest distance to its neighbor component on the right to join the component on its right. Given the liveness condition that whenever a robot decides to move, it moves at least a distance of $\delta$, eventually the number of the components become smaller and eventually the convex hull shrinks. The robots of the other components do not move.

\item[C] All components are of the same size, and the distance between any two neighboring components is the same.

Function {\tt NotConnected} results that all the components start moving with direction to the inside of the convex hull. Given the liveness condition that whenever a robot decides to move, it moves at least a distance of $\delta$, it follows that eventually all the components will touch, because the convex hull shrinks. 

\end{itemize}

From the cases above, it follows that either all the robots of any component that has the smallest number of robots (first case) or of any component that has the smallest distance (second case) to its right neighbor will move to its right neighbor until the number of components become one, or the components will move to the inside of the convex hull until all the components touch (third case).

In every case, robot $r_i$ runs the Procedure {\tt NotConnected}. Hence, per 
Lemma~\ref{NotConnectedLemma}, robot $r_i$ moves in such a way that it does not cause $|\CH({\cal G}_{{\cal R}_{m+1}})|<n$ or  ${\cal G}_{{\cal R}_{m+1}}$ not to be a $fully$ $visible$ configuration. This completes the proof.\hfill$\blacksquare$\vspace{1em}
\end{proof}

From Lemmas~\ref{ReachFullyVisibleLemma} and \ref{ReachConnectedLemma} we get the following.

\begin{corollary}
\label{ReachConnectedFullCorollary}
Given any good execution of the algorithm, there exists ${\cal R}_m$ so that ${\cal G}_{{\cal R}_m}$ is a connected and fully visible configuration.
\end{corollary}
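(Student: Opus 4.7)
The plan is to chain the two lemmas sequentially, using Lemma~\ref{FullyVisibleLemma} as the connecting invariant. First I would apply Lemma~\ref{ReachFullyVisibleLemma} to the given good execution to obtain a configuration ${\cal R}_l$ with $|\CH({\cal G}_{{\cal R}_l})|=n$ and ${\cal G}_{{\cal R}_l}$ fully visible. If ${\cal R}_l$ happens to already be a connected configuration, the corollary holds with ${\cal R}_m = {\cal R}_l$, so the remaining case is when ${\cal R}_l$ is fully visible on the convex hull but not connected.

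In this remaining case, I would invoke Lemma~\ref{ReachConnectedLemma} on the suffix of the execution starting at ${\cal R}_l$ to produce a configuration ${\cal R}_k$, with $l\le k$, that is connected. The step that requires care is ensuring that ${\cal R}_k$ is simultaneously fully visible and has $|\CH({\cal G}_{{\cal R}_k})|=n$, so that the ``connected and fully visible'' conjunction in the corollary actually holds. For this I would apply Lemma~\ref{FullyVisibleLemma} inductively along the fragment ${\cal R}_l,{\cal R}_{l+1},\ldots,{\cal R}_k$: at each step where the current configuration is fully visible on the convex hull but not yet connected, the lemma guarantees that the next configuration remains fully visible with $|\CH|=n$ (and that the convex hull does not expand). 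A straightforward induction on the index then shows that all intermediate configurations, and in particular ${\cal R}_k$, retain full visibility and $|\CH|=n$.

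Setting ${\cal R}_m := {\cal R}_k$ yields a configuration that is both connected (by Lemma~\ref{ReachConnectedLemma}) and fully visible (by the inductive use of Lemma~\ref{FullyVisibleLemma}), which is exactly the statement of the corollary. The main obstacle I anticipate is the bookkeeping for the inductive preservation of full visibility: since Lemma~\ref{ReachConnectedLemma} is only stated as an eventual reachability result, one must explicitly argue that its hypothesis (full visibility with all robots on the convex hull) continues to hold all the way up to ${\cal R}_k$, rather than only at the endpoints. This is precisely what Lemma~\ref{FullyVisibleLemma} provides, so the argument reduces to noting that its hypotheses c1 and c2 are preserved by every single step as long as the configuration has not yet become connected, closing the chain.
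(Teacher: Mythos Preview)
Your proposal is correct and follows essentially the same approach as the paper, which simply states the corollary as an immediate consequence of Lemmas~\ref{ReachFullyVisibleLemma} and~\ref{ReachConnectedLemma} without further argument. Your explicit use of Lemma~\ref{FullyVisibleLemma} to propagate full visibility up to ${\cal R}_k$ is a welcome bit of extra care: the statement of Lemma~\ref{ReachConnectedLemma} only asserts that ${\cal R}_k$ is connected, and the paper relies implicitly on the fact that its \emph{proof} already invokes Lemma~\ref{FullyVisibleLemma} to maintain $|\CH|=n$ and full visibility throughout, so you are making explicit what the paper leaves buried inside that proof.
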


%\begin{lemma}
%\label{StayLemma}
%If robot $r_i$ reach state $\bot$, the state of robot $r_i$ will be $\bot$ in every future execution and all robots will reach state $\bot$ in their next cycle.
%\end{lemma}

\subsubsection{Any Execution}

%%%CG if it is needed
%We define as {\em dangerous area} any area on $\CH$ between any two neighboring robots that $\in \CH$, where the adversary, 
%with its strategy can cause a bad configuration.
We now consider any executions, including bad ones. 

\begin{lemma}
\label{ReachGoodConfigurationLemma}
Given any execution of the algorithm, if there is a bad execution fragment $\alpha_{bad}$, then eventually a $safe$ configuration ${\cal R}_{safe}$ is reached, and after  a $safe$ configuration there are no longer any bad configurations in the execution until termination. 
\end{lemma}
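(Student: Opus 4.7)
The plan is to split the lemma into two parts: (a) from any bad execution fragment $\alpha_{bad}$, a safe configuration ${\cal R}_{safe}$ is eventually reached; and (b) from ${\cal R}_{safe}$ onward, no configuration is bad.

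For part (a), the source of every bad configuration is a bounded set of robots that are in state \textbf{Move} with a trajectory computed from a stale view $V_i$ taken in an earlier configuration violating either $|\CH|=n$ or full visibility. I would argue as follows. By liveness Condition~1, each such stale robot eventually completes its current Move (via Arrive, Stop, or Collide) and re-enters a full Look--Compute cycle; any computation thereafter is based on a fresh view of the current configuration. Meanwhile, Lemma~\ref{IncreaseLemma} guarantees that as long as the global configuration still violates the ``all on the convex hull and fully visible'' property, the convex hull is non-decreasing, and (by the same case analysis that drives Lemma~\ref{ReachFullyVisibleLemma}) the hull must eventually reach a shape with $|\CH|=n$ and full visibility---the safe-distance argument of Lemma~\ref{CHlengthLemma} together with the $\delta$-progress of liveness Condition~2 bounds the total amount of expansion. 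Once the geometric configuration has this good property and every stale Move has resolved, the next Look performed by each robot yields a view satisfying $|\CH(V_i)|=n$ and full visibility; invoking liveness Condition~1 for each robot in turn, the first configuration in which all $n$ such views are simultaneously in place is, by definition, ${\cal R}_{safe}$.

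For part (b), let ${\cal R}_{safe}$ be as above. Every robot $r_i$'s view in ${\cal R}_{safe}$ satisfies $|\CH(V_i)|=n$ and is fully visible, so by Lemmas~\ref{StartLemma}, \ref{OnConvexHullLemma}, and \ref{AllOnConvexHullLemma} each robot that enters \textbf{Compute} proceeds directly to \textbf{Compute.Connected} (terminating) or \textbf{Compute.NotConnected}. In particular, no robot ever enters \textbf{Compute.NoSpaceForMore} or \textbf{Compute.SeeTwoRobot} from such a view, so the outward motions that could create a Type~1 or Type~2 bad configuration are never triggered after ${\cal R}_{safe}$. By Lemma~\ref{NotConnectedLemma} any Move chosen from a safe view preserves the ``all on CH and fully visible'' property, and then Lemma~\ref{FullyVisibleLemma} applied inductively shows that every subsequent geometric configuration retains this property until termination. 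Consequently any fresh Look taken after ${\cal R}_{safe}$ again yields a view matching the safe predicate, and neither defining condition of a bad configuration can ever be satisfied again.

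The main obstacle is the argument in part (a): one must rule out an infinite regress in which fresh Looks on non-safe configurations keep generating new stale Moves. The key observation is that a non-safe Look can only trigger an outward (hull-expanding) motion or a no-op, never an inward motion; combined with Lemma~\ref{CHlengthLemma}'s safe-distance bound on how wide neighboring hull points can usefully spread and liveness Condition~2's $\delta$-progress guarantee, only finitely many outward rounds can precede the safe configuration. A secondary subtlety is the simultaneity built into the definition of ${\cal R}_{safe}$---all $n$ robots must carry safe views at the same instant---which is handled by waiting, using liveness Condition~1, until each robot has completed at least one Look in a configuration already satisfying the geometric half of the safe predicate.
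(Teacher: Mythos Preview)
Your proposal follows essentially the same two-part skeleton as the paper---bound the outward expansion so that a safe configuration is eventually reached, then argue that from a safe view no robot can ever enter the states that produce bad configurations---and it invokes the same core ingredients (Lemma~\ref{IncreaseLemma}, Lemma~\ref{CHlengthLemma}, the $\delta$-progress liveness condition, Lemma~\ref{FullyVisibleLemma}, and Lemma~\ref{NotConnectedLemma}).

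The one place where your argument diverges from the paper and becomes imprecise is the termination of part~(a). The paper treats the two types of bad configuration \emph{separately}: for Type~1 it uses the safe-distance expansion bound of Lemma~\ref{CHlengthLemma} to conclude that eventually no robot can enter \textbf{Compute.NoSpaceForMore}; for Type~2 it gives an independent counting argument---if $x$ robots lie on a common line then after one round of middle-robot moves only $x-2$ remain collinear, so the collinearity resolves in finitely many rounds regardless of hull width. You instead fold both types into the single claim that ``Lemma~\ref{CHlengthLemma}'s safe-distance bound\ldots{} [means] only finitely many outward rounds can precede the safe configuration.'' But Lemma~\ref{CHlengthLemma} says nothing about collinear robots; it bounds when \texttt{Find-Points} succeeds, which is the Type~1 mechanism only. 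Your earlier appeal to ``the case analysis that drives Lemma~\ref{ReachFullyVisibleLemma}'' would cover Type~2 if made explicit, but as written the obstacle paragraph asserts a bound that Lemma~\ref{CHlengthLemma} does not supply. To close this, either invoke the $x\to x-2$ counting for Type~2 directly (as the paper does), or make clear that you are importing the full case split from Lemma~\ref{ReachFullyVisibleLemma} rather than just Lemma~\ref{CHlengthLemma}.
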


\begin{proof}
There are 2 possible cases:\\
(a) The adversary deploys a strategy that aims in causing bad configurations as long as it can
(i.e., indefinitely if possible).\\
(b) The adversary, at some point of the execution, stops causing bad configurations. 

We focus on the first case and we show that any execution under this adversarial strategy will eventually
reach a configuration in which the adversary will no longer be able to cause bad configurations. It is easy to
see that this case covers also the second case.

Recall that both types of bad configurations involve configurations in which the robots
are momentarily in a configuration in which all robots are on the convex hull and it is
fully visible, but the adversary manages to break this property. The adversary, as explained,
exploits the fact that some robots, due to asynchrony, are not aware that such a configuration
has been reached. We now consider the two types of bad configurations.

\noindent {\bf (i) Bad configuration of type 1.}
Consider the case in which the first bad configuration, call it ${\cal R}_x$, that appears in the bad execution fragment $\alpha_{bad}$ 
is of type 1 (the other type is considered later). As explained, the adversary may deploy a strategy which can result into a configuration
${\cal R}_z,~z > x$, so that ${\cal G_R}_z$ is no longer fully visible or/and not all robots are on the convex hull. The adversary
can do so, if there is at least one robot that according to its local view in configuration ${\cal R}_x$, not all robot are on the 
convex hull and there is no more space for an ``internal'' robot to get on the convex hull (per Function {\tt NoSpaceForMore} 
this robot will move to a direction outside of the convex hull).  It follows that $\con({\cal G_R}_z) \supseteq \con({\cal G_R}_x)$. Furthermore, from Lemma~\ref{IncreaseLemma} we get that
for all successive configurations of ${\cal R}_z$ in which not all robots are on the convex hull or are fully visible, the convex hull can 
only expand (until a configuration in which these properties hold is reached). The adversary may repeat this strategy (e.g., involving other robots
on the convex hull), every time causing the convex 
hull to expand. However, per Lemma~\ref{CHlengthLemma}, this cannot be repeated indefinitely, as the convex hull will expand that much, that the {\em safe} distance will be reached for all pairs of adjacent robots on the convex hull. From this and the liveness condition (the adversary must allow a robot to move by at least $\delta$ distance) it follows that a configuration is eventually reached after which no bad configuration of type 1 can exist (no robot will get into state {\bf Compute.NoSpaceForMore}). Observe that when such a configuration is reached, it is still possible for 
a bad configuration of type 2 to be reached. This is covered by the next case we consider (with the difference that this bad configuration is not the first appearing in $\alpha_{bad}$). 
 
\noindent {\bf (ii) Bad configuration of type 2.} Consider the case in which the first bad configuration, call it ${R}_x$, that appears in the bad execution fragment $\alpha_{bad}$ is of type 2. This is the situation where in a preceding configuration there are at least four robots on a
straight line on the convex hull. As explained in Section~\ref{baddefs}, the adversary can yield a configuration in which not all robots
are any longer on the convex hull, or there is no full visibility. However, per Function {\tt SeeOneRobot} and Lemma~\ref{SeeOneRobotLemma} 
the robots on the straight line that are not in the middle (i.e., they see only one robot) do not move. In contrast, according to Function {\tt SeeTwoRobot} and Lemma~\ref{SeeTwoRobotLemma}, each robot in the middle of the straight line moves in a direction outside of the convex hull, 
in such a way that it will no longer be in a straight line with its two adjacent robots (on the convex hull). It follows that if every time the adversary repeats the same strategy, and say initially there are $x$ robots on straight line, then in every iteration the number of robots that 
are on the same line is $x-2$. This may continue only until x is less than 3, hence it eventually stops. Observe that during these iterations, 
since robots in the middle move towards a direction outside of the convex hull and per Lemma~\ref{IncreaseLemma}, the convex hull can only
expand. Hence a bad configuration of type 2 can no longer exist. Furthermore, note that if during this expansion, the robots involved 
have also reached the safe distance (per Lemma~\ref{CHlengthLemma}'s definition), then as explained above, a bad configuration of type 1
also cannot exist. Otherwise, we are back in case (i) as discussed above. Note however that once robots reach the safe distance, and a bad configuration of type 2 is reached, a configuration of type 1 can no longer exist again: when a robot has already safe distance between its adjacent robots on the convex hull, then the middle robots by moving towards outside the convex hull can only increase the safe distance (and hence
it will not be possible for a robot to get into state {\bf Compute.NoSpaceForMore}). 

From cases (i) and (ii) and Lemma~\ref{ReachFullyVisibleLemma} it follows that a fully visible configuration in which $|\CH|=n$ is reached.
By a similar argument as in the proof of Lemma~\ref{ReachConnectedLemma} we get that eventually a safe configuration is reached (all robots are on the convex and they are aware that the configuration is fully visible). From Function {\tt NotConnected} and Lemma~\ref{NotConnectedLemma} it follows that any succeeding configuration maintains the property that all robots can see each other and that are on the convex hull. Hence, the algorithm is such that once a safe configuration is reached, it is no longer possible for a bad configuration to exist. This completes the proof.\hfill$\blacksquare$\vspace{1em}
\end{proof}

We are now ready to prove that our algorithm solves the gathering problem.

\begin {theorem}[Gathering]
In any execution of algorithm, there exists a configuration ${\cal R}_m$, so that ${\cal G}_{{\cal R}_m}$ is a connected, fully visible configuration and $\forall s_i \in {\cal S}_{{\cal R}_m}$, $s_i=\mathbf{Terminate}$.
\end{theorem}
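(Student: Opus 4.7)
The plan is to combine the safety and liveness results already established, handling the case of bad executions by reducing them to good ones. First I would split on whether the execution is good or bad. If the execution is good, I would invoke Corollary~\ref{ReachConnectedFullCorollary} directly to obtain some configuration ${\cal R}_k$ such that ${\cal G}_{{\cal R}_k}$ is connected and fully visible with $|\CH({\cal G}_{{\cal R}_k})|=n$. If the execution contains a bad execution fragment, I would apply Lemma~\ref{ReachGoodConfigurationLemma} to get a safe configuration ${\cal R}_{safe}$, after which no bad configurations occur. From ${\cal R}_{safe}$ onwards the suffix of the execution is good, so Lemma~\ref{ReachConnectedLemma} (the hypothesis of which is met by the definition of a safe configuration) produces the desired connected, fully visible configuration ${\cal R}_k$.

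Next I would argue that at such a ${\cal R}_k$ (or shortly thereafter) every robot terminates. By the liveness condition each robot takes infinitely many steps, so starting from ${\cal R}_k$ every robot $r_i$ eventually enters state \textbf{Wait}, then \textbf{Look}, then \textbf{Compute}. By Lemma~\ref{FullyVisibleLemma} and the way good execution fragments preserve $|\CH|=n$ and full visibility, when $r_i$ takes its snapshot, its local view $V_i$ still satisfies $|\CH(V_i)|=n$ and $V_i$ is fully visible; by Procedure \texttt{AllOnConvexHull} together with the fact that the geometric configuration is connected, $r_i$ then computes that $V_i$ is connected as well. Tracing through Procedures \texttt{Start}, \texttt{OnConvexHull}, \texttt{AllOnConvexHull} via Lemmas~\ref{StartLemma}, \ref{OnConvexHullLemma} and~\ref{AllOnConvexHullLemma}, the robot transitions into state \textbf{Compute.Connected}, where Procedure \texttt{Connected} returns $\bot$, causing an event \emph{Done}$(r_i)$ and a transition to \textbf{Terminate}.

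The main obstacle I expect is the subtle ordering issue: one must check that between ${\cal R}_k$ and the moment each robot finally executes its terminating \textbf{Compute} step, no in-flight robot (still in \textbf{Move} based on an obsolete view) can destroy the connected/fully-visible property and thereby mislead a later snapshot. This is handled by observing that once a safe configuration is reached, Lemma~\ref{ReachGoodConfigurationLemma} guarantees no further bad configurations arise, so the suffix is good; then Lemma~\ref{FullyVisibleLemma} and the analysis of Procedure \texttt{NotConnected} in Lemma~\ref{NotConnectedLemma} (any in-flight motion preserves $|\CH|=n$ and full visibility) ensure that every robot snapshotting after ${\cal R}_k$ will see a configuration with $|\CH(V_i)|=n$, full visibility, and connectedness, and hence terminate. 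Since there are only $n$ robots and each will eventually complete its Look--Compute cycle, after finitely many additional events all robots are in \textbf{Terminate}; pick ${\cal R}_m$ to be the first such configuration, which is connected and fully visible by construction.
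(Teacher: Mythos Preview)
Your proposal is correct and follows essentially the same approach as the paper: split into good versus bad executions, invoke Corollary~\ref{ReachConnectedFullCorollary} (or, after Lemma~\ref{ReachGoodConfigurationLemma} produces a safe configuration, the equivalent Lemma~\ref{ReachConnectedLemma}) to reach a connected fully visible configuration, and then argue that every robot terminates. The only difference is that the paper's termination step is a one-line ``it is easy to see that robots no longer move and eventually all robots get into state {\bf Compute.Connected},'' whereas you spell out the Look--Compute trace through Lemmas~\ref{StartLemma}--\ref{AllOnConvexHullLemma} and explicitly worry about in-flight robots via Lemma~\ref{NotConnectedLemma}; this extra care is welcome but does not constitute a different route.
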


\begin{proof}
Consider the following two cases.
\begin{itemize}
\item If no bad configurations exist, based on Corollary ~\ref{ReachConnectedFullCorollary}, given any good execution of the algorithm, there exists ${\cal R}_m$ so that ${\cal G}_{{\cal R}_m}$ is a connected and fully visible configuration. 
\item If bad configurations exist, based on Lemma ~\ref{ReachGoodConfigurationLemma}, given any execution of the algorithm, if there is a bad execution fragment $\alpha_{bad}$, then eventually a $safe$ configuration ${\cal R}_{safe}$ is reached, and after  a $safe$ configuration there are no longer any bad configurations in the execution until termination. Therefore, from this point onward, we get from Corollary ~\ref{ReachConnectedFullCorollary} that there exists ${\cal R}_m$ so that ${\cal G}_{{\cal R}_m}$ is a connected and fully visible configuration.

\end{itemize}

When a $connected$ and $fully$ $visible$ configuration is reached, it is easy to see that robots no longer move and eventually all robots get into state {\bf Compute.Connected} and hence into state {\bf Terminate}.\hfill$\blacksquare$ 
\end{proof}

\section{Conclusions} In this paper we have considered the problem of gathering non-transparent, fat robots
in an asynchronous setting. We have formulated the problem and the model with a state-machine representation
and developed a Distributed Algorithm that solves the problem for any number of robots. The correctness of our
algorithm relies on the assumption of chilarity~\cite{robotsbook} (robots agree on the orientation of the axes of
their local coordination system).
This is the only assumption we needed to add to the model considered in~\cite{Pelc09}. We believe this
is a very small price to pay in order to solve the gathering problem for any number of fat robots. Nevertheless, it
would be very interesting to investigate whether one can remove this assumption and still be able to solve the gathering problem 
for any number of fat robots. Certainly one will need to take a different
approach than the one we use in this paper, as our approach depends greatly on this assumption. 
%One of our modeling assumptions is that robots have unlimited range of visibility (as long as their view is
%not blocked by another robot). An interesting future direction would be to investigate whether it is possible 
%to solve the problem under the considered model, but assuming that the robots' range of visibility is limited. 

\end{document}